\let\counterwithin\relax
\let\oldFootnote\footnote
\newcommand{\nextToken}{\relax}
\renewcommand{\footnote}[1]{\oldFootnote{#1}\futurelet\nextToken\isFootnote}
\newcommand{\isFootnote}{\ifx\footnote\nextToken\textsuperscript{,}\fi}
\newcommand\citeposs[1]{\citeauthor{#1}'s (\citeyear{#1})}
\newtheorem{theorem}{Theorem}
\newtheorem*{theorem*}{Theorem}
\newtheorem{claim}{Claim}
\newtheorem{corollary}{Corollary}
\newtheorem{lemma}{Lemma}
\newtheorem{proposition}{Proposition}
\newtheorem{fact}{Fact}
\providecommand{\customgenericname}{}
\newcommand{\newcustomtheorem}[2]{%
  \newenvironment{#1}[1]
  {\renewcommand\customgenericname{#2}%
   \renewcommand\theinnercustomgeneric{\ref*{##1}$'$}%
   \innercustomgeneric
  }
  {\endinnercustomgeneric}
}
\theoremstyle{definition}
\newtheorem{definition}{Definition}
\providecommand{\customgenericname}{}
\newcommand{\newcustomdef}[2]{%
  \newenvironment{#1}[1]
  {\renewcommand\customgenericname{#2}%
   \renewcommand\theinnercustomgenericdef{\ref*{##1}$'$}%
   \innercustomgenericdef
  }
  {\endinnercustomgenericdef}
}
\newtheorem{remark}{Remark}
\newtheorem{example}{Example}
\newcommand\coloneq{=}
\newlength{\tempparskip}
\newlength{\halfparskip}
\newenvironment{casework}{
    \begin{enumerate}[labelindent=15pt,labelwidth=\widthof{Case 9: },leftmargin=\tempparskip, itemindent=\widthof{Case 9: },listparindent=\tempparskip]
    \addtolength{\itemindent}{\halfparskip}
    }
{\end{enumerate}}
\DeclareFontFamily{OT1}{pzc}{}
\DeclareFontShape{OT1}{pzc}{m}{it}{<-> s * [1.10] pzcmi7t}{}
\DeclareMathAlphabet{\mathpzc}{OT1}{pzc}{m}{it}
\renewcommand\emptyset\varnothing
\DeclareMathOperator*\argmax{arg\,max}
\DeclareMathOperator*\argmin{arg\,min}
\newcommand\ssm\smallsetminus
\newcommand\toto{\rightrightarrows}
\newcommand{\zerodel}{.\kern-\nulldelimiterspace}
\newcommand\lgiv{\,\left|\,}
\newcommand\lgivend{\right\zerodel}
\newcommand\rgiv{\,\right|\,}
\newcommand\rgivend{\left\zerodel}
\newcommand\linset{L}
\DeclareMathOperator\conv{Conv}
\newcommand\e[1]{{\bf e}^{#1}}
\newcommand{\Z}{\mathbb{Z}}
\newcommand{\R}{\mathbb{R}}
\newcommand{\zero}{\mathbf{0}}
\renewcommand\v{{\bf v}_I}
\newcommand\dvec{\mathbf{d}}
\newcommand\norm{\mathbf{g}}
\newcommand\normComp[1]{g_{#1}}
\newcommand\numersub{0}
\newcommand\p{\mathbf{p}_{I}}
\newcommand\pall{\mathbf{p}}
\newcommand\pnumer{p_\numersub}
\newcommand\ppr{\p'}
\newcommand\pprall{\pall'}
\newcommand\pComp[1]{p_{#1}}
\newcommand\pprComp[1]{\pComp{#1}'}
\newcommand\hp{\hat{\mathbf{p}}_{I}}
\newcommand\hpComp[1]{\hat{p}_{#1}}
\newcommand\bunj{\bunag{j}}
\newcommand\bunag[1]{\bun^{#1}}
\newcommand\bun{\mathbf{x}_{I}}
\newcommand\bunComp[1]{x_{#1}}
\newcommand\hbun{\hat{\mathbf{x}}_I}
\newcommand\hbunj{\hbunag{j}}
\newcommand\hbunag[1]{\hbun^{#1}}
\newcommand\bunpr{\bun'}
\newcommand\bunprComp[1]{\bunComp{#1}'}
\newcommand\bundpr{\bun''}
\newcommand\tot{\mathbf{y}_{I}}
\newcommand\totComp[1]{y_{#1}}
\newcommand\bundowj{\bundowag{j}}
\newcommand\bundowag[1]{\bundow^{#1}}
\newcommand\bundow{\mathbf{w}_{I}}
\newcommand\bundowComp[1]{w_{#1}}
\newcommand\bunnj{\bunnag{j}}
\newcommand\bunnag[1]{\bunn^{#1}}
\newcommand\numerj{\numerag{j}}
\newcommand\numerag[1]{\numer^{#1}}
\newcommand\bunn{\mathbf{x}}
\newcommand\bunnpr{\bunn'}
\newcommand\bunndpr{\bunn''}
\newcommand\numer{x_{\numersub}}
\newcommand\numerpr{\numer'}
\newcommand\numerdpr{\numer''}
\newcommand\numertpr{\numer'''}
\newcommand\defbunn{\numer,\bun}
\newcommand\hbunnj{\hbunnag{j}}
\newcommand\hbunnag[1]{\hbunn^{#1}}
\newcommand\hnumerj{\hnumerag{j}}
\newcommand\hnumerag[1]{\hnumer^{#1}}
\newcommand\hbunn{\hat{\mathbf{x}}}
\newcommand\hnumer{\hat{x}_{\numersub}}
\newcommand\bunndowj{\bunndowag{j}}
\newcommand\bunndowag[1]{\bunndow^{#1}}
\newcommand\numerdowj{\numerdowag{j}}
\newcommand\numerdowag[1]{w^{#1}_{\numersub}}
\newcommand\bunndow{\mathbf{w}}
\newcommand\numerdow{w_{\numersub}}
\newcommand\defbunndow{\numerdow,\bundow}
\newcommand\cfFn[1]{S^{#1}}
\newcommand\cf[3]{S^{#1}\left(#2;#3\right)}
\newcommand\dHFn[1]{D^{#1}_{\mathrm{H}}}
\newcommand\dH[3]{\dHFn{#1}\left(#2;#3\right)}
\newcommand\dMFn[1]{D_{\mathrm{M}}^{#1}}
\newcommand\dM[3]{D_{\mathrm{M}}^{#1}\left(#2,#3\right)}
\newcommand\dQLFn[1]{D^{#1}}
\newcommand\dQL[2]{\dQLFn{#1}\left(#2\right)}
\newcommand\Feas[1]{X^{#1}_I}
\newcommand\feas[1]{\underline{x}^{#1}_0}
\newcommand\Feans[1]{X^{#1}}
\newcommand\feasUtil[1]{(\minu{#1},\maxu{#1})}
\newcommand\ub{u}
\newcommand\ubj{\ubag{j}}
\newcommand\ubag[1]{u^{#1}}
\newcommand\hubj{\hubag{j}}
\newcommand\hubag[1]{\hat{u}^{#1}}
\newcommand\ubvec{{\bf \ub}}
\newcommand\hubvec{{\bf \hubag{}}}
\newcommand\util[2]{\utilFn{#1}\left(#2\right)}
\newcommand\utilFn[1]{U^{#1}}
\newcommand\val[2]{\valFn{#1}\left(#2\right)}
\newcommand\valFn[1]{V^{#1}}
\newcommand\quasival[2]{\quasivalFn{#1}\left(#2\right)}
\newcommand\quasivalFn[1]{V_{\mathrm{Q}}^{#1}}
\newcommand\valH[3]{\valHFn{#1}\left(#2; #3\right)}
\newcommand\valHFn[1]{V^{#1}_{\mathrm{H}}}
\newcommand\valHDef[1]{\valH{#1}{\cdot}{\ub}}
\newcommand\maxu[1]{\overline{u}^{#1}}
\newcommand\minu[1]{\underline{u}^{#1}}
\newcommand\umax[1]{u_{\mathrm{max}}^{#1}}
\newcommand\umin[1]{u_{\mathrm{min}}^{#1}}
\newcommand\payoff[1]{t^{#1}}
\newcommand\payoffvec{{\bf \payoff{}}}
\newcommand\payoffs[1]{\payoffFn(#1)}
\newcommand\payoffFn{T}
\newcommand\ce{competitive equilibrium}
\newcommand\CE{Competitive Equilibrium}
\newcommand\ces{competitive equilibria}
\newcommand\Ces{Competitive equilibria}
\newcommand\dtvs{demand type vector set}
\newcommand\dualecon{Hicksian economy}
\newcommand\dualecons{Hicksian economies}
\newcommand\DualEcons{Hicksian Economies}
\newcommand\dualval{Hicksian valuation}
\newcommand\dualvals{Hicksian valuations}
\newcommand\DualVal{Hicksian Valuation}
\newcommand\gsubst{gross \subst}
\newcommand\gsubstity{gross \substity}
\newcommand\Gsubstity{Gross \substity}
\newcommand\GSubstity{Gross \Substity}
\newcommand\gcompity{gross \compity}
\newcommand\nsubst{net \subst}
\newcommand\nsubstity{net \substity}
\newcommand\Nsubstity{Net \substity}
\newcommand\NSubstity{Net \Substity}
\newcommand\osubstity{ordinary \substity}
\newcommand\Osubstity{Ordinary \substity}
\newcommand\ssubst{strong \subst}
\newcommand\ssubstity{strong \substity}
\newcommand\SSubstity{Strong \Substity}
\newcommand\snsubst{strong \nsubst}
\newcommand\snsubstity{strong \nsubstity}
\newcommand\subst{substitutes}
\newcommand\substity{substitutability}
\newcommand\Substity{Substitutability}
\newcommand\compity{complementarity}
\newcommand\totpriceeffs{price effects}
\newcommand\intvec{integer vector}
\newcommand\intvecs{integer vectors}
\newcommand\dowalloc{endowment allocation}
\newcommand\andowalloc{an \dowalloc}
\newcommand\Andowallocemph{An \emph{\dowalloc}}
\newcommand\dowallocs{\dowalloc s}
\newcommand\myspacing\onehalfspacing
\title[The Equilibrium Existence Duality]{The Equilibrium Existence Duality:\\
Equilibrium with Indivisibilities \& Income Effects}\thanks{An abstract of this paper will appear in the \emph{Proceedings of the 21st ACM Conference on Economics and Computation (EC'20)}.  We thank Federico Echenique, Scott Kominers, Michihiro Kandori, Th\`anh Nguyen,  	
Wolfgang Pesendorfer, Rakesh Vohra, and many seminar participants 
for their valuable comments on this paper.
}
\author[Baldwin, Edhan, Jagadeesan, Klemperer, and Teytelboym]{Elizabeth Baldwin \and Omer Edhan \and Ravi Jagadeesan \and Paul Klemperer \and Alexander Teytelboym}
\date{17th June 2020}
\thanks{Baldwin: Department of Economics and Hertford College, University of Oxford; {\tt elizabeth.baldwin@economics.ox.ac.uk}.
Edhan: Department of Economics, University of Manchester; {\tt omer.edhan@gmail.com}.
Jagadeesan: Harvard Business School; and Department of Economics, Harvard University; {\tt ravi.jagadeesan@gmail.com}.
Klemperer: Department of Economics and Nuffield College, University of Oxford; {\tt paul.klemperer@nuffield.ox.ac.uk}.
Teytelboym: Department of Economics, Institute for New Economic Thinking, and St.~Catherine's College, University of Oxford; {\tt alexander.teytelboym@economics.ox.ac.uk}.
Jagadeesan was supported by a National Science Foundation Graduate Research Fellowship under grant number DGE-1745303, and by the Washington Center for Equitable Growth.
Teytelboym was supported by the Economic and Social Research Council grant number ES/R007470/1.
Parts of this work were done while Jagadeesan and Teytelboym were visiting the Simons Institute for the Theory of Computing, and while Jagadeesan was visiting Nuffield College, Oxford.}
\begin{document}

\begin{abstract}
We show that, with indivisible goods, the existence of competitive equilibrium fundamentally depends on agents' substitution effects, not their income effects. Our Equilibrium Existence Duality allows us to transport results on the existence of competitive equilibrium from settings with transferable utility to settings with income effects. One consequence is that net substitutability---which is a strictly weaker condition than gross substitutability---is sufficient for the existence of competitive equilibrium.  We also extend the ``demand types'' classification of valuations to settings with income effects and give necessary and sufficient conditions for a pattern of substitution effects to guarantee the existence of competitive equilibrium.

\vspace{12pt}

JEL Codes: C62, D11, D44
\end{abstract}
\maketitle

\clearpage
\myspacing

\section{Introduction}

This paper shows that, when goods are indivisible and there are income effects, the existence of competitive equilibrium fundamentally depends on agents' substitution effects---i.e., the effects of compensated price changes on agents' demands.  We provide general existence results that do not depend on income effects.

In contrast to the case of divisible goods, competitive equilibrium does not generally exist in settings with indivisible goods \citep{henry1970indivisibilites}.
Moreover, most previous results about when equilibrium does exist with indivisible goods assume that utility is transferable---ruling out income effects but allowing tractable characterizations of (Pareto-)efficient allocations and aggregate demand that can be exploited to analyze competitive equilibrium.\footnote{For example, methods based on integer programming (see, e.g., \cite{koopmans1957assignment}, \cite{BiMa:97}, \cite{Ma:98}, \cite{CaOzPa:15}, and \cite{tran2019product}) rely on characterizations of the set of Pareto-efficient allocations as the solutions to a welfare maximization problem, while methods based on convex programming (see, e.g., \cite{Muro:2003}, \cite{ikebe2015stability}, and \cite*{CaEpVo:17}) and tropical geometry \citep{BaKl:14,BaKl:19} rely on representing aggregate demand as the demand of a representative agent.}
But understanding the role of income effects is important for economies with indivisible goods, as these goods may comprise large fractions of agents' budgets.  Furthermore, in the presence of income effects, the distribution of wealth among agents affects both Pareto efficiency and aggregate demand, making it necessary to develop new methods to analyze competitive equilibrium with indivisible goods.

The cornerstone of our analysis is an application of the relationship between Marshallian and Hicksian demand.  As in classical demand theory, Hicksian demand is defined by fixing a utility level and minimizing the expenditure of obtaining it.  We combine Hicksian demands to construct a family of ``Hicksian economies'' in which prices vary, but agents' utilities---rather than their endowments---are held constant.  Our key result, which we call the Equilibrium Existence Duality, states that competitive equilibria exist for all endowment allocations if and only if competitive equilibria exist in the Hicksian economies for all utility levels.

Preferences in each Hicksian economy reflect agents' substitution effects.  Therefore, by the Equilibrium Existence Duality, the existence of competitive equilibrium fundamentally depends on substitution effects.  Moreover, as fixing a utility level precludes income effects, agents' preferences are quasilinear in each Hicksian economy.  Hence, the Equilibrium Existence Duality allows us to transport (and so generalize) \emph{any} necessary or sufficient condition for equilibrium existence from settings with transferable utility to settings with income effects.\footnote{Outside the case of substitutes (which we describe in detail), \cite{BiMa:97} and \cite{Ma:98} gave necessary and sufficient conditions  on profiles of valuations, and \cite{CaOzPa:15} gave sufficient conditions on agents' individual valuations, for the existence of competitive equilibrium in transferable utility economies.}
In particular, our most general existence result 
gives a necessary and sufficient condition for a pattern of agents' substitution effects to guarantee the existence of competitive equilibrium in the presence of income effects.

Consider, for example, the case of substitutable goods in which each agent demands at most one unit of each good.
With transferable utility, substitutability is sufficient for the existence of competitive equilibrium \citep{KeCr:82} and defines a maximal domain for existence \citep{GuSt:99}.  With income effects, \citet{FlJaJaTe:19} showed that competitive equilibrium exists under gross substitutability.
The Equilibrium Existence Duality tells us that, with income effects, competitive equilibrium in fact exists under \emph{net} substitutability and that net substitutability defines a maximal domain for existence.
Moreover, we show that gross substitutability implies net substitutability; the reverse direction is not true in the presence of income effects.

An implication of our results is that it is unfortunate that \cite{KeCr:82}, and much of the subsequent literature, used the term ``gross substitutes'' to refer to a condition on quasilinear preferences.
Indeed, gross and net substitutability are equivalent without income effects, and our work shows that it is net substitutability, not gross substitutability, that is critical to the existence of competitive equilibrium with substitutes.\footnote{\cite{KeCr:82} were aware of the equivalence between gross and net substitutability in their setting (see their Footnote 1) but used the term ``gross substitutes'' due to an analogy of their arguments for existence with t\^{a}tonnement from general equilibrium theory.}

To appreciate the distinction between gross and net substitutability, suppose that Martine owns a house and is thinking about selling her house and buying one of two different other houses: a spartan one and a luxurious one \citep{quinzii1984core}.
If the price of her own house increases, she may wish to buy the luxurious house instead of the spartan one---exposing a gross complementarity between her existing house and the spartan one.
However, Martine regards the houses as net substitutes: the complementarity emerges entirely due an income effect.
Competitive equilibrium is therefore guaranteed to exist in economies with Martine if all other agents see the goods as net substitutes, despite the presence of gross complementarities.

Our most general equilibrium existence theorem characterizes the combinations of substitution effects that guarantee the existence of competitive equilibrium.
It is based on \citeposs{BaKl:19} classification of valuations into ``demand types.''
A demand type is defined by the set of vectors that summarize the possible ways in which demand can change in response to a small generic price change.
For example, the set of all substitutes valuations forms a demand type, as does the set of all complements valuations, etc.

Applying \citeauthor{BaKl:19}'s taxonomy to changes in Hicksian demands, we see that their definition easily extends to general utility functions, capturing agents' substitution effects.
Examples of demand types in our setting with income effects, therefore, include the set of all net substitutes preferences, the set of all net complements preferences, etc.
The Equilibrium Existence Duality then makes it straightforward that the Unimodularity Theorem\footnote{See Theorem 4.3 of \cite{BaKl:19}; an earlier version was given by \cite{DaKoMu:01}.}---which encompasses many standard results on the existence of competitive equilibrium as special cases\footnote{It generalizes the quasilinear case of \cite{KeCr:82}, and results of 
\cite{SuYa:06}, \cite{MiSt:09}, \cite{HaKoNiOsWe:11}, and \cite{Teyt:14}.}---is unaffected by income effects.
Therefore, as with the case of substitutes, conditions on complementarities and substitutabilities 
that guarantee the existence of competitive equilibrium in settings with transferable utility translate to conditions on net complementarities and substitutabilities that guarantee the existence of competitive equilibrium in settings with income effects.
In particular, there are patterns of net complementarities that are compatible with the existence of competitive equilibrium.

Our results may have significant implications for the design of auctions that seek competitive equilibrium outcomes, and in which bidders face financing constraints.
For example, they suggest that versions of the Product-Mix Auction \citep{klemperer2008new}, used by the Bank of England since the Global Financial Crisis, may work well in this context.

Several other papers have considered the existence of competitive equilibrium in the presence of indivisibilities and income effects.
\cite{quinzii1984core}, \cite{gale1984equilibrium}, and \cite{svensson1984competitive} showed the existence of competitive equilibrium in a housing market economy in which agents have unit demand and endowments.
Building on those results, \cite{kaneko1986existence}, \cite{van1997existence,van2002existence}, and \cite{yang2000equilibrium} analyzed settings with multiple goods, but restricted attention to separable preferences.
By contrast, our results---even for the case of substitutes---allow for interactions between the demand for different goods. We also clarify the role of net substitutability for the existence of competitive equilibrium.  

In a different direction, \cite{DaKoMu:01} proved a version of the sufficiency direction of the Unimodularity Theorem for settings with income effects.
\cite{DaKoMu:01} also defined domains of preferences using an optimization problem that turns out to be equivalent to the expenditure minimization problem.
However, they did not note the connection to the expenditure minimization problem or Hicksian demand, and, as a result,
did not interpret their sufficient conditions in terms of substitution effects or establish the role of substitution effects in determining the existence of equilibrium.

We proceed as follows.
Section~\ref{sec:setting} describes our setting---an exchange economy with indivisible goods and money.
Section~\ref{sec:EED} develops the Equilibrium Existence Duality.
Since the existing literature has focused mostly on the case in which indivisible goods are substitutes, we consider that case in Section~\ref{sec:subst}.
Section~\ref{sec:demTypes} develops demand types for settings with income effects 
and states our Unimodularity Theorem with Income Effects.
Section~\ref{sec:auctions} remarks on implications for auction design, and Section~\ref{sec:conclusion} is a conclusion.
Appendix~\ref{app:EEDproof} proves the Equilibrium Existence Duality. Appendix~\ref{app:grossToNet} proves the connection between gross and net \substity.
Appendices~\ref{app:dualDemPrefs} and~\ref{app:maxDomain} 
adapt the proofs of results from the literature to our setting.

\section{The Setting}
\label{sec:setting}

We work with a model of exchange economies with indivisibilities---adapted to allow for income effects.
There is a finite set $J$ of agents, a finite set $I$ of indivisible goods, and a divisible num\'eraire that we call ``money.''
We allow goods to be undesirable, i.e., to be ``bads."
We fix a \emph{total endowment} $\tot \in \mathbb{Z}^I$ of goods in the economy.\footnote{In particular, we allow for multiple units of some goods to be present in the aggregate, unlike \cite{GuSt:99} and \cite{CaOzPa:15}.}

\subsection{Preferences and Marshallian Demand}

Each agent $j \in J$ has a finite set $\Feas{j} \subseteq \Z^I$ of \emph{feasible bundles} of indivisible goods and a lower bound $\feas{j} \ge -\infty$ on her consumption of money.
As bundles that specify negative consumption of some goods can be feasible, our setting implicitly allows for production.\footnote{Technological constraints on production (in the sense of \cite{HaKoNiOsWe:11} and \cite{FlJaJaTe:19}) can be represented by the possibility that some bundles of goods are infeasible for an agent to consume (see Example 2.15 in \cite{BaKl:14}).}
The principal cases of $\feas{j}$ are $\feas{j} = -\infty$, in which case all levels of consumption of money are feasible, and $\feas{j} = 0$, in which case the consumption of money must be positive.
Hence, the set of feasible consumption bundles for agent $j$ is $\Feans{j} = (\feas{j},\infty) \times \Feas{j}$.
Given a bundle $\bunn \in \Feans{j},$ we let $\numer$ denote the amount of money in $\bunn$ and $\bun$ denote the bundle of goods specified by $\bunn,$ so $\bunn = (\defbunn)$.

The utility levels of agent $j$ lie in the range $\feasUtil{j}$, where $-\infty \le \minu{j} < \maxu{j} \le \infty.$
Furthermore, each agent $j$ has a \emph{utility function} $\utilFn{j}: \Feans{j} \to \feasUtil{j}$ that we assume to be continuous and strictly increasing in $\numer$, and to satisfy
\begin{equation}
\label{eq:ulimits}
\lim_{\numer \to (\feas{j})^+} \util{j}{\defbunn} = \minu{j} \quad \text{and} \quad \lim_{\numer \to \infty} \util{j}{\defbunn} = \maxu{j}
\end{equation}
for all $\bun \in \Feas{j}.$
Condition (\ref{eq:ulimits}) requires that some consumption of money above the minimum level $\feas{j}$ be essential to agent $j$.\footnote{\citet[pages 543--544]{henry1970indivisibilites}, \citet[Theorem 1(i)]{mas1977indivisible}, and \citet[Equation (3.1)]{DeGa:85} made similar assumptions.
If consuming money is inessential but consumption of money must be nonnegative, then it is known that \ce\ may not exist \citep{mas1977indivisible}---even in settings in which agents have unit demand for goods (see, e.g., \cite{herings2019competitive}).  However, the existence of \ce\ can be guaranteed when the agents trade lotteries over goods \citep{Gul.etal2020}.}
We let $\pnumer = 1$.

Given an endowment $\bunndow = (\defbunndow) \in \Feans{j}$ of a feasible consumption bundle and a price vector $\p \in \mathbb{R}^I,$ agent $j$'s \emph{Marshallian demand} for goods is
\[\dM{j}{\p}{\bunndow} = \left\{\bun^* \lgiv \bunn^* \in \argmax_{\bunn \in X^j \mid \pall \cdot \bunn \le \pall \cdot \bunndow} \util{j}{\bunn}\lgivend\right\}.\]
As usual, Marshallian demand is given by the set of bundles of goods that maximize an agent's utility, subject to a budget constraint, given a price vector and an endowment.
An \emph{income effect} is a change in an agent's Marshallian demand induced by a change in her money endowment, holding prices fixed.\footnote{Note that income effects also correspond to changes in an agent's Marshallian demand induced by changes in the value of her endowment, holding prices fixed.}

Our setup is flexible enough to capture a wide range of preferences with and without income effects, as the following two examples illustrate.

\begin{example}[Quasilinear Utility]
\label{eg:quasilin}
Given a \emph{valuation} $\valFn{j}: \Feas{j} \to \mathbb{R},$ letting $\feas{j} = \minu{j}= -\infty$ and $\maxu{j} =\infty,$ one obtains a quasilinear utility function given by
\[\util{j}{\defbunn} = \numer + \val{j}{\bun}.\]
When agents utility functions are quasilinear, they do not experience income effects.
When all agents have quasilinear utility functions, we say that utility is \emph{transferable}.
\end{example}

\begin{example}[Quasilogarithmic Utility]
\label{eg:quasilog}
Given a function $\quasivalFn{j}: \Feas{j} \to (-\infty,0)$, which we call a \emph{quasivaluation},\footnote{Here, we call $\quasivalFn{j}$ a quasivaluation, and denote it by $\quasivalFn{j}$ instead of $\valFn{j}$, to distinguish it from the valuation of an agent with quasilinear preferences.} and letting $\minu{j}= -\infty$, $\maxu{j} =\infty,$ and $\feas{j} = 0,$ there is a \emph{quasilogarithmic} utility function given by
\[\util{j}{\bunn} = \log \numer - \log(- \quasival{j}{\bun}).\]
Unlike with quasilinear utility functions, agents with quasilogarithmic utility functions exhibit income effects.
\end{example}

\subsection{Hicksian Demand, \DualVal{s}, and the \DualEcons}

The concept of Hicksian demand from consumer theory plays a key role in our analysis.
Given a utility level $\ub \in \feasUtil{j}$ and a price vector $\p,$ agent $j$'s \emph{Hicksian demand} for goods is
\begin{equation}
\label{eq:costMin}
\dH{j}{\p}{\ub} = \left\{\bun^* \lgiv \bunn^* \in \argmin_{\bunn \in \Feans{j} \mid \util{j}{\bunn} \ge \ub} \pall \cdot \bunn\lgivend\right\}.
\end{equation}
As in the standard case with divisible goods, Hicksian demand is given by the set of bundles of goods that minimize the expenditure of obtaining a utility level given a price vector.
A \emph{substitution effect} is a change in an agent's Hicksian demand induced by a change in prices, holding her utility level fixed.

As in classical demand theory, Marshallian and Hicksian demand are related by the duality between the utility maximization and expenditure minimization problems.
Specifically, 
a bundle of goods is expenditure-minimizing if and only if it is utility-maximizing.\footnote{Although Fact~\ref{fac:dualDem} is usually stated with divisible goods (see, e.g., Proposition 3.E.1 and Equation (3.E.4) in \cite{MaWhGr:95}), the standard proof applies with multiple indivisible goods and money under Condition~(\ref{eq:ulimits}).
For sake of completeness, we give a proof of Fact~\ref{fac:dualDem} in Appendix~\ref{app:dualDemPrefs}.}

\begin{fact}[Relationship between Marshallian and Hicksian Demand]
\label{fac:dualDem}
Let $\p$ be a price vector.
\begin{enumerate}[label=(\alph*)]
\item For all endowments $\bunndow$, we have that $\dM{j}{\p}{\bunndow} = \dH{j}{\p}{\ub},$ where
\[\ub = \max_{\bunn \in \Feans{j} \mid \pall \cdot \bunn \le \pall \cdot \bunndow} \util{j}{\bunn}.\]
\item For all utility levels $\ub$ and endowments $\bunndow$ with
\[\pall \cdot \bunndow = \min_{\bunn \in \Feans{j} \mid \util{j}{\bunn} \ge \ub} \pall \cdot \bunn,\]
we have that $\dH{j}{\p}{\ub} = \dM{j}{\p}{\bunndow}.$
\end{enumerate}
\end{fact}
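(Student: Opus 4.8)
The plan is to establish the classical duality between the utility-maximization problem (UMP) and the expenditure-minimization problem (EMP), taking care of the half-open money domain $(\feas{j},\infty)$. Two structural facts do all the work: $\utilFn{j}$ is continuous and strictly increasing in money, and Condition~(\ref{eq:ulimits}) makes money essential. My first step is to reduce both problems to one-dimensional problems in money using the finiteness of $\Feas{j}$. For each goods bundle $\bun \in \Feas{j}$ and utility level $\ub \in \feasUtil{j}$, continuity, strict monotonicity, and Condition~(\ref{eq:ulimits}) yield, via the intermediate value theorem, a unique money level $m(\bun,\ub) \in (\feas{j},\infty)$ with $\util{j}{m(\bun,\ub),\bun} = \ub$; minimizing $m(\bun,\ub) + \p\cdot\bun$ over the finite set $\Feas{j}$ shows the EMP infimum is attained. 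Writing $B = \pall\cdot\bunndow$, the UMP reduces to maximizing $\util{j}{B-\p\cdot\bun,\bun}$ over the budget-feasible goods bundles (those with $B - \p\cdot\bun > \feas{j}$, a nonempty finite set, as the endowment is budget-feasible), so its maximum is attained as well.

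Next I would prove two binding lemmas. At any UMP maximizer $\bunn^*$ the budget binds, $\pall\cdot\bunn^* = B$: if the budget were slack, strict monotonicity lets me raise $\numer^*$ slightly while staying feasible, strictly increasing utility. Symmetrically, at any EMP minimizer $\bunn^*$ the utility constraint binds, $\util{j}{\bunn^*} = \ub$: since $\numer^* > \feas{j}$ for any feasible bundle, if utility were strictly above $\ub$ then continuity lets me lower $\numer^*$ slightly while keeping utility at least $\ub$, strictly cutting expenditure.

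For part (a), let $\ub$ be the maximized utility at budget $B$. I first claim the minimal expenditure to reach $\ub$ equals $B$: it is at most $B$ because a maximizer attains $\ub$ at cost $B$, and it cannot be below $B$ because any bundle with utility at least $\ub$ and cost below $B$ could have its money raised to push utility strictly above $\ub$ while remaining within budget, contradicting maximality. Given this, the two demand sets coincide. If $\bunn^*$ is a UMP maximizer, then $\util{j}{\bunn^*} = \ub$ and $\pall\cdot\bunn^* = B$ equals the minimal expenditure, so $\bun^* \in \dH{j}{\p}{\ub}$; conversely, if $\bunn^*$ is an EMP minimizer at $\ub$, then $\util{j}{\bunn^*} = \ub$ (utility binds) and $\pall\cdot\bunn^* = B$, so $\bunn^*$ is budget-feasible and attains the maximal utility, giving $\bun^* \in \dM{j}{\p}{\bunndow}$.

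Part (b) is the mirror image: now $B = \pall\cdot\bunndow$ is the minimal expenditure for utility $\ub$, and I show the maximal utility at budget $B$ is exactly $\ub$. It is at least $\ub$ because an EMP minimizer reaches $\ub$ at cost $B$; it cannot exceed $\ub$ because any budget-feasible bundle with utility above $\ub$ could have its money lowered to bring utility back down to $\ub$ at strictly smaller cost, contradicting minimality of $B$. The equality of the two demand sets then follows exactly as in part (a). The main delicacy throughout is attainment in the EMP together with the binding of the utility constraint: this is precisely where Condition~(\ref{eq:ulimits}) is indispensable, since without the essentiality of money the EMP infimum could be approached only as $\numer \to (\feas{j})^+$ and never attained, breaking the correspondence. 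The finiteness of $\Feas{j}$ and strict monotonicity in money handle the remaining existence and binding arguments, so no convexity or differentiability is needed.
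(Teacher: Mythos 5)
Your proof is correct and takes essentially the same route as the paper's: both reduce the expenditure-minimization problem to a finite problem via the money-compensation function, use strict monotonicity in money to make the budget bind at any utility maximizer, use continuity together with the open money domain to make the utility constraint bind at any expenditure minimizer, and then show the two value conditions (maximal utility at the given budget, minimal expenditure at the given utility) imply each other, from which the two set inclusions follow. The only difference is organizational---the paper packages the argument as two one-directional claims, each giving one implication between the value conditions plus one inclusion, and then combines them, whereas you prove both inclusions within each part directly.
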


If an agent has a  quasilinear utility function, then, as she experiences no income effects, her Marshallian and Hicksian demands coincide and do not depend on endowments or utility levels.
Under quasilinearity, we therefore refer to both Marshallian and Hicksian demand simply as \emph{demand}, which we denote by $\dQL{j}{\p}$.
Formally, if $j$ has quasilinear utility with valuation $\valFn{j},$ defining $\dQL{j}{\p}$ as the solution to the quasilinear maximization problem
\begin{equation}\label{eqn:quasilin}
\dQL{j}{\p} = \argmax_{\bun \in \Feas{j}} \{\val{j}{\bun} - \p \cdot \bun\},
\end{equation}
we have that $\dM{j}{\p}{\bunndow} = \dQL{j}{\p}$ for all endowments $\bunndow$ and that $\dH{j}{\p}{\ub} = \dQL{j}{\p}$ for all utility levels $\ub$.

We next show that the interpretation of the expenditure minimization problem as a quasilinear maximization problem persists in the presence of income effects.
Specifically, we can rewrite the expenditure minimization problem of Equation (\ref{eq:costMin}) as a quasilinear optimization problem by using the constraint to solve for $\numer$ as a function of $\bun$.
Formally, for a bundle $\bun \in \Feas{j}$ of goods and a utility level $\ub \in \feasUtil{j},$ we let $\cf{j}{\bun}{\ub} = \util{j}{\cdot,\bun}^{-1}(\ub)$ denote the level of consumption of money (or \emph{s}avings) needed to obtain utility level $\ub$ given $\bun.$\footnote{The function $\cfFn{j}$ is the \emph{compensation function} of \cite{DeGa:85} (see also \cite{DaKoMu:01}).} By construction, we have that
\[\dH{j}{\p}{\ub} = \argmin_{\bun \in \Feas{j}} \left\{\cf{j}{\bun}{\ub} + \p \cdot \bun\right\}.\]
It follows that agent $j$'s expenditure minimization problem at utility level $\ub$ can be written as a quasilinear maximization problem for the valuation $-\cf{j}{\cdot}{\ub}$, which we therefore call the Hicksian valuation.

\begin{definition}
The \emph{\dualval} of agent $j$ at utility level $u$ is $\valHDef{j} = -\cf{j}{\cdot}{\ub}$.
\end{definition}

Note that $\cf{j}{\cdot}{\ub}$ is continuous and strictly increasing in $\ub,$ and hence $\valHDef{j}$ is continuous and strictly decreasing in $\ub$.
The following lemma formally states that agent $j$'s Hicksian demand at utility level $\ub$ is the demand correspondence of an agent with valuation $\valHDef{j}$.

\begin{lemma}
\label{lem:dHvalH}
For all price vectors $\p$ and utility levels $\ub$, we have that
\[\dH{j}{\p}{\ub} = \argmin_{\bun \in \Feas{j}} \left\{\cf{j}{\bun}{\ub} + \p \cdot \bun\right\} = \argmax_{\bun \in \Feas{j}} \left\{\valH{j}{\bun}{\ub} - \p \cdot \bun\right\}.\]
\end{lemma}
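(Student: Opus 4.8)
The plan is to prove the two displayed equalities separately, observing that the second is essentially definitional while the first carries the substantive content. For the second equality, since the Hicksian valuation is defined by $\valHDef{j} = -\cf{j}{\cdot}{\ub}$, minimizing $\cf{j}{\bun}{\ub} + \p \cdot \bun$ over $\bun \in \Feas{j}$ is the same as maximizing its negative $\valH{j}{\bun}{\ub} - \p \cdot \bun$; the two argmin/argmax sets therefore coincide term-by-term, and I would dispose of this in a single sentence.

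For the first equality, the key idea is to reduce the expenditure minimization over the full bundle $\bunn = (\defbunn)$ to a minimization over the goods bundle $\bun$ alone, by optimizing out the money coordinate. I would write the objective as $\pall \cdot \bunn = \numer + \p \cdot \bun$ (using $\pnumer = 1$) and split the minimization in Equation~(\ref{eq:costMin}) into an inner minimization over $\numer$ for each fixed $\bun$, followed by an outer minimization over $\bun \in \Feas{j}$. The crux is the inner step: because $\utilFn{j}$ is continuous and strictly increasing in $\numer$ and satisfies the limit conditions in~(\ref{eq:ulimits}), the constraint $\util{j}{\defbunn} \ge \ub$ is equivalent to $\numer \ge \cf{j}{\bun}{\ub}$, and $\cf{j}{\bun}{\ub}$ lies strictly above $\feas{j}$, so the minimizer is feasible. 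Hence the inner minimum is attained uniquely at $\numer = \cf{j}{\bun}{\ub}$ with value $\cf{j}{\bun}{\ub} + \p \cdot \bun$, which is exactly the reduced objective.

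With the inner step in hand, I would verify the equality of argmin sets by double inclusion. For any optimal full bundle $\bunn^* = (\numer^*, \bun^*)$ in Equation~(\ref{eq:costMin}), the inner analysis forces $\numer^* = \cf{j}{\bun^*}{\ub}$ (the constraint must bind, else one could lower $\numer^*$ and reduce cost), and optimality against the comparison bundles $(\cf{j}{\bun}{\ub}, \bun)$ shows $\bun^*$ minimizes the reduced objective. Conversely, given a minimizer $\bun^*$ of the reduced objective, the pair $(\cf{j}{\bun^*}{\ub}, \bun^*)$ is feasible and, using $\numer \ge \cf{j}{\bun}{\ub}$ for every feasible full bundle, achieves the global minimum of $\pall \cdot \bunn$; thus $\bun^* \in \dH{j}{\p}{\ub}$.

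I expect the main obstacle to be the careful handling of the inner minimization, specifically confirming that the constraint binds at the optimum and that the optimal money level $\cf{j}{\bun}{\ub}$ genuinely lies in the open feasible interval $(\feas{j}, \infty)$ rather than at or below its boundary. Both points hinge on the strict monotonicity of $\utilFn{j}$ in $\numer$ together with the boundary-limit condition~(\ref{eq:ulimits}), which guarantees both that $\cf{j}{\cdot}{\ub}$ is well defined and that no feasible bundle can undercut the cost $\cf{j}{\bun}{\ub} + \p \cdot \bun$. Once these are established, the remaining set-equality bookkeeping is routine.
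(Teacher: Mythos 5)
Your proposal is correct and follows essentially the same route as the paper's proof: both exploit strict monotonicity of $\utilFn{j}$ in $\numer$ to argue the utility constraint binds at the optimum, then substitute $\numer = \cf{j}{\bun}{\ub}$ to eliminate the money coordinate and reduce to the unconstrained problem over $\Feas{j}$, with the second equality being definitional. Your version simply spells out the double-inclusion bookkeeping and the feasibility of $\cf{j}{\bun}{\ub} > \feas{j}$ that the paper's two-line argument leaves implicit.
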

\begin{proof}
As $\utilFn{j}(\bunn)$ is strictly increasing in $\numer,$ we have that
\[\dH{j}{\p}{\ub} = \left\{\bun^* \lgiv \bunn^* \in \argmin_{\bunn \in \Feans{j} \mid \util{j}{\bunn} = \ub} \pall \cdot \bunn\lgivend\right\}.\]
Applying the substitution $\numer = \cf{j}{\bun}{\ub}=-\valH{j}{\bun}{\ub}$ to remove the constraint from the minimization problem yields the lemma.
\end{proof}

It follows from Lemma~\ref{lem:dHvalH} that an agent's \dualval\ at a utility level gives rise to a quasilinear utility function that reflects the agent's substitution effects at that utility level.
Lemma~\ref{lem:dHvalH} also yields a relationship between the family of \dualvals\ and income effects.
Indeed, by Fact~\ref{fac:dualDem}, an agent's income effects correspond to changes in her Hicksian demand induced by changes in her utility level, holding prices fixed.
By Lemma~\ref{lem:dHvalH}, these changes in Hicksian demand reflect the changes in the \dualval\ that are induced by the changes in utility levels.
Hence, the \dualvals\ at each utility level determine an agent's substitution effects, while the variation of the \dualvals\ with the utility level captures her income effects.

To illustrate how an agent's family of \dualvals\ reflects her income effects, we consider the cases of quasilinear and quasilogarithmic utility.

\begin{example}[Example~\ref{eg:quasilin} continued]
With quasilinear utility, the \dualval\ at utility level $\ub$ is $\valH{j}{\bun}{\ub} = \val{j}{\bun} - \ub.$
Changes in $\ub$ do not affect the relative values of bundles under $\valH{j}{\cdot}{\ub}$, so changes in the utility level do not affect Hicksian demand.
Indeed, there are no income effects.
By construction, a utility function $\util{j}{\bunn}$ is quasilinear in $\numer$ if and only if $\cf{j}{\bun}{\ub}$ is quasilinear in $\ub$---or, equivalently, $\valH{j}{\bun}{\ub}$ is quasilinear in $-\ub$.
\end{example}

In general, it follows from Fact~\ref{fac:dualDem} and Lemma~\ref{lem:dHvalH} that agent $j$'s preferences exhibit income effects if and only if $\cf{j}{\bun}{\ub}$---or, equivalently, $\valH{j}{\bun}{\ub}$---is not additively separable between $\bun$ and $\ub$.

\begin{example}[Example~\ref{eg:quasilog} continued]
\label{eg:quasilogDualVal}
With quasilogarithmic utility, the \dualval\ at utility level $\ub$ is
$\valH{j}{\bun}{\ub} = e^{\ub} \quasival{j}{\bun}.$
In this case, each \dualval\ is a positive linear transformation of $\quasivalFn{j}$.
Income effects are reflected by the fact that $\valH{j}{\bun}{\ub}$ is not additively separable between $\bun$ and $\ub$.
\end{example}

We use Lemma~\ref{lem:dHvalH} to convert preferences with income effects into families of valuations.
It turns out that each continuously decreasing family of valuations is the family of \dualvals\ of a utility function, so a utility function can be represented equivalently by a family of \dualvals.

\begin{fact}[Duality for Preferences]
\label{fac:dualPrefs}
Let $F: \Feas{j} \times \feasUtil{j} \to (-\infty,-\feas{j})$ be a function.
There exists a utility function $\utilFn{j}: \Feans{j} \to \feasUtil{j}$ whose \dualval\ at each utility level $\ub$ is $F(\cdot,\ub)$ if and only if for each $\bun \in \Feas{j},$ the function $F(\bun,\cdot)$ is continuous, strictly decreasing, and satisfies\footnote{A version of Fact~\ref{fac:dualPrefs} for the function $\cfFn{j}$ in a setting in which utility is increasing in goods is proved in Lemma 1 in \cite{DaKoMu:01}.  For sake of completeness, we give a proof of Fact~\ref{fac:dualPrefs} in Appendix~\ref{app:dualDemPrefs}.

Fact~\ref{fac:dualPrefs} is also similar in spirit to the duality between utility functions and expenditure functions (see, e.g., Propositions 3.E.2 and 3.H.1 in \cite{MaWhGr:95}).  However, the arguments of the expenditure function (at each utility level) are prices, while the arguments of the \dualval\ (at each utility level) are quantities.}\footnote{Condition (\ref{eq:vlimits}) is analogous to Condition (\ref{eq:ulimits}) and ensures that the corresponding utility function is defined everywhere on $\Feans{j}$.  Note that Condition (\ref{eq:vlimits}) is essentially automatic in the context of \cite{DaKoMu:01} and therefore does not appear explicitly in their result (Lemma 1 in \cite{DaKoMu:01}).}
\begin{equation}
\label{eq:vlimits}
\lim_{\ub \to (\minu{j})^+} F(\bun,\ub) = -\feas{j} \quad \text{and} \quad \lim_{\ub \to (\maxu{j})^-} F(\bun,\ub) = -\infty.
\end{equation}
\end{fact}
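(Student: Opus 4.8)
The plan is to observe that, because the set $\Feas{j}$ of feasible bundles of goods is finite (hence discrete) and the num\'eraire enters only through a single real coordinate, the whole statement decouples over $\bun \in \Feas{j}$ and reduces, for each fixed $\bun$, to the elementary fact that a continuous, strictly monotonic bijection between two open intervals (or half-lines) has a continuous, strictly monotonic inverse, with the one-sided limits at the endpoints interchanged under inversion. Concretely, the correspondence between a utility function and its family of \dualvals\ is, bundle by bundle, nothing more than taking an inverse function in the money/utility variable: recall from the definition that $\valH{j}{\bun}{\ub} = -\cf{j}{\bun}{\ub}$, where $\cf{j}{\bun}{\ub} = \util{j}{\cdot,\bun}^{-1}(\ub)$. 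I would set up both directions around this inversion.

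For necessity, suppose a utility function $\utilFn{j}$ with the required properties exists, and fix $\bun \in \Feas{j}$. Since $\util{j}{\cdot,\bun}$ is continuous and strictly increasing in $\numer$ on $(\feas{j},\infty)$, with one-sided limits $\minu{j}$ and $\maxu{j}$ by Condition~(\ref{eq:ulimits}), it is a continuous, strictly increasing bijection from $(\feas{j},\infty)$ onto $\feasUtil{j}$. Its inverse $\ub \mapsto \cf{j}{\bun}{\ub}$ is therefore continuous and strictly increasing, so $F(\bun,\cdot) = \valH{j}{\bun}{\cdot} = -\cf{j}{\bun}{\cdot}$ is continuous and strictly decreasing; and the two limits in Condition~(\ref{eq:ulimits}) translate directly into the two limits of Condition~(\ref{eq:vlimits}), because inversion sends $\numer \to (\feas{j})^+$ to $\ub \to (\minu{j})^+$ and $\numer \to \infty$ to $\ub \to (\maxu{j})^-$.

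For sufficiency, suppose $F$ satisfies the stated conditions, and define $\cf{j}{\bun}{\ub} := -F(\bun,\ub)$. For each fixed $\bun$ this is continuous and strictly increasing in $\ub$, and by Condition~(\ref{eq:vlimits}) it is a bijection from $\feasUtil{j}$ onto $(\feas{j},\infty)$; I would then define $\util{j}{\numer,\bun}$ to be its inverse in the $\ub$-variable. Running the inversion argument in reverse shows that $\utilFn{j}$ is continuous, strictly increasing in $\numer$, and satisfies Condition~(\ref{eq:ulimits}) (continuity on all of $\Feans{j}$ is automatic since $\Feas{j}$ is discrete), so $\utilFn{j}$ is a bona fide utility function. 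Finally, since $\util{j}{\cdot,\bun}$ was defined as the inverse of $\ub \mapsto \cf{j}{\bun}{\ub}$, the function $\cf{j}{\cdot}{\ub}$ is indeed the savings (compensation) function of $\utilFn{j}$, so its \dualval\ at each utility level $\ub$ is $-\cf{j}{\cdot}{\ub} = F(\cdot,\ub)$, as required.

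I expect no genuinely hard step; the only point that needs care is the bookkeeping of ranges and endpoint limits, namely verifying that the codomain $(-\infty,-\feas{j})$ of $F$, the interval $(\feas{j},\infty)$ of admissible money levels, and the utility range $\feasUtil{j}$ line up correctly under inversion, and that this remains valid in the degenerate cases $\feas{j}=-\infty$, $\minu{j}=-\infty$, or $\maxu{j}=\infty$, where the relevant intervals become half-lines or all of $\R$. Since the monotone-bijection argument is uniform across these cases, Condition~(\ref{eq:vlimits}) is precisely what is needed to guarantee that $-F(\bun,\cdot)$ has full range $(\feas{j},\infty)$, so that the constructed $\utilFn{j}$ is defined on all of $\Feans{j}$.
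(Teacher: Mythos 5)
Your proposal is correct and follows essentially the same route as the paper's own proof: both directions are handled by inverting the continuous, strictly monotone map in the money/utility coordinate bundle-by-bundle, with Conditions~(\ref{eq:ulimits}) and~(\ref{eq:vlimits}) serving exactly to make that map a bijection between $(\feas{j},\infty)$ and $\feasUtil{j}$ so that the inverse (and hence the constructed $\utilFn{j}$ or $F$) is defined everywhere. The paper compresses this into an appeal to the Inverse Function Theorem, whereas you spell out the monotone-bijection and endpoint-limit bookkeeping explicitly, but the substance is identical.
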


Finally, we combine the families of \dualval{s} to form a family of \dualecons, in each of which utility is transferable and agents choose consumption bundles to minimize the expenditure of obtaining given utility levels.

\begin{definition}
The \emph{\dualecon\ for a profile of utility levels $(\ubj)_{j \in J}$} is the transferable utility economy in which agent $j$'s valuation is $\valH{j}{\cdot}{\ubj}$.
\end{definition}

The family of \dualecons\ consists of the ``duals" of the original economy in which income effects have been removed and \totpriceeffs\ are given by substitution effects.
Like the construction of \dualvals, the construction of the \dualecons\ allows us to convert economies with income effects to families of economies with transferable utility and is a key step of our analysis.

\section{The Equilibrium Existence Duality}
\label{sec:EED}

We now turn to the analysis of \ce\ in exchange economies. 
\Andowallocemph\ consists of an endowment $\bunndowj \in \Feans{j}$ for each agent $j$ such that $\sum_{j \in J} \bundowj = \tot,$ where $\tot$ is the total endowment.
Given \andowalloc, a \ce\ specifies a price vector such that markets for goods clear when agents maximize utility. 
By Walras's Law, it follows that the market for money clears as well.

\begin{definition}
Given \andowalloc\ $(\bunndowj)_{j \in J}$, a \emph{\ce} consists of a price vector $\p$ and a bundle $\bunj \in \dM{j}{\p}{\bunndowj}$ for each agent such that $\sum_{j \in J} \bunj = \tot$.
\end{definition}

In transferable utility economies, a \ce\ consists of a price vector $\p$ and a bundle $\bunj \in \dQL{j}{\p}$ for each agent such that $\sum_{j \in J} \bunj = \tot$.
In this case, the \dowalloc\ does not affect \ce\ because endowments do not affect (Marshallian) demand.
We therefore omit the \dowalloc\ when considering \ce\ in transferable utility economies in which \andowalloc\ exists---i.e., $\tot \in \sum_{j \in J} \Feas{j}$.
On the other hand, the total endowment $\tot$ affects \ce\ even when utility is transferable.

Recall that utility is transferable in the \dualecons.
Furthermore, by Lemma~\ref{lem:dHvalH}, a \ce\ in the \dualecon\ for a profile $(\ubj)_{j \in J}$ of utility levels consists of a price vector $\p$ and a bundle $\bunj \in \dH{j}{\p}{\ubj}$ for each agent such that $\sum_{j \in J} \bunj = \tot$.
Thus, agents act as if they minimize expenditure in \ce\ in the \dualecons.\footnote{\label{fn:quasiEquil}As a result, \ces\ in the \dualecons\ coincide with \emph{quasiequilibria with transfers} from the modern treatment of the Second Fundamental Theorem of Welfare Economics (see, e.g., Definition 16.D.1 in \cite{MaWhGr:95}).
As the set of feasible levels of money consumption is open, agents always can always reduce their money consumption slightly from a feasible bundle to obtain a strictly cheaper feasible bundle.
Hence, quasiequilibria with transfers coincide with equilibria with transfers in the original economy (see, e.g., Proposition 16.D.2 in \cite{MaWhGr:95} for the case of divisible goods).
If the endowments of money were fixed in the \dualecons, this concept would coincide with the concept of \emph{compensated equilibrium} of \cite{arrow1971general} and the concept of \emph{quasiequilibrium} introduced by \cite{debreu1962new}.
}

Building on Fact~\ref{fac:dualDem} and Lemma~\ref{lem:dHvalH},
our Equilibrium Existence Duality connects the equilibrium existence problems in the original economy (which can feature income effects) and the \dualecon\ (in which utility is transferable).
Specifically, we show that \ce\ always exists in the original economy if and only if it always exists in the \dualecons.
Here, we hold agents' preferences and the total endowment (of goods) fixed but allow the \dowalloc\ to vary.


\begin{theorem}[Equilibrium Existence Duality]
\label{thm:existDualExchange}
Suppose that the total endowment and the sets of feasible bundles are such that \andowalloc\ exists.
\Ces\ exist for all \dowallocs\ if and only if \ces\ exist in the \dualecons\ for all profiles of utility levels.
\end{theorem}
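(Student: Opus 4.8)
The plan is to use the relationship between Marshallian and Hicksian demand (Fact~\ref{fac:dualDem}) to show that a \ce\ of the original economy is the \emph{same} object as a \ce\ of a \dualecon, provided the profile of utility levels and the \dowalloc\ are compatible, in the sense that each agent's endowment lies on the budget frontier at which her target utility level is just affordable. Concretely, if $\p$ is a price vector, $(\ubj)_{j\in J}$ is a profile of utility levels, and $(\bunndowj)_{j\in J}$ is \andowalloc, then Fact~\ref{fac:dualDem} shows that $\dH{j}{\p}{\ubj}=\dM{j}{\p}{\bunndowj}$ exactly when the minimal expenditure $\min_{\bunn\in\Feans{j}\mid\pall\cdot\bunn\ge\pall\cdot\bunndowj}\pall\cdot\bunn$ needed to reach $\ubj$ equals the endowment value $\pall\cdot\bunndowj$. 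Thus both directions reduce to a \emph{matching problem}: produce a price vector together with a compatible pair (utility profile, \dowalloc) at which the markets for goods clear.

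For the direction ``\ces\ in the \dualecons\ for all profiles $\Rightarrow$ \ces\ for all \dowallocs,'' I would fix \andowalloc\ $(\bunndowj)$ and seek a profile $(\ubj)$ whose \dualecon\ has a \ce\ that is budget-compatible with $(\bunndowj)$. To find it, I set up a Negishi-style fixed-point argument over the space of utility profiles $\prod_j(\minu{j},\maxu{j})$: to a profile $(\ubj)$ I associate, using the hypothesis, a \ce\ price $\p$ of the \dualecon\ for $(\ubj)$, and then the utility $T_j=\max_{\bunn\in\Feans{j}\mid\pall\cdot\bunn\le\pall\cdot\bunndowj}\util{j}{\bunn}$ that agent $j$ can actually afford at $\p$ with her true endowment. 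A fixed point $\ubj=T_j$ for all $j$ means precisely that $\pall\cdot\bunndowj$ equals the minimal expenditure to reach $\ubj$, so by Fact~\ref{fac:dualDem}(b) the \dualecon's \ce\ is a \ce\ of the original economy at $(\bunndowj)$. The reverse direction is symmetric, with the roles of utility profiles and \dowallocs\ interchanged: given a target profile $(\ubj)$, I vary \dowallocs\ (exploiting that the total endowment of money is unconstrained, which gives extra freedom) and seek one whose original \ce---guaranteed by the hypothesis---achieves exactly the utilities $(\ubj)$; Fact~\ref{fac:dualDem}(a) then certifies that this \ce\ is a \ce\ of the \dualecon\ for $(\ubj)$.

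The main obstacle is executing the fixed-point argument rigorously, and it is here that Condition~(\ref{eq:ulimits}) does the essential work. Two compactness issues must be resolved: utility profiles range over a possibly unbounded box, and the associated \ce\ prices must be confined to a compact set. For the former, Condition~(\ref{eq:ulimits}) supplies the needed boundary behavior---as $\ubj\to(\maxu{j})^-$ the expenditure needed to reach $\ubj$ diverges and so exceeds any fixed budget (forcing $T_j<\maxu{j}$), while as $\ubj\to(\minu{j})^+$ it falls to the money floor (forcing $T_j>\minu{j}$)---so the map pushes profiles strictly inward and admits a compact invariant region. For the latter, I would bound \ce\ prices using the indivisibility of goods and the finiteness of the feasible sets $\Feas{j}$ together with the fixed total endowment $\tot$. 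I then invoke Berge's Maximum Theorem to obtain continuity of the affordable-utility map $T_j$ in $(\p,(\ubj))$ and upper hemicontinuity of the \ce\ price correspondence of the \dualecons, and apply Kakutani's fixed-point theorem.

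The genuinely delicate point---and what I expect to be the crux---is that the \ce\ price correspondence of the \dualecons\ need not be convex-valued, so Kakutani does not apply to the composite map verbatim; one must either reformulate the fixed point so that its values are convex, or approximate (for instance, by perturbing the \dualvals\ so that demands, and hence equilibria, are locally unique) and pass to a limit using upper hemicontinuity. Verifying that a budget-matching fixed point survives this process, and that the limiting object is a genuine \ce\ in both economies simultaneously, is the part of the argument that requires the most care.
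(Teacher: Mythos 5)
Your high-level architecture is in fact the paper's own---both directions pivot on Fact~\ref{fac:dualDem}, and the hard direction (\ces\ in the \dualecons\ $\Rightarrow$ \ces\ for all \dowallocs) is attacked by a Negishi-style Kakutani argument over utility profiles---but the crux you flag in your final paragraph is a genuine gap, and both your diagnosis of it and your proposed repairs fail. The diagnosis first: in a transferable-utility economy the set of \ce\ prices \emph{is} convex, since for a fixed equilibrium allocation $(\bunj)_{j \in J}$ it is the solution set of the linear inequalities $\cf{j}{\bunj}{\ubj} + \p \cdot \bunj \le \cf{j}{\bunpr}{\ubj} + \p \cdot \bunpr$; so convex-valuedness of the equilibrium-price correspondence is not the problem. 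The problem is your bookkeeping variable: the affordable utility $T_j = \max \left\{ \util{j}{\bunn} : \pall \cdot \bunn \le \pall \cdot \bunndowj \right\}$ is a \emph{nonlinear} function of $\p$ (a maximum of finitely many nonlinear functions, one per bundle in $\Feas{j}$), so the image of the convex set of equilibrium prices under $(T_j)_{j \in J}$ need not be convex, and Kakutani does not apply to your composite map. Your fallback---perturb the \dualvals\ so equilibria are locally unique, then pass to the limit---does not work either: with indivisible goods and transferable utility, equilibrium prices generically form a polyhedron of positive dimension, so no perturbation makes them locally unique. The paper's resolution is to change the bookkeeping variable rather than the map: it tracks each agent's \emph{net expenditure} $\cf{j}{\bunj}{\ubj} - \numerdowj + \p \cdot (\bunj - \bundowj)$, which for a fixed allocation is \emph{affine} in $\p$; and the allocation can be held fixed across all equilibrium prices by the Shapley ``lone wolf''-type lemma invoked in Claim~\ref{cl:hicksEqHelp} \citep{shapley1964values}. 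An affine image of a convex polyhedron is convex, so the net-expenditure correspondence $\payoffFn$ is convex-valued (Claim~\ref{cl:regT}), and Kakutani is applied not to a self-map of utility profiles but to the product correspondence $\Phi(\payoffvec,\ubvec) = \payoffs{\ubvec} \times \argmin_{\hubvec} \sum_{j \in J} \payoff{j} \hubj$, whose fixed points satisfy $\payoffvec = \zero$, i.e., exact budget balance; Fact~\ref{fac:dualDem} then yields a \ce\ of the original economy. This also dissolves your worry about compactifying the price space---prices never enter the fixed-point domain, only expenditures, which are bounded in $[\underline{M},\overline{M}]^J$---and note that upper hemicontinuity here is not a corollary of Berge's theorem (one is dealing with an equilibrium correspondence, not an optimum correspondence); the paper proves the closed graph of $\payoffFn$ via Farkas's Lemma (Fact~\ref{fac:farkas}).

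The converse direction is also underspecified in a way that matters. You propose to ``vary \dowallocs\ \ldots\ and seek one whose original \ce\ achieves exactly the utilities $(\ubj)_{j \in J}$,'' but locating such an endowment is itself a nontrivial matching problem, and the freedom in money endowments alone does not solve it: endowing agent $j$ with just enough money to reach $\ubj$ at \emph{some} bundle does not prevent the resulting equilibrium from giving her strictly more utility, or other agents different utilities, in which case Fact~\ref{fac:dualDem} certifies nothing about the target profile. The paper closes this with an optimization step: choose $(\bunj)_{j \in J}$ with $\sum_{j \in J} \bunj = \tot$ minimizing the aggregate compensation cost $\sum_{j \in J} \cf{j}{\bunj}{\ubj}$ (attained since each $\Feas{j}$ is finite), and endow agent $j$ with exactly $(\cf{j}{\bunj}{\ubj}, \bunj)$. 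Minimality makes this allocation Pareto efficient (Claim~\ref{cl:dualPE}), and Pareto efficiency is what forces \emph{every} \ce\ from this endowment---one exists by hypothesis---to leave each agent's utility exactly at $\ubj$; this is the Maskin--Roberts argument \citep{maskin2008fundamental}, which the paper isolates as the intermediate Second-Welfare-Theorem condition~\ref{cond:SWT} of Theorem~\ref{thm:existDualExchangeSWT}. Only with utilities so pinned down do Fact~\ref{fac:dualDem} and Lemma~\ref{lem:dHvalH} convert the equilibrium prices into a \ce\ of the \dualecon. So the theorem is not two symmetric matching problems: one direction needs the lone-wolf/affine-expenditure device and the other needs the expenditure-minimization/Pareto-efficiency detour, and neither idea appears in your proposal.
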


By Lemma~\ref{lem:dHvalH}, agents' substitution effects determine their preferences in each \dualecon.  Therefore, Theorem~\ref{thm:existDualExchange} tells us that \emph{any} condition that ensures the existence of \ces\ can be written as a condition on substitution effects alone.
That is, substitution effects fundamentally determine whether \ce\ exists.

Both directions of Theorem~\ref{thm:existDualExchange} also have novel implications for the analysis of \ce\ in economies with indivisibilities.
As demands in the \dualecons\ are given by Hicksian demand in the original economy (Lemma~\ref{lem:dHvalH}), 
the ``if" direction of Theorem~\ref{thm:existDualExchange} implies that every condition on demand $\dQLFn{j}$ that guarantees the existence of \ce\ in settings with transferable utility translates into a condition on \emph{Hicksian} demand $\dHFn{j}$ that guarantees the existence of \ce\ in settings with income effects.
In Sections~\ref{sec:subst} and~\ref{sec:demTypes}, we use the ``if" direction of Theorem~\ref{thm:existDualExchange} to obtain new domains for the existence of \ce\ with income effects from previous results on the existence of \ce\ in settings with transferable utility \citep{KeCr:82,BaKl:19}.
Conversely, the ``only if" direction of Theorem~\ref{thm:existDualExchange} shows that if a condition on demand defines a maximal domain for the existence of \ce\ in settings with transferable utility, then the translated condition on Hicksian demand defines a maximal domain for the existence of \ce\ in settings with income effects.
In Sections~\ref{sec:subst} and~\ref{sec:demTypes}, we also use this implication to derive new maximal domain results for settings with income effects.

To prove the ``only if" direction of Theorem~\ref{thm:existDualExchange}, we exploit a version of the Second Fundamental Theorem of Welfare Economics for settings with indivisibilities.
To understand connection to the existence problem for the \dualecons, note that the existence of \ce\ in the \dualecons\ is equivalent to the conclusion of the Second Welfare Theorem---i.e., that each Pareto-efficient allocation can be supported in an equilibrium with endowment transfers---as the following lemma shows.\footnote{Recall that an allocation $(\bunnj)_{j \in J} \in \bigtimes_{j \in J} \Feans{j}$ is \emph{Pareto-efficient} if there does not exist an allocation $(\hbunnj)_{j \in J} \in \bigtimes_{j \in J} \Feans{j}$ such that
\[\sum_{j \in J} \hbunnj = \sum_{j \in J} \bunnj,\]
and $\util{j}{\hbunnj} \ge \util{j}{\bunnj}$ for all agents $j$ with strict inequality for some agent.}

\begin{lemma}
\label{lem:dualEconSWT}
Suppose that the total endowment and the sets of feasible bundles are such that \andowalloc\ exists.
\Ces\ exist in the \dualecons\ for all profiles of utility levels if and only if, for each Pareto-efficient allocation $(\bunnj)_{j \in J}$ with $\sum_{j \in J} \bunj = \tot$,
there exists a price vector $\p$ such that $\bunnj \in \dM{j}{\p}{\bunnj}$ for all agents $j$.
\end{lemma}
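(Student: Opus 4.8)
My plan is to prove both implications by using Fact~\ref{fac:dualDem} and Lemma~\ref{lem:dHvalH} to translate the equilibrium-existence statement for the \dualecons\ and the price-support statement for the original economy into a single common condition. Two translations do the work. First, by Lemma~\ref{lem:dHvalH}, a \ce\ in the \dualecon\ for a profile $(\ubj)_{j\in J}$ is precisely a price vector $\p$ and bundles $\bunj\in\dH{j}{\p}{\ubj}$ with $\sum_{j\in J}\bunj=\tot$. Second, by Fact~\ref{fac:dualDem}(a), the support condition $\bunnj\in\dM{j}{\p}{\bunnj}$ forces the maximal utility attainable on the budget set defined by $\bunnj$ to equal $\util{j}{\bunnj}$, and hence implies $\bunj\in\dH{j}{\p}{\util{j}{\bunnj}}$; conversely, by Fact~\ref{fac:dualDem}(b), once $\bunnj$ is known to be expenditure-minimizing for $\util{j}{\bunnj}$, membership $\bunj\in\dH{j}{\p}{\util{j}{\bunnj}}$ returns $\bunnj\in\dM{j}{\p}{\bunnj}$. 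Thus both sides of the lemma concern a single price at which each agent's goods bundle lies in her Hicksian demand at her own utility level and the bundles clear $\tot$.

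The remaining ingredient is a dictionary between Pareto-efficient allocations of the original economy and the expenditure-minimizing (equivalently, welfare-maximizing) allocations of the \dualecons. Because each $\utilFn{j}$ is strictly increasing in money, an allocation $(\bunnj)_{j\in J}$ with $\sum_{j\in J}\bunj=\tot$ and induced utility levels $\ubj=\util{j}{\bunnj}$ has $\numerj=\cf{j}{\bunj}{\ubj}$, and I claim it is Pareto-efficient if and only if $(\bunj)_{j\in J}$ minimizes $\sum_{j\in J}\cf{j}{\bunj}{\ubj}$---equivalently, maximizes $\sum_{j\in J}\valH{j}{\bunj}{\ubj}$---among allocations of $\tot$. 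The forward direction is immediate: were this total compensation not minimal, a cheaper allocation of $\tot$ would let every agent reach $\ubj$ using strictly less money in aggregate, and handing the money so saved to a single agent (whose money is unbounded above) would Pareto-dominate $(\bunnj)_{j\in J}$; the converse is the same computation reversed.

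With this dictionary both implications are short. For the ``if'' direction, fix a profile $(\ubj)_{j\in J}$; since each $\Feas{j}$ is finite and \andowalloc\ exists, the \dualecon\ for $(\ubj)_{j\in J}$ admits a welfare-maximizing allocation $(\bunj)_{j\in J}$ of $\tot$. Setting $\numerj=\cf{j}{\bunj}{\ubj}$ produces a Pareto-efficient allocation $(\bunnj)_{j\in J}$ by the dictionary, so the hypothesis supplies a price $\p$ with $\bunnj\in\dM{j}{\p}{\bunnj}$ for all $j$; by the second translation this gives $\bunj\in\dH{j}{\p}{\ubj}$, and since $\sum_{j\in J}\bunj=\tot$ the pair $(\p,(\bunj)_{j\in J})$ is a \ce\ of the \dualecon.

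For the ``only if'' direction, let $(\bunnj)_{j\in J}$ be Pareto-efficient with $\sum_{j\in J}\bunj=\tot$, put $\ubj=\util{j}{\bunnj}$, and take a \ce\ $(\p,(\hbunj)_{j\in J})$ of the \dualecon\ for $(\ubj)_{j\in J}$. Expenditure-minimality of each $\hbunj$ gives $\pall\cdot\hbunnj\le\pall\cdot\bunnj$ with $\hnumerj=\cf{j}{\hbunj}{\ubj}$; summing over $j$ and cancelling the goods terms (both allocations clear $\tot$) yields $\sum_{j\in J}\hnumerj\le\sum_{j\in J}\numerj$. I expect the crux of the argument to lie here: if this inequality were strict, redistributing the slack money to one agent would Pareto-dominate $(\bunnj)_{j\in J}$, contradicting efficiency, so it is an equality and every per-agent inequality is tight. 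Then $\bunnj$ is itself expenditure-minimizing for $\ubj$, so $\bunj\in\dH{j}{\p}{\ubj}$, and Fact~\ref{fac:dualDem}(b) delivers $\bunnj\in\dM{j}{\p}{\bunnj}$. The single point to watch throughout is the money bookkeeping---the \dualecons\ place no constraint on aggregate money, whereas Pareto efficiency holds the total money fixed---and it is exactly the slack-money redistribution that reconciles the two.
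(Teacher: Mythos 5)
Your proof is correct, and half of it follows a genuinely different route from the paper's. The ``if'' direction of your argument (price support of Pareto-efficient allocations implies existence of \ces\ in the \dualecons) coincides with the paper's proof of the \ref{cond:SWT}$\implies$\ref{cond:hicks} implication in Theorem~\ref{thm:existDualExchangeSWT}: construct an allocation of goods minimizing $\sum_{j \in J} \cf{j}{\bunj}{\ubj}$ subject to $\sum_{j\in J}\bunj=\tot$, check it is Pareto-efficient once each $\numerj$ is set to $\cf{j}{\bunj}{\ubj}$ (your ``dictionary'' is exactly Claim~\ref{cl:dualPE}), and translate the supporting price via Fact~\ref{fac:dualDem} and Lemma~\ref{lem:dHvalH}. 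The ``only if'' direction is where you diverge: the paper has no direct proof of \ref{cond:hicks}$\implies$\ref{cond:SWT}, but instead obtains it from the cycle \ref{cond:marshall}$\implies$\ref{cond:SWT}$\implies$\ref{cond:hicks}$\implies$\ref{cond:marshall}, so its route passes through the full existence theorem---the Kakutani fixed-point construction for \ref{cond:hicks}$\implies$\ref{cond:marshall} followed by the argument of \cite{maskin2008fundamental} for \ref{cond:marshall}$\implies$\ref{cond:SWT}. You instead take a \ce\ $\left(\p,(\hbunj)_{j\in J}\right)$ of the \dualecon\ at the induced utility profile $\ubj=\util{j}{\bunnj}$, sum the per-agent inequalities $\pall\cdot\hbunnj\le\pall\cdot\bunnj$, cancel the goods terms using market clearing, and invoke Pareto efficiency---redistributing any aggregate money saving to one agent, which is feasible because $\cf{j}{\hbunj}{\ubj}>\feas{j}$ and money consumption is unbounded above---to force agent-by-agent equality; tightness gives $\bunj\in\dH{j}{\p}{\ubj}$, and Fact~\ref{fac:dualDem}(b) then yields $\bunj\in\dM{j}{\p}{\bunnj}$. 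This is sound, and it buys something real: Lemma~\ref{lem:dualEconSWT} gets an elementary, self-contained proof that nowhere uses a fixed-point theorem, and it in fact establishes the stronger conclusion that \emph{every} \ce\ price vector of the relevant \dualecon\ supports the given Pareto-efficient allocation. What it does not replace is the fixed-point machinery itself, which the paper still needs for the \ref{cond:hicks}$\implies$\ref{cond:marshall} direction of Theorem~\ref{thm:existDualExchange}; the paper's cyclic organization simply economizes by extracting all six implications among \ref{cond:marshall}, \ref{cond:SWT}, and \ref{cond:hicks} from three arguments.
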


We prove Lemma~\ref{lem:dualEconSWT} in Appendix~\ref{app:EEDproof}.
Intuitively, as utility is transferable in the \dualecons, variation in utility levels between \dualecons\ plays that same role as endowment transfers in the Second Welfare Theorem. 
It is well-known that the conclusion of the Second Welfare Theorem holds whenever \ces\ exist for all \dowallocs\ \citep{maskin2008fundamental}.\footnote{While \cite{maskin2008fundamental} assumed that goods are divisible, their arguments apply even in the presence of indivisibilities---as we show in Appendix~\ref{app:EEDproof}.}
It follows that \ce\ always exists in the \dualecons\ whenever it always exists in the original economy, which is the ``only if" direction of Theorem~\ref{thm:existDualExchange}.

We use a different argument to prove the ``if" direction.
Our strategy is to show that there exists a profile of utility levels and a \ce\ in the corresponding \dualecon\ in which all agents' expenditures equal their budgets in the original economy.
To do so, we apply a topological fixed-point argument that is similar in spirit to standard proofs of the existence of \ce.
Specifically, we consider an auctioneer who, for a given profile of candidate equilibrium utility levels, evaluates agents' expenditures over all \ces\ in the \dualecon\ and adjusts candidate equilibrium utility levels upwards (resp.~downwards) for agents who under- (resp.~over-) spend their budgets.\footnote{This approach is similar in spirit to \citeposs{negishi1960welfare} proof of the existence of \ce\ with divisible goods.
\cite{negishi1960welfare} instead applied an adjustment process to the inverses of agents' marginal utilities of money.
However, \citeposs{negishi1960welfare} approach does not generally yield a convex-valued adjustment process in the presence of indivisibilities.}
The existence of \ce\ in the \dualecons\ ensures that the process is nonempty-valued, and the transferability of utility in the \dualecons\ ensures that the process is convex-valued.
Kakutani's Fixed Point Theorem implies the existence of a fixed-point utility profile.
By construction, there exists a \ce\ in the corresponding \dualecon\ at which agents' expenditures equal the values of their endowments.
By Lemma~\ref{lem:dHvalH}, agents must be maximizing utility given their endowments at this equilibrium, and hence once obtains a \ce\ in the original economy.
The details of the argument are in Appendix~\ref{app:EEDproof}.

\subsection{Examples}

We next illustrate the power of Theorem~\ref{thm:existDualExchange} using the two examples.

Our first example is a ``housing market'' in which agents have unit-demand preferences, may be endowed with a house, and can experience arbitrary income effects.
We can use Theorem~\ref{thm:existDualExchange} to reduce the existence problem to the assignment game of \cite{koopmans1957assignment}---reproving a result originally due to \cite{quinzii1984core}.

\begin{example}[A Housing Market---\citealp{quinzii1984core,gale1984equilibrium,svensson1984competitive}]
\label{eg:house}
For each agent $j,$ let $\Feas{j} \subseteq \{\zero\} \cup \{\e{i} \mid i \in I\}$ be nonempty.
In this case, in \dualecon, utility is transferable and agents have unit demand for the goods.
As the \dowalloc\ does not affect \ce\ when utility is transferable,
the results of \cite{koopmans1957assignment} imply that \ces\ exist in the \dualecons\ for all profiles of utility levels (provided that \andowalloc\ exists).
Hence, Theorem~\ref{thm:existDualExchange} implies that \ces\ exist for all endowment allocations---even in the presence of income effects.
\end{example}

In the second example, we revisit the quasilogarithmic utility functions from Example~\ref{eg:quasilog}.
We provide sufficient conditions on agents' quasivaluations for \ce\ to exist.
These conditions are related to, but not in general implied by, the conditions developed in Sections~\ref{sec:subst} and~\ref{sec:demTypes}.

\begin{example}[Existence of \CE\ with Quasilogarithmic Preferences]
\label{eg:quasilogExist}
For each agent $j,$ let $\quasivalFn{j}: \Feas{j} \to (-\infty,0)$ be a quasivaluation.
Let agent $j$'s utility function be quasilogarithmic for the quasivaluation $\quasivalFn{j}$, as in Example~\ref{eg:quasilog}. In this case, agent $j$'s \dualval\ at each utility level is a positive linear transformation of $\quasivalFn{j}$ (Example~\ref{eg:quasilogDualVal}).
Hence, by Theorem~\ref{thm:existDualExchange}, \ces\ exist for all \dowallocs\ as long as \ce\ exists when utility is transferable and each agent $j$'s valuation is an (agent-dependent) positive linear transformation of $\quasivalFn{j}$---e.g., if the quasivaluations $\quasivalFn{j}$ are all \ssubst\ valuations \citep{MiSt:09}, or all valuations of a unimodular demand type \citep{BaKl:19}.
Additionally, in the case in which one unit of each good is available in total (i.e., $\totComp{i} = 1$ for all goods $i$), \cite{CaOzPa:15} showed that \ce\ exists when utility is transferable and all agents have sign-consistent tree valuations.
Hence, if one unit of each good is available in total, then Theorem~\ref{thm:existDualExchange} implies that \ces\ exist with quasilogarithmic utility for all \dowallocs\ if all agents' quasivaluations are sign-consistent tree valuations.
\end{example}

In the remainder of the paper, we use Theorem~\ref{thm:existDualExchange} to develop novel conditions on preferences that ensure the existence of \ce.

\section{The Case of Substitutes}
\label{sec:subst}

In this section, we apply the Equilibrium Existence Duality (Theorem~\ref{thm:existDualExchange}) to prove a new result regarding the existence of \ce\ with substitutable indivisible goods and income effects: we show that a form of \emph{net} \substity\ is sufficient for, and in fact defines a maximal domain for, the existence of \ce.
We begin by reviewing previous results on the existence of \ce\ under (gross) \substity.
We then derive our existence theorem for \nsubstity\ and relate it to the previous results. 

In this section, we focus on the case in which each agent demands at most one unit of each good.
Formally, we say that an agent $j$ \emph{demands at most one unit of each good} if $\Feas{j} \subseteq \{0,1\}^I$.
We extend to the case in which agents can demand multiple units of some goods in Section~\ref{sec:ssub}.

\subsection{\GSubstity\ and the Existence of \CE}
\label{sec:subOld}

We recall a notion of \gsubstity\ for preferences over indivisible goods from \cite{FlJaJaTe:19}, which extends
the \gsubstity\ condition from classical demand theory.
It requires that \emph{uncompensated} increases in the price of a good weakly raise demand for all other goods.
With quasilinear utility, the modifier ``gross'' can be dropped---as in classical demand theory (see also Footnote 1 in \cite{KeCr:82}).

\begin{definition}[\GSubstity]
\label{def:gsub}
Suppose that agent $j$ demands at most one unit of each good.
\begin{enumerate}[label=(\alph*)]
\item \label{part:gsub}
A utility function $\utilFn{j}$ is a \emph{\gsubst\ utility function at endowment $\bundow \in \Feas{j}$ of goods} if for all money endowments $\numerdow > \feas{j}$, price vectors $\p$, and $\lambda > 0,$ whenever $\dM{j}{\p}{\bunndow} = \{\bun\}$ and $\dM{j}{\p + \lambda \e{i}}{\bunndow} = \{\bunpr\},$ we have that $\bunprComp{k} \ge \bunComp{k}$ for all goods $k \not= i$.\footnote{Our definition of \gsubstity\ holds the endowment of goods fixed, but, unlike \cite{FlJaJaTe:19}, imposes a condition at every feasible endowment of money.  Imposing the ``full substitutability in demand language'' condition from Assumption D.1 in Supplemental Appendix D of \cite{FlJaJaTe:19} at every money endowment is equivalent to our \gsubstity\ condition.}
\item A \emph{\subst\ valuation} is a valuation for which the corresponding quasilinear utility function is a \gsubst\ utility function.
\footnote{Note that \substity\ is independent of the endowment of goods as endowments do not affect the demands of agents with quasilinear utility functions.  Our definition of \substity\ coincides with \citeposs{KeCr:82} definition \citep*{DaKoLa:2003}.}
\end{enumerate}
\end{definition}

Technically, Definition~\ref{def:gsub} imposes a \substity\ condition on the locus of prices at which Marshallian demand is single-valued---following \cite{AuMi:02}, \cite{HaKoNiOsWe:11}, \cite{BaKl:19}, and \cite{FlJaJaTe:19}.\footnote{By contrast, \cite{KeCr:82} imposed a \gsubstity\ condition at all price vectors.
Imposing \citeposs{KeCr:82} condition at every money endowment leads to a strictly stronger condition than Definition~\ref{def:gsub}\ref{part:gsub} in the presence of income effects \citep{schlegel2018trading}.}

It is well-known that when utility is transferable, \ce\ exists under \substity.

\begin{fact}
\label{fac:subExist}
Suppose that utility is transferable and that \andowalloc\ exists.
If each agent demands at most one unit of each good and has a \subst\ valuation, then \ce\ exists.\footnote{Fact~\ref{fac:subExist} is a version of Theorem 1 in \cite{HaKoNiOsWe:11} for exchange economies and follows from Proposition 4.6 in \cite{BaKl:19}.  See \cite{KeCr:82} and \cite{GuSt:99} for earlier versions that assume that valuations are monotone.}
\end{fact}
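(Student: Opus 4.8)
The plan is to recognize \substity\ as a condition with a clean geometric signature in the \citet{BaKl:19} ``demand types'' framework and then invoke their existence result for unimodular demand types. Concretely, I would show that every \subst\ valuation belongs to a single demand type whose associated set of integer vectors is unimodular, and then apply the sufficiency direction of the Unimodularity Theorem (Theorem 4.3 of \cite{BaKl:19}; equivalently, this is the content of their Proposition 4.6). Because demands in a transferable utility economy depend only on prices, the total endowment $\tot$ is the only aggregate datum that matters, and the hypothesis that \andowalloc\ exists guarantees $\tot \in \sum_{j \in J} \Feas{j}$, so the conclusion is not vacuous.

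First I would pin down the demand-type vectors of a \subst\ valuation. For an agent with $\Feas{j} \subseteq \{0,1\}^I$, I would show that whenever demand $\dQL{j}{\p}$ is multivalued at a generic price $\p$ lying on a single wall of the region where demand is constant, any two demanded bundles differ by a vector of the form $\e{i} - \e{k}$ (a consumer swapping one good for another) or $\e{i}$ (a consumer taking up or dropping a single good against money). This is the geometric restatement of the \substity\ definition: an uncompensated rise in the price of good $i$ can only push demand toward other goods one unit at a time and cannot induce a bundled change across two distinct goods simultaneously. Thus the demand-type vector set is contained in
\[
\mathcal{D} = \{\, \pm \e{i} : i \in I \,\} \cup \{\, \e{i} - \e{k} : i, k \in I,\ i \ne k \,\}.
\]

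Second I would verify that $\mathcal{D}$ is unimodular, i.e.\ that every square submatrix of the matrix whose columns are the vectors of $\mathcal{D}$ has determinant in $\{0, \pm 1\}$. This is standard: the vectors $\e{i} - \e{k}$ are the incidence vectors of the (oriented) complete graph on vertex set $I$, whose incidence matrix is totally unimodular, and adjoining the unit vectors $\pm \e{i}$ preserves total unimodularity. Finally, I would apply the sufficiency direction of the Unimodularity Theorem of \cite{BaKl:19} to the profile $(\valFn{j})_{j \in J}$: since each $\valFn{j}$ is of the demand type $\mathcal{D}$ and $\mathcal{D}$ is unimodular, \ce\ exists for the total endowment $\tot$.

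The main obstacle is the first step---translating the economic \substity\ condition (a comparative-statics statement about single-valued Marshallian demands under a price perturbation) into the combinatorial statement that all demand-change directions lie in $\mathcal{D}$. The subtlety is handling prices at which demand is multivalued or at which several walls meet; one must argue that it suffices to check the condition wall-by-wall on the generic locus where demand is single-valued, which is exactly the locus on which Definition~\ref{def:gsub} is imposed. Once this characterization is in hand (it is precisely the equivalence established by \cite{BaKl:19}), the unimodularity computation and the appeal to their existence theorem are routine, and indeed this is why the result ``follows from Proposition 4.6 in \cite{BaKl:19}.'' An alternative, more self-contained route would bypass demand types entirely and run the \citet{KeCr:82} ascending-price (deferred-acceptance) process, using \substity\ to show the induced t\^{a}tonnement is monotone and converges to an equilibrium price vector; I would keep this in reserve as a fallback if a direct argument were preferred.
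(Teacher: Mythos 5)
Your proposal is correct and is essentially the paper's own route: the paper states this result as a known fact, citing Proposition 4.6 of \cite{BaKl:19}, and in Section~\ref{sec:demTypes} it derives the same conclusion exactly as you propose---\subst\ valuations of agents who demand at most one unit of each good are precisely the valuations of the strong substitutes \dtvs\ (Theorems 2.1 and 2.4 of \cite{fujishige2003note}), that vector set is unimodular \citep{poincare1900second}, and existence then follows from the sufficiency direction of the Unimodularity Theorem. The one hypothesis you never verify is concavity: the Unimodularity Theorem requires \emph{concave} valuations of demand type $\mathcal{D}$, and without that hypothesis the appeal would be invalid, since non-concave valuations of a unimodular demand type can preclude the existence of \ce\ (cf.~Fact~\ref{fac:unimodMaxDomain}). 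The omission is harmless here---when $\Feas{j} \subseteq \{0,1\}^I$, every point of $\conv\left(\dQL{j}{\p}\right) \cap \Z^I$ is an extreme point of $\conv\left(\dQL{j}{\p}\right)$, because 0-1 vectors are vertices of the unit cube, and hence lies in $\dQL{j}{\p}$, so every such valuation is automatically concave---but this one-line check should be made explicit when invoking the theorem.
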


Moreover, the class of \subst\ valuations forms a maximal domain for the existence of \ce\ in transferable utility economies.
Specifically, if an agent has a non-\subst\ valuation, then \ce\ may not exist when the other agents have \subst\ valuations.
Technically, we require that one unit of each good be present among agents' endowments (i.e., that $\totComp{i} = 1$ for all goods $i$) as complementarities between goods that are not present are irrelevant for the existence of \ce.

\begin{fact}
\label{fac:subMaxDomain}
Suppose that $\totComp{i} = 1$ for all goods $i$.
If $|J| \ge 2$, agent $j$ demands at most one unit of each good, and $\valFn{j}$ is not a \subst\ valuation, then there exist sets $\Feas{k} \subseteq \{0,1\}^I$ of feasible bundles and \subst\ valuations $\valFn{k}: \Feas{k} \to \mathbb{R}$ for agents $k \not= j$, for which there exists \andowalloc\ but no \ce.\footnote{Fact~\ref{fac:subMaxDomain} is a version of Theorem 2 in \cite{GuSt:99} and Theorem 4 in \cite{yang2017maximal} that applies when $\Feas{k}$ can be strictly contained in $\{0,1\}^I$, as well as a version of Theorem 7 in \cite{HaKoNiOsWe:11} for exchange economies.
For sake of completeness, we give a proof of Fact~\ref{fac:subMaxDomain} in Appendix~\ref{app:maxDomain}.
The proof shows that the statement would hold if $|J| \geq |I|$ and agents $k \not= i$ were restricted to unit-demand valuations---as in Theorem 2 in \cite{GuSt:99}.}
\end{fact}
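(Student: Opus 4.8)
The plan is to adapt the maximal\=/domain arguments of \cite{GuSt:99}, \cite{yang2017maximal}, and \cite{HaKoNiOsWe:11} to our exchange\=/economy setting, in which $\totComp{i}=1$ for every good $i$ and the feasible sets $\Feas{k}$ may be strictly contained in $\{0,1\}^I$. The underlying idea is this: if agent $j$'s valuation is not a \subst\ valuation, then $j$ exhibits a \emph{complementarity}---a strict preference for acquiring (or relinquishing) two goods together that cannot be reproduced by single\=/good adjustments---and I would design \subst\ valuations for the remaining agents so that, at the only candidate prices at which all other markets can clear, agent $j$ is forced to hold a bundle it does not demand, ruling out \ce.

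First I would convert the failure of \substity\ into a concrete witness using \citeposs{GuSt:99} single\=/improvement characterization (equivalently, in the geometry of \citeposs{BaKl:19} demand types, a facet of $j$'s locus of indifference prices whose primitive normal has two coordinates of the same sign). The characterization provides a price vector $\p$ and a bundle $\bun \notin \dQL{j}{\p}$ such that no bundle obtained from $\bun$ by adding, dropping, or swapping a single good is strictly preferred at $\p$, even though $\bun$ is itself not demanded; hence some strictly preferred bundle $\bunpr \in \dQL{j}{\p}$ differs from $\bun$ in at least two goods. This two\=/good discrepancy is the complementarity I would exploit.

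Next I would pin down the remaining goods and their prices so that market clearing forces $\bun$ onto agent $j$. Concretely, I would give the other agents \subst\ valuations: in the refined version, one unit\=/demand agent per good, which needs $|J| \ge |I|$; for the stated bound $|J| \ge 2$, I would instead consolidate these into a single agent $k \ne j$ whose valuation is additively separable across goods with a large value on each good, so that $\valFn{k}$ remains a \subst\ valuation and $\Feas{k}$ can be chosen to contain $\tot-\bun$ (guaranteeing that \andowalloc\ exists, since $\tot=\mathbf{1}$). By choosing these values so that at $\p$ every good in $\bun$ is marginally cheaper for $j$ than for the other agents while every good outside $\bun$ is marginally dearer, the only way to clear all markets except through $j$'s own choice is for $j$ to hold exactly $\bun$. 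Since $\bun \notin \dQL{j}{\p}$, agent $j$ instead demands $\bunpr$; and because $\bunpr$ differs from $\bun$ in two or more goods while the other agents have only single\=/unit \subst\ demand, no price adjustment can absorb $j$'s deviation and re\=/clear the markets. This delivers the non\=/existence of \ce.

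The main obstacle is making the blocking construction watertight at the boundary prices, where agents' demands are set\=/valued: I must show that \emph{no} selection from the demand correspondences---including at the threshold price $\p$---sums to $\tot$. The local\=/maximality conclusion of the single\=/improvement failure is exactly what blocks single\=/good reallocations, while the two\=/good discrepancy blocks the remaining configurations; assembling these into a complete case analysis over the finitely many price cells is where the real work lies. Two further points require care in our setting: verifying that allowing $\Feas{k} \subsetneq \{0,1\}^I$ does not create spurious clearing, and checking that the consolidated single agent genuinely reproduces the per\=/good blocking of the unit\=/demand construction.
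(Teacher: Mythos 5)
Your proposal has two genuine gaps, and they are precisely the places where the paper's proof does its real work.

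First, the global part of the argument---showing that \emph{no} price vector supports a \ce, not merely the witness price $\p$---is missing. You defer it to ``a complete case analysis over the finitely many price cells,'' but that case analysis is the entire difficulty, and for an arbitrary non-\subst\ valuation $\valFn{j}$ it is intractable to carry out directly. The paper avoids it with a tool your outline never invokes: the pseudo-equilibrium lemma (Fact~\ref{fac:pseudoEquil}, from \citealp{MiSt:09}), which says that if any \ce\ exists for the given total endowment, then \emph{every} pseudo-equilibrium price vector supports one. This reduces the whole problem to exhibiting a single price $\p$ at which $\tot$ lies in $\conv\bigl(\sum_{j'} \dQL{j'}{\p}\bigr)$ but not in $\sum_{j'} \dQL{j'}{\p}$. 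Relatedly, your witness---a bundle $\bun \notin \dQL{j}{\p}$ surviving the single-improvement test---is too weak to feed into such an argument: nothing guarantees $\bun \in \conv\left(\dQL{j}{\p}\right)$, so your $\p$ need not even be a pseudo-equilibrium price. The paper instead extracts its witness from Fact~\ref{fac:subsDemCplx}, obtaining a price at which $j$'s demand is \emph{exactly} $\{\bunpr, \bunpr+\norm\}$ with $\norm$ having two components of the same sign; this exact two-point structure is what makes the convex-hull computation go through.

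Second, your consolidated agent $k$ does not block equilibrium. You propose an additively separable valuation with large per-good values, with $\Feas{k}$ chosen only to make \andowalloc\ feasible. But separability across goods is exactly what must be avoided on the two goods where $j$'s complementarity lives: a separable $k$ can absorb those two goods independently, and equilibrium typically exists. Concretely, take $j$ with $\val{j}{(1,1)} = 5$ and $\val{j}{\bun} = 0$ otherwise (the paper's Example~\ref{eg:grossVersusNetNumerical}\ref{eg:numericalComp}), and $k$ separable on $\{0,1\}^2$ with values $4$ and $3$: at the price vector $(4,3)$, agent $k$ is indifferent on both goods, $j$ demands $(0,0)$, and the allocation in which $k$ takes $(1,1)$ clears the market---a \ce. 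The paper's construction kills this by restricting $k$'s feasible set to $\{\bunn \in \{0,1\}^{I} : x_1 + x_2 \le 1\}$, i.e., building unit demand over the complementary pair into the \emph{domain}, and then using a linear valuation on that domain. The footnote's refined claim (unit-demand agents, $|J| \ge |I|$) that you cite as motivation works for the same reason; dropping the unit-demand structure on the pair $\{1,2\}$, as your consolidation does, loses the theorem.
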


While Fact~\ref{fac:subMaxDomain} shows that there is no domain strictly containing the domain of \subst\ valuations for which the existence of \ce\ can be guaranteed in transferable utility economies, it does not rule out the existence of other domains for which the existence of \ce\ can be guaranteed.
For example, \cite{SuYa:06}, \cite{CaOzPa:15}, and \cite{BaKl:19} gave examples of domains other than \substity\ for which the existence of \ce\ is guaranteed.

Generalizing Fact~\ref{fac:subExist} to settings with income effects, \cite{FlJaJaTe:19} showed that \ce\ exists for \andowalloc\ $(\bunndowj)_{j \in J}$ if each agent $j$'s utility function is a \gsubst\ utility function at her endowment $\bundowj$ of goods.\footnote{\cite{FlJaJaTe:19} worked with a matching model and considered equilibrium with personalized pricing, but their arguments also apply in exchange economies without personalized pricing.
However, \cite{FlJaJaTe:19} only required that each agent sees goods as \gsubst\ for a fixed endowment of goods and money.
Our notion of \gsubstity\ considers a fixed endowment of goods but a variable endowment of money, and therefore the existence result of \cite{FlJaJaTe:19} is not strictly a special case of Theorem~\ref{thm:netSubExist}.
Moreover, \cite{FlJaJaTe:19} also allowed for frictions such as transaction taxes and commissions in their existence result.}
However, \cite{FlJaJaTe:19} did not offer a maximal domain result for \gsubstity.
In the next section, we show that \gsubstity\ does not actually drive existence of \ce\ with substitutable indivisible goods.

\subsection{\NSubstity\ and the Existence of \CE}
\label{sec:nsub}

In light of Theorem~\ref{thm:existDualExchange} and Fact~\ref{fac:subExist}, \ce\ exists if agents' Hicksian demands satisfy an appropriate \substity\ condition---i.e., if preferences satisfy a \emph{net} analogue of \substity.

We build on Definition~\ref{def:gsub} to define a concept of \nsubstity\ for settings with indivisibilities.
\Nsubstity\ is a version of the \nsubstity\ condition from classical consumer theory.
It requires that \emph{compensated} increases in the price of a good (i.e., price increases that are offset by compensating transfers) weakly raise demand for all other goods.

\begin{definition}[\NSubstity]
\label{def:nsub}
Suppose that agent $j$ demands at most one unit of each good.
A utility function $\utilFn{j}$ is a \emph{\nsubst\ utility function} if for all utility levels $\ub$, price vectors $\p$, and $\lambda > 0,$ whenever $\dH{j}{\p}{\ub} = \{\bun\}$ and $\dH{j}{\p + \lambda \e{i}}{\ub} = \{\bunpr\}$, we have that $\bunprComp{k} \ge \bunComp{k}$ for all goods $k \not= i$.
\end{definition}

For quasilinear utility functions, \nsubstity\ coincides with (gross) \substity. 
More generally, \nsubstity\ can be expressed as a condition on \dualvals.

\begin{remark}
\label{rem:netSubDual}
By Lemma~\ref{lem:dHvalH}, if an agent demands at most one unit of each good, then she has a \nsubst\ utility function if and only if her \dualvals\ at all utility levels are \subst\ valuations.
\end{remark}

We can apply Fact~\ref{fac:dualPrefs} and Remark~\ref{rem:netSubDual} to construct large classes of \nsubst\ preferences with income effects from families of \subst\ valuations.
There are several rich families of \subst\ valuations, including endowed assignment valuations \citep{HaMi:05} and matroid-based valuations \citep{ostrovsky2015gross}.
This leads to a large class of quasilogarithmic \nsubst\ utility functions.

\begin{example}[Example~\ref{eg:quasilog} continued]
\label{eg:quasilogSubs}
A quasilogarithmic utility function $\utilFn{j}$ is a \nsubst\ utility function if and only if the quasivaluation $\quasivalFn{j}$ is a \subst\ valuation.\footnote{Indeed, recall that Example~\ref{eg:quasilogDualVal} tells us that agent $j$'s \dualval\ at each utility level is a positive linear transformation of $\quasivalFn{j}$.
The conclusion follows by Remark~\ref{rem:netSubDual}.}
\end{example}

More generally, in light of Fact~\ref{fac:dualPrefs} and Remark~\ref{rem:netSubDual}, each family of \subst\ valuations leads to a class of \nsubst\ utility functions with income effects consisting of the utility functions whose \dualvals\ all belong to the family.
These classes are defined by conditions on substitution effects and do not restrict income effects.
By contrast, \gsubstity\ places substantial restrictions on the form of income effects.\footnote{See Remark E.1 in the Supplemental Material of \cite{FlJaJaTe:19}).}

To understand the difference between gross and \nsubstity, we compare the conditions in a setting in which agents have unit demand for goods.

\begin{example}[Example~\ref{eg:house} continued]
\label{eg:houseGrossNet}
Consider an agent, Martine, who owns a house $i_1$ and is considering selling it to purchase (at most) one of houses $i_2$ and $i_3$.
If Martine experiences income effects,
then her choice between $i_2$ and $i_3$ generally depends on the price she is able to procure for her house $i_1$.
For example, if $i_3$ is a more luxurious house than $i_2,$ then Martine may only demand $i_3$ if the value of her endowment is sufficiently large---i.e., if the price of her house $i_1$ is sufficiently high.
As a result, when Martine is endowed with $i_1$, she does not generally have \gsubst\ preferences: increases in the price of $i_1$ can lower Martine's demand for $i_2$.
That is, Martine can regard $i_2$ as a gross complement for $i_1$.
In contrast, Martine has \nsubst\ preferences---no compensated increase in the price of $i_1$ could make Martine stop demanding $i_2$---a condition that holds generally in the housing market economy.\footnote{\citet[Example 2]{DaKoMu:01} also showed the connection between \citeposs{quinzii1984core} housing market economy and a \substity\ condition, but formulated their discussion in terms of the shape of the convex hull at domains at which demand is multi-valued instead of \nsubstity.  Their discussion is equivalent to ours by Corollary 5 in \cite*{DaKoLa:2003} and Remark~\ref{rem:netSubDual}.}
Note also that, unlike \nsubstity, \gsubstity\ generally depends on endowments: if Martine were not endowed a house, she would have \gsubst\ preferences \citep{kaneko1982central,kaneko1983housing,DeGa:85}.
\end{example}

While Example~\ref{eg:houseGrossNet} shows that \nsubstity\ does not imply \gsubstity, it turns out that \gsubstity\ implies \nsubstity.

\begin{proposition}
\label{prop:ssubst}
If agent $j$ demands at most one unit of each good and there exists an endowment $\bundow$ of goods at which $\utilFn{j}$ is a \gsubst\ utility function, then $\utilFn{j}$ is \nsubst\ utility function.
\end{proposition}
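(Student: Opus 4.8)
The plan is to reduce the claim to a statement about the agent's \dualvals\ and then transfer the \gsubstity\ structure from Marshallian to Hicksian demand. By Remark~\ref{rem:netSubDual} (which rests on Lemma~\ref{lem:dHvalH}), $\utilFn{j}$ is a \nsubst\ utility function if and only if, for every utility level $\ub$, the \dualval\ $\valHDef{j}$ is a \subst\ valuation. Since the quasilinear demand of this valuation at a price $\p$ is exactly $\dH{j}{\p}{\ub}$, it suffices to fix an arbitrary $\ub$ and verify the condition of Definition~\ref{def:gsub} for the correspondence $\p \mapsto \dH{j}{\p}{\ub}$.

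First I would relate Hicksian demand at utility level $\ub$ to Marshallian demand at the endowment $\bundow$ of goods where \gsubstity\ is assumed. Given a price $\tp$, Fact~\ref{fac:dualDem}(b) shows that if the money endowment $\numerdow$ is chosen so that the endowment $(\numerdow,\bundow)$ has value equal to the minimum expenditure of reaching $\ub$ at $\tp$, then $\dM{j}{\tp}{(\numerdow,\bundow)} = \dH{j}{\tp}{\ub}$, so the two demand sets coincide. The tempting move is to fix one such endowment and apply \gsubstity\ directly across the prices $\p$ and $\p+\lambda\e{i}$. This fails: the money endowment that makes the induced utility level equal $\ub$ depends on the price (because the minimum expenditure does), so no single endowment realizes $\dH{j}{\cdot}{\ub}$ as Marshallian demand at both prices simultaneously.

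The way around this is to argue \emph{locally}, at a price $\tp$ on the segment $[\p,\p+\lambda\e{i}]$ at which $\dH{j}{\tp}{\ub}$ is multi-valued. Tuning $\numerdow$ to this single price as above, I obtain a \emph{fixed} endowment $(\numerdow,\bundow)$ at which the Marshallian demand set equals the common demand set $D = \dH{j}{\tp}{\ub}$. By upper hemicontinuity and finiteness of $\Feas{j}$, the Marshallian demand at nearby prices lies in $D$, so as the price crosses $\tp$ the Marshallian demand can only jump along edges of $\conv(D)$; \gsubstity\ at $\bundow$ forces every such edge to point in a \subst\ direction. But the Hicksian demand $\dH{j}{\cdot}{\ub}$ has the \emph{same} demand set $D$ at $\tp$ (indeed it is literally the demand of the valuation $\valHDef{j}$ by Lemma~\ref{lem:dHvalH}), and its jumps across $\tp$ run along the same edges of $\conv(D)$. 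Hence those jumps are also in \subst\ directions. As $\tp$ ranges over all critical prices, this yields the \substity\ of $\valHDef{j}$, and therefore the \nsubstity\ of $\utilFn{j}$.

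The crux, and the step I expect to be the main obstacle, is precisely this reconciliation of the two notions of compensation. Holding the endowment fixed while the price of good $i$ varies changes the agent's wealth through revaluation of $\bundow$, whereas holding the utility level fixed compensates wealth so as to keep utility constant; the two differ by an income effect proportional to $\bundowComp{i}-\bunComp{i}$. By matching the demand \emph{sets} at the critical price $\tp$ rather than trying to match the demand \emph{changes} away from it, I expect this income-effect discrepancy to be confined to higher order, so that it does not alter the edge directions of $\conv(D)$, which are what the \subst\ condition constrains. The remaining technical point is to ensure that the money endowment produced by Fact~\ref{fac:dualDem}(b) satisfies the feasibility requirement $\numerdow > \feas{j}$ so that \gsubstity\ at $\bundow$ actually applies; this must be checked (or arranged by restricting attention to the prices at which the \substity\ condition has content), and handling it carefully is where the argument will require the most care.
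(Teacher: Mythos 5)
Your plan reproduces the architecture of the paper's own proof (which runs the same construction in contrapositive form, via its Claim~\ref{cl:marshallNetWeak}): reduce the proposition via Remark~\ref{rem:netSubDual} to showing that every \dualval\ $\valHDef{j}$ is a \subst\ valuation; reduce further, via the characterization of \subst\ valuations in Fact~\ref{fac:subsDemCplx}, to prices $\tp$ at which Hicksian demand is a two-element set $\{\bun,\bunpr\}$ (this doubleton reduction is cleaner than your ``edges of $\conv(D)$'' formulation, which for $|D|>2$ would need extra work to relate single-coordinate price perturbations to all edges of the demand polytope); tune the money endowment so that, with $\bunndow=(\defbunndow)$, Fact~\ref{fac:dualDem} gives $\dM{j}{\tp}{\bunndow}=\dH{j}{\tp}{\ub}=\{\bun,\bunpr\}$; confine Marshallian demand at nearby prices to $\{\bun,\bunpr\}$ by upper hemicontinuity; and finally perturb the price in a coordinate $i$ with $\bunComp{i}\neq\bunprComp{i}$ and invoke \gsubstity. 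Up to this point you and the paper coincide.

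The genuine gap is the step you label the crux and then settle by heuristic. Definition~\ref{def:gsub} constrains demand only at pairs of prices where Marshallian demand is \emph{single-valued}, so you must actually prove that $\dM{j}{\tp-\mu\e{i}}{\bunndow}=\{\bun\}$ and $\dM{j}{\tp+\mu\e{i}}{\bunndow}=\{\bunpr\}$ (say with $\bunComp{i}=1>0=\bunprComp{i}$). Your justification---that the discrepancy between endowment revaluation and utility-level compensation is ``confined to higher order'' and so cannot alter edge directions---is not a proof, and it misidentifies the mechanism: the income effect here is first order. When $\bundowComp{i}=1$, a rise in the price of good $i$ leaves the utility attainable by choosing $\bun$ unchanged (its net trade in good $i$ is zero) and \emph{strictly raises} the utility attainable by choosing $\bunpr$, entirely through revaluation of the endowment; in that case the separation is driven by a pure income effect. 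What makes the jump come out the right way in every case is not smallness of income effects but the unit-demand/unit-endowment structure: since $\bun,\bunpr,\bundow\in\{0,1\}^I$ and $\bunComp{i}\neq\bunprComp{i}$, exactly one of the two tied bundles has a nonzero net trade in good $i$, so exactly one attainable utility moves---strictly and monotonically, by strict monotonicity of $\utilFn{j}$ in money---as the price of $i$ moves, pinning demand to $\{\bun\}$ at $\tp-\mu\e{i}$ and to $\{\bunpr\}$ at $\tp+\mu\e{i}$ whether $\bundowComp{i}=0$ or $\bundowComp{i}=1$. This case analysis is essentially the entire content of the paper's proof, and it is exactly where the hypothesis that $j$ demands at most one unit of each good enters; with multi-unit demands or endowments both attainable utilities can move and the jump can go the wrong way, so no argument that never invokes unit demand can close this step. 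Two secondary points: the forward version of the argument must be run in two coordinates (one where $\bunpr-\bun$ is negative, one where it is positive) to obtain both halves of the condition in Fact~\ref{fac:subsDemCplx}; and your concern about ensuring $\numerdow>\feas{j}$ is legitimate---it is automatic when $\feas{j}=-\infty$, and the paper's own proof passes over it---but it is minor compared with the missing separation argument.
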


Proposition~\ref{prop:ssubst} and Example~\ref{eg:houseGrossNet} show that \gsubstity\ (at any one endowment of goods) implies \nsubstity\ but places additional restrictions on income effects.
Nevertheless, the restrictions on substitution effects alone, entailed by \nsubstity, are sufficient for the existence of \ce.

\begin{theorem}
\label{thm:netSubExist}
If all agents demand at most one unit of each good and have \nsubst\ utility functions, then \ces\ exist for all \dowallocs.
\end{theorem}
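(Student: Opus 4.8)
The plan is to deploy the Equilibrium Existence Duality (Theorem~\ref{thm:existDualExchange}) to reduce the claim to the transferable-utility case, where the classical existence result for \subst\ (Fact~\ref{fac:subExist}) applies. The conceptual content is already packaged in Remark~\ref{rem:netSubDual}, which converts \nsubstity\ into a statement about the \dualvals; what remains is to assemble the pieces and check the hypotheses of the results I invoke.

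First I would dispose of a degenerate case: if the total endowment and the sets of feasible bundles are such that no \dowalloc\ exists, then the conclusion holds vacuously, so assume that \andowalloc\ exists. This is precisely the hypothesis under which Theorem~\ref{thm:existDualExchange} applies, and its ``if'' direction tells us that \ces\ exist for all \dowallocs\ provided \ces\ exist in the \dualecons\ for every profile of utility levels. I would therefore fix an arbitrary profile $(\ubj)_{j \in J}$ of utility levels and show that the corresponding \dualecon\ admits a \ce. In that \dualecon, utility is transferable and agent $j$'s valuation is $\valH{j}{\cdot}{\ubj}$. Since each agent demands at most one unit of each good and has a \nsubst\ utility function, Remark~\ref{rem:netSubDual} gives that $\valH{j}{\cdot}{\ubj}$ is a \subst\ valuation. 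Moreover $\Feas{j} \subseteq \{0,1\}^I$, so each agent in the \dualecon\ still demands at most one unit of each good, and \andowalloc\ continues to exist. Fact~\ref{fac:subExist} then yields a \ce\ in this \dualecon. As the profile was arbitrary, \ces\ exist in the \dualecons\ for all profiles of utility levels, and the ``if'' direction of Theorem~\ref{thm:existDualExchange} delivers \ces\ for all \dowallocs.

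I do not expect a genuine obstacle here: given the duality and Remark~\ref{rem:netSubDual}, the theorem is essentially a corollary of the transferable-utility existence result. The only points demanding any care are the vacuous case in which no \dowalloc\ exists and the bookkeeping that the single-unit-demand and \substity\ hypotheses required by Fact~\ref{fac:subExist} hold uniformly across \emph{all} \dualecons---which is exactly what Remark~\ref{rem:netSubDual} guarantees, since it asserts the \dualvals\ are \subst\ at \emph{every} utility level rather than merely at some level.
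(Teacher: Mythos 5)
Your proposal is correct and follows essentially the same route as the paper's proof: Remark~\ref{rem:netSubDual} converts \nsubstity\ into \substity\ of the \dualvals, Fact~\ref{fac:subExist} then gives \ces\ in every \dualecon, and the ``if'' direction of Theorem~\ref{thm:existDualExchange} finishes the argument. The extra bookkeeping you include (the vacuous case with no \dowalloc, and the check that unit demand persists in the \dualecons) is implicit in the paper's three-line proof but does not change the argument.
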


Theorem~\ref{thm:netSubExist} is an immediate consequence of the Equilibrium Existence Duality and the existence of \ces\ in transferable utility economies under \substity.

\begin{proof}
Remark~\ref{rem:netSubDual} implies that the agents' \dualvals\ at all utility levels are \subst\ valuations.
Hence, Fact~\ref{fac:subExist} implies that \ces\ exist in the \dualecons\ for all profiles of utility levels if \andowalloc\ exists.
The theorem follows by the ``if'' direction of Theorem~\ref{thm:existDualExchange}.
\end{proof}

As \gsubstity\ implies \nsubstity\ (Proposition~\ref{prop:ssubst}), the existence of \ce\ under \gsubstity\ is a special case of Theorem~\ref{thm:netSubExist}.
But Theorem~\ref{thm:netSubExist} is more general: as Example~\ref{eg:houseGrossNet} shows, \nsubstity\ allows for forms of gross complementarities between goods, in addition to \gsubstity. 
%
%
The following example illustrates how the distinction between \gsubstity\ and \nsubstity\ relates to the existence of \ce\ when agents caan demand multiple goods.

\begin{example}[\GSubstity\ versus \NSubstity\ and the Existence of \CE]
\label{eg:grossVersusNetNumerical}
There are two goods and the total endowment is $\tot = (1,1)$.
There are two agents, which we call $j$ and $k,$ and $j$'s feasible set of consumption bundles of goods is $\Feas{j} = \{0,1\}^2$.

We consider the price vectors $\p = (2,2)$ and $\ppr = (4,2)$ and consider two examples in which agent $j$'s Marshallian demand changes from $(1,1)$ to $(0,0)$ as prices change from $\p$ to $\ppr$---a \gcompity.
But the consequences for the existence of \ce\ are different across the two cases.
In Case~\ref{eg:numericalComp}, the \gcompity\ reflects a net complementarity for $j$, and \ce\ may not exist if $k$ sees goods as net substitutes.
In Case~\ref{eg:numericalLog}, the \gcompity\ reflects only an income effect for $j$, as in Example~\ref{eg:houseGrossNet}, so \ce\ is guaranteed to exist if $k$ sees goods as net substitutes.
\begin{enumerate}[label=(\alph*),wide]
\item \label{eg:numericalComp} Suppose that $j$ has a quasilinear utility function with valuation given by
\[\val{j}{\bun} = \begin{cases}
0 & \text{if } \bun = (0,0),(0,1),(1,0)\\
5 & \text{if } \bun = (1,1).
\end{cases}\]
Here, $\valFn{j}$ is not a \subst\ valuation because $\dQL{j}{\p} = \{(1,1)\}$ while $\dQL{j}{\ppr} = \{(0,0)\}$: i.e., increasing the price of the first good can lower $j$'s demand for the second good.
If $\Feas{k} = \{(0,0),(0,1),(1,0)\}$ and agent $k$ has a quasilinear utility function with a \subst\ valuation given by
\begin{equation}
\label{eq:kUnitDem}
\val{k}{\bun} = \begin{cases}
0 & \text{if } \bun = (0,0)\\
4 & \text{if } \bun = (1,0)\\
3 & \text{if } \bun = (0,1),
\end{cases}
\end{equation}
then no \ce\ exists.\footnote{The existence of a feasible set of bundles of goods and a \subst\ valuation for $k$ for which no \ce\ exists follows from Fact~\ref{fac:subMaxDomain}.  To check that $\valFn{k}$ is an example of such a valuation, suppose, for sake of deriving a contradiction, that $(\bunj,\bunag{k})$ is the allocation of goods in a \ce.
The First Welfare Theorem implies that $\bunj = (1,1)$ and that $\bunag{k} = (0,0)$.
But for agent $j$ to demand $(1,1),$ the equilibrium prices would have to sum to at most 5, while for agent $k$ to demand $(0,0)$, the equilibrium prices would both have to be at least 3---a contradiction.
Hence, we can conclude that no \ce\ exists.}
\item \label{eg:numericalLog} Suppose instead that $\utilFn{j}$ is quasilogarithmic (as defined in Example~\ref{eg:quasilog}) with quasivaluation given by
\[\quasival{j}{\bun} = \begin{cases}
-11 & \text{if } \bun = (0,0)\\
-7 & \text{if } \bun = (0,1)\\
-4 & \text{if } \bun = (1,0)\\
-1 & \text{if } \bun = (1,1).
\end{cases}\]
At the endowment $\bundowj = (0,1)$ of goods, $\utilFn{j}$ is not a \gsubst\ utility function as, letting $\numerdowj = 3,$ we have that $\dM{j}{\p}{\bunndowj} = \{(1,1)\}$ while $\dM{j}{\ppr}{\bunndowj} = \{(0,0)\}$.\footnote{\label{fn:evalU}To show this, note that $\numerdowj - \ppr \cdot ((1,1) - \bundowj) = -1,$ so it would violate $j$'s budget constraint to demand $(1,1)$ at the price vector $\ppr$.
For the other bundles, note that
\[\begin{array}{c|c|c|c|c}
\bun & (0,0) & (0,1) & (1,0) & (1,1)\\ \hline
\util{j}{\numerdowj - \p \cdot (\bun - \bundowj),\bun} & \log \frac{5}{11} & \log \frac{3}{7} & \log \frac{3}{4} & \log 1\\ \hline
\util{j}{\numerdowj - \ppr \cdot (\bun - \bundowj),\bun} & \log \frac{5}{11} & \log \frac{3}{7} & \log \frac{1}{4} & \text{undef.,}
\end{array}\]
so $\dM{j}{\p}{\bunndowj} = \{(1,1)\}$ and $\dM{j}{\ppr}{\bunndowj} = \{(0,0)\}$.
}
That is, increasing the price of the first good can lower $j$'s Marshallian demand for the second good.
%
By contrast, as $\quasivalFn{j}$ is a \subst\ valuation, Example~\ref{eg:quasilogSubs} implies that
$\utilFn{j}$ is a \nsubst\ utility function: the \gcompity\ is entirely due to an income effect.
For example, at the utility level
\[\ub = \max_{\bunn \in \Feans{j} \mid \pprall \cdot \bunn \le \pprall \cdot \bunndowj} \util{j}{\bunn} = \log \frac{5}{11},\]
we have that $\dH{j}{\p}{u} = \{(1,0)\}$ and that $\dH{j}{\ppr}{\ub} = \{(0,0)\}$,\footnote{The expressions for $\dH{j}{\p}{\ub}$ and $\dH{j}{\ppr}{\ub}$ hold because agent $j$'s Hicksian valuation at utility level $\ub$ is $\frac{5}{11}$ times the quasivaluation $\quasivalFn{j}$ (by Example~\ref{eg:quasilogDualVal}).
%
}
so the decrease in the Marshallian demand for the second good as prices change from $\p$ to $\ppr$ at the endowment $\bunndowj$ reflects an income effect.
By Theorem~\ref{thm:netSubExist}, \ce\ exists whenever $k$ has a \nsubst\ utility function.
For example, if $k$ has a quasilinear utility function with a \subst\ valuation given by Equation (\ref{eq:kUnitDem}), then for the \dowalloc\ defined by $\bundowj = (0,1)$, $\bundowag{k} = (1,0)$, and $\numerdowj = \numerdowag{k} = 3,$ the price vector $(3,2)$ and the allocation of goods defined by $\bunj = (1,0)$ and $\bunag{k} = (0,1)$ comprise a \ce.\footnote{To show this, let $\hp = (3,2)$.
It is clear that $(0,1) \in \dQL{k}{\hp}$.
It remains to show that $(1,0) \in \dM{j}{\hp}{\bunndowj}$.
Note that $\numerdowj - \hp \cdot ((1,1) - \bundowj) = 0,$ so it would violate $j$'s budget constraint to demand $(1,1)$ at the price vector $\hp$.
For the other bundles, note that
\[
\util{j}{\numerdowj - \hp \cdot (\bun - \bundowj),\bun} = \begin{cases}
\log \frac{5}{11} & \text{if } \bun = (0,0)\\
\log \frac{3}{7} & \text{if } \bun = (0,1)\\
\log \frac{1}{2} & \text{if } \bun = (1,0),
\end{cases}
\]
so $\dM{j}{\hp}{\bunndowj} = \{(1,0)\}$.}
\end{enumerate}
\indent

In Case~\ref{eg:numericalLog}, agent $j$ has \nsubst\ preferences---leading to the guaranteed existence of \ce\ when agent $k$ has \nsubst\ preferences.
By contrast, in Case~\ref{eg:numericalComp}, agent $j$ does not have \nsubst\ preferences---and \ce\ may not exist when $k$ has \nsubst\ preferences.
\end{example}

In general, 
net substitutability forms a maximal domain for the existence of \ce. 
Specifically, if an agent does not have \nsubst\ preferences, then \ce\ may not exist when the other agents have \subst\ quasilinear preferences.

\begin{proposition}
\label{prop:netSubstMaxDomain}
Suppose that $\totComp{i} = 1$ for all goods $i$.
If $|J| \ge 2$, agent $j$ demands at most one unit of each good, and $\utilFn{j}$ is not a \nsubst\ utility function, then there exist sets $\Feas{k} \subseteq \{0,1\}^I$ of feasible bundles and \subst\ valuations $\valFn{k}: \Feas{k} \to \mathbb{R}$ for agents $k \not= j$, and \andowalloc\ for which no \ce\ exists.
\end{proposition}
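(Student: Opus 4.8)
The plan is to reduce the claim to the transferable-utility maximal-domain result (Fact~\ref{fac:subMaxDomain}) and then transport the resulting non-existence back to the original economy using the ``only if'' direction of the Equilibrium Existence Duality (Theorem~\ref{thm:existDualExchange}). Since $\utilFn{j}$ is not a \nsubst\ utility function and agent $j$ demands at most one unit of each good, Remark~\ref{rem:netSubDual} supplies a utility level $\ub \in \feasUtil{j}$ at which the \dualval\ $\valH{j}{\cdot}{\ub}$ is not a \subst\ valuation. Because $\ub$ lies strictly inside $\feasUtil{j}$, this \dualval\ is a genuine real-valued valuation on $\Feas{j} \subseteq \{0,1\}^I$, so it is eligible to play the role of the non-substitutes valuation of agent $j$ in Fact~\ref{fac:subMaxDomain}; note that the hypothesis $\totComp{i} = 1$ for all $i$ is common to both statements and so carries over directly.

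First I would apply Fact~\ref{fac:subMaxDomain} with agent $j$'s valuation taken to be $\valH{j}{\cdot}{\ub}$. This yields feasible sets $\Feas{k} \subseteq \{0,1\}^I$ and \subst\ valuations $\valFn{k}$ for the agents $k \neq j$ for which \andowalloc\ exists but no \ce\ exists in the transferable-utility economy in which agent $j$ has valuation $\valH{j}{\cdot}{\ub}$ and each agent $k \neq j$ has valuation $\valFn{k}$. I would then take the original economy to be the one in which agent $j$ has the given utility function $\utilFn{j}$ and each agent $k \neq j$ has the quasilinear utility function with valuation $\valFn{k}$ (as in Example~\ref{eg:quasilin}), using the feasible sets just produced. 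Since Fact~\ref{fac:subMaxDomain} supplies an endowment allocation of goods and the money lower bounds of the quasilinear agents are $-\infty$, this original economy admits \andowalloc, so the hypotheses of Theorem~\ref{thm:existDualExchange} are met.

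The key step is to exhibit a single \dualecon\ of this original economy that has no \ce. Consider the profile of utility levels assigning $\ub$ to agent $j$ and arbitrary (valid) levels to the agents $k \neq j$. A quasilinear agent's Hicksian demand equals $\dQLFn{k}$ at every utility level (as noted following Fact~\ref{fac:dualDem}), while by Lemma~\ref{lem:dHvalH} agent $j$'s contribution to this \dualecon\ is governed by the valuation $\valH{j}{\cdot}{\ub}$. Hence the \ces\ of this \dualecon\ are exactly the \ces\ of the transferable-utility economy of the previous paragraph, and so none exist. By the contrapositive of the ``only if'' direction of Theorem~\ref{thm:existDualExchange}, the failure of \ce\ to exist in this one \dualecon\ forces the existence of \andowalloc\ of the original economy for which no \ce\ exists, which is the desired conclusion.

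I expect the main obstacle to be bookkeeping rather than mathematical depth: one must verify that the data produced by Fact~\ref{fac:subMaxDomain} are mutually compatible across the three economies involved---in particular that agent $j$'s feasible set $\Feas{j}$ and the domain of $\valH{j}{\cdot}{\ub}$ coincide, and that the goods-endowment allocation extends to \andowalloc\ of the original economy. Once these compatibilities are checked, the proof is a direct chain of implications, with the Equilibrium Existence Duality performing the essential transport of the non-existence result from the transferable-utility setting to the setting with income effects.
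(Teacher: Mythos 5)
Your proposal is correct and follows essentially the same route as the paper's own proof: Remark~\ref{rem:netSubDual} supplies a utility level at which agent $j$'s Hicksian valuation fails substitutability, Fact~\ref{fac:subMaxDomain} is applied to that Hicksian valuation to produce the substitutes valuations and feasible sets for the other agents, and the contrapositive of the ``only if'' direction of Theorem~\ref{thm:existDualExchange} transports the non-existence back to the original economy with income effects. The compatibility checks you flag (that quasilinear agents' Hicksian demands coincide with their demands at every utility level, and that an endowment allocation exists) are exactly the steps the paper leaves implicit, so there is no substantive difference between the two arguments.
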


Proposition~\ref{prop:netSubstMaxDomain} is an immediate consequence of the Equilibrium Existence Duality and the fact that \substity\ defines a maximal domain for the existence of \ce\ with transferable utility.

\begin{proof}
By Remark~\ref{rem:netSubDual}, there exists a utility level $\ub$ at which agent $j$'s \dualval\ $\valHDef{j}$ is not a \subst\ valuation.
Fact~\ref{fac:subMaxDomain} implies that there exist feasible sets $\Feas{k} \subseteq \{0,1\}^I$ and \subst\ valuations $\valFn{k}$ for agents $k \not= j$, for which \andowalloc\ exists but no \ce\ would exist with transferable utility if agent $j$'s valuation were $\valHDef{j}$.
With those sets $\Feas{k}$ of feasible bundles and valuations $\valFn{k}$ for agents $k \not= j,$ the ``only if'' direction of Theorem~\ref{thm:existDualExchange} implies that there exists \andowalloc\ for which no \ce\ exists.
\end{proof}

Proposition~\ref{prop:netSubstMaxDomain} entails that any domain of preferences that contains all \subst\ quasilinear preferences and guarantees the existence of \ce\ must lie within the domain of \nsubst\ preferences.
Therefore, Proposition~\ref{prop:netSubstMaxDomain} and Theorem~\ref{thm:netSubExist} suggest that \nsubstity\ is the most general way to incorporate income effects into a \substity\ condition to ensure the existence of \ce.

By contrast, the relationship between the nonexistence of \ce\ and failures of \gsubstity\ depends on why \gsubstity\ fails.
\Gsubstity\ can fail due to substitution effects that reflect net complementarities, as in Example~\ref{eg:grossVersusNetNumerical}\ref{eg:numericalComp}, or due to income effects, as in Example~\ref{eg:grossVersusNetNumerical}\ref{eg:numericalLog}.
If the failure of \gsubstity\ reflects a net complementarity, then Proposition~\ref{prop:netSubstMaxDomain} tells us that \ce\ may not exist if the other agents have \subst\ quasilinear preferences, as in Example~\ref{eg:grossVersusNetNumerical}\ref{eg:numericalComp}.
On the other hand, the failure of \gsubstity\ is only due to income effects, then Theorem~\ref{thm:netSubExist} tells us that \ce\ exists if the other agents have \nsubst\ preferences (e.g., \subst\ quasilinear preferences), as in Example~\ref{eg:grossVersusNetNumerical}\ref{eg:numericalLog}.





\section{Demand Types and the Unimodularity Theorem}
\label{sec:demTypes}

In this section, we characterize exactly what conditions on patterns of substitution effects guarantee the existence of \ce.
Specifically, we consider \citeposs{BaKl:19} classification of valuations into ``demand types'' based on sets of vectors that summarize the possible ways in which demand can change in response to a small generic price change.
We first review the definition of demand types from \cite{BaKl:19}.
We then extend the concept of demand types to settings with income effects, and develop a version of the \citeposs{BaKl:19} Unimodularity Theorem that allows for income effects and characterizes which demand types guarantee the existence of \ce\ (see also \cite{DaKoMu:01}).
A special case of the Unimodularity Theorem with Income Effects extends Theorem~\ref{thm:netSubExist} to settings in which agents can demand multiple units of some goods.

\subsection{Demand Types and the Unimodularity Theorem with Transferable Utility}
\label{sec:demTypesTU}
We first review the concept of demand types for quasilinear settings, as developed by \citet{BaKl:19}.

An integer vector is \emph{primitive} if the greatest common divisor of its components is 1.
By focusing on the directions of demand changes, we can restrict to primitive demand change vectors.
A \emph{\dtvs} is a set $\mathcal{D} \subseteq \mathbb{Z}^I$ of primitive \intvecs\ such that if $\dvec \in\mathcal{D}$ then $- \dvec \in\mathcal{D}$.

\begin{definition}[Demand Types for Valuations]
\label{def:demTypeTU}
Let $\valFn{j}$ be a valuation.
\begin{enumerate}[label=(\alph*)]
\item A bundle $\bun$ is \emph{uniquely demanded by agent $j$} if there exists a price vector $\p$ such that $\dQL{j}{\p} = \{\bun\}.$
\item A pair $\{\bun,\bunpr\}$ of uniquely demanded bundles are \emph{adjacently demanded by agent $j$} if there exists a price vector $\p$ such that $\dQL{j}{\p}$ contains $\bun$ and $\bunpr$ but no other bundle that is uniquely demanded by agent $j$.
\item If $\mathcal{D}$ is a \dtvs, then $\valFn{j}$ is \emph{of demand type $\mathcal{D}$} if for all pairs $\{\bun,\bunpr\}$ that are adjacently demanded by agent $j$, the difference $\bunpr - \bun$ is a multiple of an element of $\mathcal{D}$.\footnote{Definition~\ref{def:demTypeTU}(c) coincides with Definition 3.1 in \cite{BaKl:19} by Proposition 2.20 in \cite{BaKl:19}.}
\end{enumerate}
\end{definition}
  
\begin{figure}
\centering
\def\svgwidth{0.4\columnwidth}
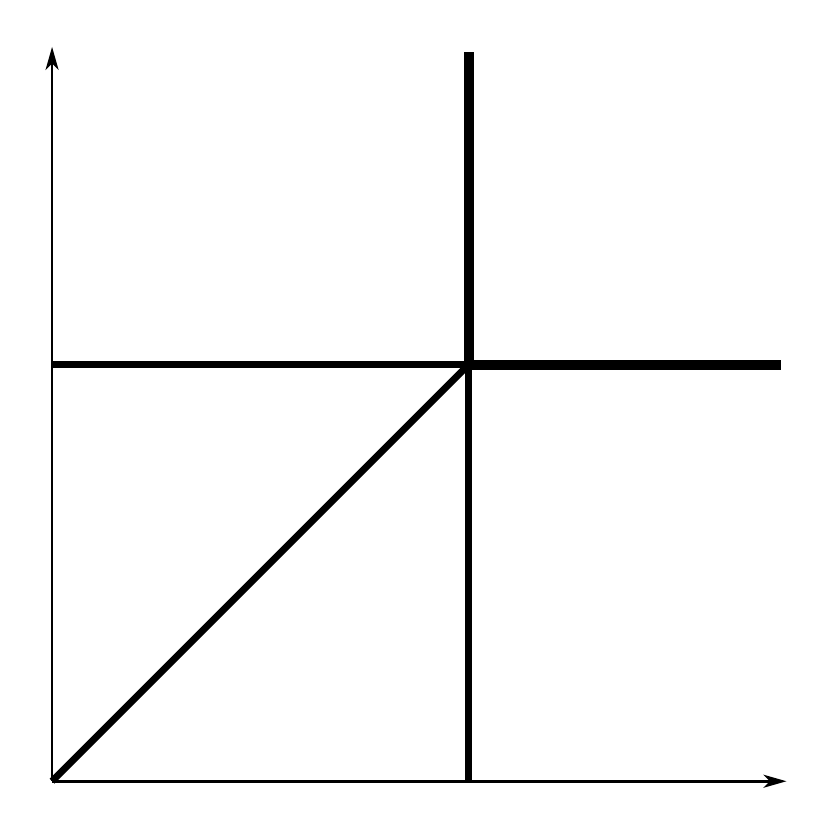
\caption{Depiction of Agent $j$'s Demand in Example~\ref{eg:demTypeDef}.
The labels indicate demand in the regions of price vectors at which demand is single-valued.}
\label{fig:demTypeDef}
\end{figure}

For intuition, suppose that a small price change causes a change in demand.  Then, generically, demand changes between adjacently demanded bundles.  Thus, the demand type vectors represent the possible directions of changes in demand in response to small generic price changes (see Proposition 3.3 in \cite{BaKl:19}  for a formal statement).  
To illustrate Definition~\ref{def:demTypeTU}, we consider an example.

\begin{example}
\label{eg:demTypeDef}
Suppose that there are two goods and let
\[\Feas{j} = \{0,1,2,3\}^2 \ssm \{(2,3),(3,2),(3,3)\}.\]
Consider the valuation defined by $\val{j}{\bun} = \bunComp{1} + \bunComp{2}$.
As Figure~\ref{fig:demTypeDef} illustrates, the uniquely demanded bundles are $(0,0)$, $(0,3),$ $(1,3),$ $(3,0)$, and $(3,1)$.

When $1 = \pComp{1} < \pComp{2},$ agent $j$'s demand is $\dQL{j}{\p} = \{(0,0),(1,0),(2,0),(3,0)\}$.
Hence, as the bundles $(1,0)$ and $(2,0)$ are not uniquely demanded, the bundles $(0,0)$ and $(3,0)$ are adjacently demanded.
As a result, for $\valFn{j}$ to be of demand type $\mathcal{D}$, the set $\mathcal{D}$ must contain the  vector $(1,0)$, which is the primitive \intvec\ proportional to the demand change $(3,0) - (0,0) = (3,0)$.
Similarly, the bundles $(0,0)$ and $(0,3)$ are adjacently demanded, and any \dtvs\ $\mathcal{D}$ such that $\valFn{j}$ is of demand type $\mathcal{D}$ must contain the vector $(0,1)$.

When $\pComp{1} < \pComp{2} = 1,$ demand is $\dQL{j}{\p} = \{(3,0),(3,1)\}$.
Hence, the bundles $(3,0)$ and $(3,1)$ are adjacently demanded.
Similarly, the bundles $(0,3)$ and $(1,3)$ are adjacently demanded.  These facts respectively imply, again, that $(0,1)$ and $(1,0)$ are in any \dtvs\ $\mathcal{D}$ such that $V^j$ is of demand type $\mathcal{D}$.

Last, when $p_1 = p_2 < 1,$ agent $j$'s demand is $\dQL{j}{\p} = \{(1,3),(2,2),(3,1)\}$.
Hence, as the bundle $(2,2)$ is not uniquely demanded, the bundles $(1,3)$ and $(3,1)$ are adjacently demanded.
As a result, for $\valFn{j}$ to be of demand type $\mathcal{D}$, the set $\mathcal{D}$ must contain the  vector $(1,-1)$, which is the primitive \intvec\ proportional to the demand change $(3,1) - (1,3) = (2,-2)$.

By contrast, the bundles $(0,0)$ and $(3,1)$ are not adjacently demanded: the only price vector at which agent $j$ demands them both is $\p = (1,1),$ but $\dQL{j}{1,1}$ also contains the uniquely demanded bundles $(0,3)$, $(1,3),$ and $(3,0)$.  Similarly, the bundles $(0,0)$ and $(1,3)$ are not adjacently demanded.
Hence,
\begin{equation*}
\mathcal{D} = \pm \left\{\begin{bmatrix}1 \\ 0\end{bmatrix},
\begin{bmatrix}0 \\ 1\end{bmatrix},
\begin{bmatrix} 1 \\ -1\end{bmatrix}\right\}
\end{equation*}
is the minimal \dtvs\ $\mathcal{D}$ such that $\valFn{j}$ is of demand type $\mathcal{D}$.
\end{example}


Consider any valuation of the same demand type $\mathcal{D}$ as in Example \ref{eg:demTypeDef}, and a change in price from  $\p$ to $\ppr=\p+\lambda \e{1}$ for some $\lambda>0$.  For generic choices of $\p$ and $\lambda$, the demand at any price on the straight line from $\mathbf{p}_I$ to $\mathbf{p}_I'$ either is unique, or demonstrates the adjacency of two bundles uniquely demanded at prices on this line.  The change in demand between such bundles must therefore be a multiple of an element of $\mathcal{D}$ (by Definition \ref{def:demTypeTU}).  Moreover, since only the price of good 1 is changing and that price is increasing, the law of demand entails that demand for good 1 must strictly decrease upon any change in demand.\footnote{As there are no income effects here, the compensated law of demand (see, e.g., Proposition 3.E.4 in  \cite{MaWhGr:95}) reduces to the law of demand.} 
Thus, the change in demand between the two consecutive uniquely demanded bundles must be a positive multiple of either $(-1,0)$ or $(-1,1)$.  Therefore, demand for good 2 must (weakly) increase, reflecting \substity\ between the goods.  This two-good example is a special case of an important class of demand types.

\begin{example}[The Strong Substitutes Demand Type]
\label{eg:ssubDemType}
The \emph{strong substitutes \dtvs} consists of all
vectors in $\Z^I$ with at most one $+1$ component, at most one $-1$ component, and no other nonzero components.
As illustrated in Example~\ref{eg:demTypeDef}, this \dtvs\ captures one-to-one substitution between goods through demand type vectors with one component of $1$ and one component of $-1$.
Furthermore, if an agent $k$ demands at most one unit of each good, then $\valFn{k}$ is a \subst\ valuation if and only if it is of the strong substitutes demand type (see Theorems 2.1 and 2.4 in \cite{fujishige2003note}).
\end{example}

In settings in which agents can demand multiple units of each good, a form of concavity is needed to ensure the existence of \ce.
A valuation is concave if, under that valuation, each bundle of goods that is a convex combination of feasible bundles of goods is demanded at some price vector.
For the formal definition, we let $\conv(T)$ denote the \emph{convex hull} of a set $T \subseteq \mathbb{R}^I.$

\begin{definition}[Concavity]\label{def:concave}
A valuation $\valFn{j}$ is \emph{concave} if for each bundle $\bun \in \conv(\Feas{j}) \cap \mathbb{Z}^n,$ there exists a price vector $\p$ such that $\bun \in \dQL{j}{\p}$.
\end{definition}

In Section~\ref{sec:subOld}, we discussed that \substity\ guarantees the existence of \ce\ in transferable utility economies when agents demand at most one unit of each good.
Generalizing that result, \citet{BaKl:19} identified a necessary and sufficient condition for the concave valuations of a demand type to form a domain for the guaranteed existence of \ce.

\begin{definition}[Unimodularity]
A set of vectors in $\Z^I$ is \emph{unimodular} if every linearly independent subset can be extended to be a basis for $\R^I$, of \intvecs, such that any square matrix whose columns are these vectors has determinant $\pm 1$.
\end{definition}

For example, the \dtvs\ in Example~\ref{eg:demTypeDef} is unimodular, while the \dtvs\ 
\begin{equation}
\label{eq:subsComp}
\pm \left\{\begin{bmatrix} 1 \\ -1\end{bmatrix},
\begin{bmatrix}1 \\ 1\end{bmatrix}\right\}
\end{equation}
is not unimodular, because
\[\left|\begin{matrix}
1 & 1\\
-1 & 1
\end{matrix}\right| = 2.\]
The \dtvs\ in (\ref{eq:subsComp}) represents that the two goods can be substitutable or complementary for agents---a possibility that can cause \ce\ to fail to exist, as in Example~\ref{eg:grossVersusNetNumerical}\ref{eg:numericalComp}.
\cite{BaKl:19} showed that the unimodularity of a \dtvs\ is precisely the condition for the corresponding demand type to guarantee the existence of \ce.

\begin{fact}[Unimodularity Theorem with Transferable Utility]
\label{fac:unimod}
Let $\mathcal{D}$ be a \dtvs.
\Ces\ exist for all finite sets $J$ of agents with concave valuations of demand type $\mathcal{D}$ and for all total endowments for which \dowallocs\ exist if and only if $\mathcal{D}$ is unimodular.\footnote{\label{fn:classDiscConv}The ``if'' direction of Fact~\ref{fac:unimod} is a case of the ``if" direction of Theorem 4.3 in \cite{BaKl:19}.
The ``only if'' direction of Fact~\ref{fac:unimod}, which we prove in Appendix~\ref{app:maxDomain}, is a mild strengthening of the ``only if" direction of Theorem 4.3 in \cite{BaKl:19} that applies in exchange economies.}
\end{fact}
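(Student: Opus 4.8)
The plan is to reduce the existence question to a statement about \emph{aggregate} demand and then isolate unimodularity as the exact combinatorial condition that makes integral aggregation possible. Since utility is transferable, each agent's demand is the quasilinear correspondence $\dQL{j}{\p}$, and a \ce\ for total endowment $\tot$ is a price $\p$ together with integral bundles $\bunj \in \dQL{j}{\p}$ satisfying $\sum_{j\in J}\bunj=\tot$. Equivalently, letting $V$ be the sup-convolution of the valuations $\valFn{j}$ (the valuation of the representative agent), a \ce\ exists precisely when $\tot$ lies in the integral aggregate demand $\sum_{j\in J}\dQL{j}{\p}$ for some $\p$. Two facts drive the argument, both of which I would import from the demand-type calculus of \cite{BaKl:19}: (i) the aggregate valuation $V$ is again concave and of demand type $\mathcal{D}$ whenever every $\valFn{j}$ is; and (ii) at any price $\p$, each convexified individual demand set $\conv(\dQL{j}{\p})$ is a face of $\conv(\Feas{j})$ all of whose edges are parallel to vectors in $\mathcal{D}$. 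The heart of the theorem is the gap between the \emph{real} Minkowski sum $\sum_{j\in J}\conv(\dQL{j}{\p})$ and the set of targets admitting an \emph{integral} decomposition $\tot=\sum_{j\in J}\bunj$ with $\bunj\in\dQL{j}{\p}$, and unimodularity is exactly what closes this gap.

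For the ``if'' direction, suppose $\mathcal{D}$ is unimodular and \andowalloc\ exists, so $\tot\in\sum_{j\in J}\conv(\Feas{j})$. First I would use concavity of the aggregate valuation together with linear-programming duality to produce a price $\p$ at which the convexified aggregate demand contains $\tot$, i.e.\ $\tot\in\sum_{j\in J}\conv(\dQL{j}{\p})$; this is the standard supporting-hyperplane step and does not use unimodularity. The remaining work is entirely integral. At this $\p$, each $\conv(\dQL{j}{\p})$ is a lattice polytope whose edges point along $\mathcal{D}$, and I would invoke the integer-decomposition property of Minkowski sums of such polytopes: when the edge directions form a unimodular set, every integer point of the sum splits as a sum of integer points of the summands. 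Applying this to $\tot$ produces integral $\bunj\in\dQL{j}{\p}$ with $\sum_{j\in J}\bunj=\tot$, hence a \ce. I expect this integer-decomposition step to be \textbf{the main obstacle}, since it is where the determinant condition is actually used; proving it cleanly requires the subdivision argument of \cite{BaKl:19} (a unimodular configuration subdivides the relevant zonotopal cells into lattice simplices of normalized volume one, so no integer points are lost in aggregation).

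For the ``only if'' direction I would argue contrapositively, building an exchange economy with no \ce\ when $\mathcal{D}$ is not unimodular. Non-unimodularity yields a full-rank subset $\{\dvec_1,\dots,\dvec_n\}\subseteq\mathcal{D}$, with $n=|I|$, whose determinant has absolute value at least $2$, so the sublattice $\Lambda=\Z\{\dvec_1,\dots,\dvec_n\}$ has index at least $2$ in $\Z^I$; I would pick a lattice point $\tot$ in the half-open parallelepiped spanned by the $\dvec_i$ with $\tot\notin\Lambda$, which exists precisely because the index is at least two. I would then assign concave valuations of demand type $\mathcal{D}$ whose nontrivial demand changes are exactly the $\dvec_i$ (so each $\conv(\dQL{j}{\p})$ is a point or a segment $[\zero,\dvec_i]$), arranged so that at one ``balancing'' price the convexified aggregate demand is the whole parallelepiped $\sum_i[\zero,\dvec_i]$ while at every other price it is a proper face not containing $\tot$. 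Any allocation summing to $\tot$ that could be supported by a price would then have to be demanded at the balancing price, forcing an integral decomposition of $\tot$ into segment endpoints and hence $\tot\in\Lambda$, a contradiction. The genuinely new point relative to \cite{BaKl:19}, whose statement concerns a representative-agent target, is to realize this inside an \emph{exchange} economy: one must enlarge the feasible sets $\Feas{j}$ just enough that \andowalloc\ summing to $\tot$ exists (so the economy is well posed) without creating any new supporting price. Managing these two requirements simultaneously, along with verifying that $\tot$ is convexly aggregate-demanded \emph{only} at the balancing price, is exactly the ``mild strengthening'' the statement refers to, and I expect it to be the fiddliest part of this direction.
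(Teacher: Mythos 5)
Your ``if'' direction is fine but is also not where the work is: the paper does not prove that direction at all, it cites Theorem 4.3 of \cite{BaKl:19}, and your sketch is essentially an outline of that cited argument with the hard integer-decomposition step deferred to \citeauthor{BaKl:19}'s machinery. The genuine gaps are in your ``only if'' direction. First, your opening reduction is false: non-unimodularity of $\mathcal{D}$ does \emph{not} yield a full-rank subset $\{\dvec_1,\dots,\dvec_n\}\subseteq\mathcal{D}$, $n=|I|$, with $|\det|\ge 2$. Unimodularity can fail through a linearly independent subset of rank $r<|I|$ that cannot be extended to a determinant-$\pm 1$ integer basis by \emph{any} integer vectors. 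For example, with three goods, $\mathcal{D}=\pm\{(1,1,0),(1,-1,0)\}$ is not unimodular---any integral extension $(1,1,0),(1,-1,0),\mathbf{v}$ has determinant $-2v_3$---yet $\mathcal{D}$ contains no full-rank subset at all, so your sublattice $\Lambda$ and your $|I|$-dimensional half-open parallelepiped never get off the ground. (Your reduction can be repaired when $\mathcal{D}$ spans $\R^I$, but not in general.) The paper instead takes a \emph{minimal} non-unimodular subset $\{\dvec^1,\dots,\dvec^n\}\subseteq\mathcal{D}$ of whatever rank $n\le|I|$ and invokes Fact~\ref{fac:unimodSet} to obtain an integer point $\mathbf{z}=\sum_{\ell}\beta_\ell\dvec^\ell$ with every $\beta_\ell\in(0,1)$; the whole counterexample then lives in the (possibly proper) subspace these vectors span.

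Second, the issue you flag and leave open---making \andowalloc\ with total endowment $\tot=\mathbf{z}$ exist without wrecking the construction---is not a fiddly detail but precisely the content of the ``mild strengthening,'' and your segment economy cannot deliver it: if every $\Feas{j}$ is $\{\zero,\dvec_i\}$, the only total endowments admitting \andowalloc\ are subset-sums of the $\dvec_i$, all of which lie in $\Lambda$, while $\mathbf{z}\notin\Lambda$. The paper's fix is concrete: one agent $j$ gets $\Feas{j}=\mathcal{P}\cap\Z^I$ (\emph{all} lattice points of the parallelepiped, in particular $\mathbf{z}$) with the identically zero valuation---concave by Lemma~\ref{lem:linConv}, and of demand type $\mathcal{D}$ by Corollary~\ref{cor:edges} since the edges of $\mathcal{P}$ are parallel to the $\dvec^\ell$---and is endowed with $\mathbf{z}$ itself, while a second agent gets $\{\zero,\dvec^n\}$ with a linear valuation. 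Relatedly, your plan to arrange that ``at every other price'' the convexified aggregate demand is a proper face missing $\tot$ is both harder than necessary and exactly the verification your sketch cannot complete once feasible sets are enlarged. The paper sidesteps the quantification over all prices entirely via the pseudo-equilibrium lemma (Fact~\ref{fac:pseudoEquil}): with transferable utility, if \emph{any} \ce\ exists for the given total endowment, then \emph{every} pseudo-equilibrium price vector supports one. Hence it suffices to exhibit a single price $\mathbf{s}_I$ with $\tot\in\conv\bigl(\sum_{j}\dQL{j}{\mathbf{s}_I}\bigr)$ but $\tot\notin\sum_{j}\dQL{j}{\mathbf{s}_I}$, and nonexistence at all prices follows by contraposition. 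Without this lemma or a substitute for it, your global argument over prices remains a gap.
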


\cite{DaKoMu:01} used conditions on the ranges of agents' demand correspondences to describe classes of concave valuations, which correspond to the concave valuations of \citeposs{BaKl:19} unimodular demand types;\footnote{To understand the correspondence, let $\mathcal{D}$ be a unimodular \dtvs. In the terminology of \cite{DaKoMu:01}, a valuation $\valFn{j}$ is \emph{$\mathscr{D}(\mathscr{P}t(\mathcal{D},\mathbb{Z}))$-concave} if, for each price vector $\p$, we have that $\dQL{j}{\p} = \conv(\dQL{j}{\p}) \cap \Z^I$ and each edge of $\conv(\dQL{j}{\p})$ is parallel to an element of $\mathcal{D}$ (see Definition 4 and pages 264--265 in \cite{DaKoMu:01}). It follows from Lemma 2.11 and Proposition 2.16 in \cite{BaKl:19} that a valuation is $\mathscr{D}(\mathscr{P}t(\mathcal{D},\mathbb{Z}))$-concave if and only if it is concave and of demand type $\mathcal{D}$.} they formulated a version of the ``if" direction of Fact~\ref{fac:unimod} with those conditions.\footnote{See Definition 4, Theorem 3, and pages 264--265 in \cite{DaKoMu:01}.}

As \cite{poincare1900second} showed, the strong substitutes \dtvs\ is unimodular.
Therefore, in light of Example~\ref{eg:ssubDemType}, the existence of \ce\ in transferable utility economies in which agents demand at most one unit of each good and have substitutes valuations (Fact~\ref{fac:subExist}) is a special case of Fact~\ref{fac:unimod}.
Moreover, Fact~\ref{fac:unimod} is strictly more general:
as \cite{BaKl:19} showed, there are unimodular \dtvs{s} for which the existence of \ce\ cannot be deduced from the corresponding result for strong substitutes by applying a change of basis to the space of bundles of goods.\footnote{By contrast, the existence results of \cite{SuYa:06} and \cite{Teyt:14} can be deduced from Fact~\ref{fac:subExist} applying an appropriate change of basis.  Those results are also special cases of Fact~\ref{fac:unimod}.}
To illustrate the additional generality, we discuss an example of such a demand type.\footnote{Section 6.1 in \cite{BaKl:19} provides another example that includes only complements valuations.}

\begin{example}
\label{eg:5D}
There are five goods.
Consider the \dtvs\
\[\mathcal{D} = \pm\left\{\begin{bmatrix}1 \\ 0 \\ 0 \\ 0 \\ 0\end{bmatrix},
\begin{bmatrix}0 \\ 1 \\ 0 \\ 0 \\ 0\end{bmatrix},
\begin{bmatrix}0 \\ 0 \\ 1 \\ 0 \\ 0\end{bmatrix},
\begin{bmatrix}0 \\ 0 \\ 0 \\ 1 \\ 0\end{bmatrix},
\begin{bmatrix}0 \\ 0 \\ 0 \\ 0 \\ 1\end{bmatrix},
\begin{bmatrix}1 \\ -1 \\ 1 \\ 0 \\ 0\end{bmatrix},
\begin{bmatrix}0 \\ 1 \\ -1 \\ 1 \\ 0\end{bmatrix},
\begin{bmatrix}0 \\ 0 \\ 1 \\ -1 \\ 1\end{bmatrix},
\begin{bmatrix}1 \\ 0 \\ 0 \\ 1 \\ -1\end{bmatrix},
\begin{bmatrix}-1 \\ 1 \\ 0 \\ 0 \\ 1\end{bmatrix}\right\}.\]
Intuitively, this \dtvs\ allows for independent changes in the demand for each good (through the first five vectors), as well as for substitution from a good to the bundle consisting of its two neighbors if the goods are arranged in a circle (through the last five vectors).
This \dtvs\ is unimodular, and cannot be obtained from the strong substitutes \dtvs\ by a change of basis of the space of integer bundles of goods (see, e.g., Section 19.4 of \cite{schrijver1998theory}).
\end{example}

Moreover, the demand types defined by maximal, unimodular \dtvs{s} turn out to define maximal domains for the existence of \ce\ in settings with transferable utility.
Here, we say that a unimodular \dtvs\ is \emph{maximal} if it is not strictly contained in another unimodular \dtvs.

\begin{fact}
\label{fac:unimodMaxDomain}
Let $\mathcal{D}$ be a maximal unimodular \dtvs.
If $|J| \ge 2$ and $\valFn{j}$ is non-concave or not of demand type $\mathcal{D}$, then there exist sets $\Feas{k}$ of feasible bundles and concave valuations $\valFn{k}: \Feas{k} \to \mathbb{R}$ of demand type $\mathcal{D}$ for agents $k \not= j$, as well as a total endowment, for which there exists \andowalloc\ but no \ce.\footnote{Fact~\ref{fac:unimodMaxDomain} is related to Proposition 6.10 in \cite{BaKl:14}, which connects failures of unimodularity to the non-existence of \ce\ in specific economies.
We supply a proof of Fact~\ref{fac:unimodMaxDomain} in Appendix~\ref{app:maxDomain}.}
\end{fact}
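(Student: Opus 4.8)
The plan is to adapt the construction behind the ``only if'' direction of the Unimodularity Theorem (Fact~\ref{fac:unimod}) into a \emph{local} obstruction attached to agent $j$'s failure, absorbing all the auxiliary structure into the valuations of the other agents; when $\mathcal{D}$ is the strong substitutes set this should recover Fact~\ref{fac:subMaxDomain} as a special case. At the heart of every such argument is the fact that, for concave valuations, individual demands are the integer points of polytopes, and market clearing at a total endowment $\tot$ amounts to writing $\tot$ as a sum of integer points drawn from these polytopes at a common price. Passing to integer points commutes with the Minkowski sum of demand polytopes precisely when the relevant edge directions form a unimodular set; when unimodularity fails, the convex hull of aggregate demand acquires an integer ``hole,'' and it is this hole that I will install as the total endowment.

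First I would treat the case in which $\valFn{j}$ is not of demand type $\mathcal{D}$. By Definition~\ref{def:demTypeTU} there is a pair $\{\bun,\bunpr\}$ adjacently demanded by $j$ at some price $\p$ with $\bunpr-\bun$ not a multiple of any element of $\mathcal{D}$; let $\dvec$ be the primitive \intvec\ in the direction of $\bunpr-\bun$, so $\dvec\notin\mathcal{D}$. Since $\mathcal{D}$ is \emph{maximal} unimodular and $\dvec$ is primitive, $\mathcal{D}\cup\{\pm\dvec\}$ fails to be unimodular, and because $\mathcal{D}$ itself is unimodular the obstruction must involve $\dvec$: there is a linearly independent set $\{\dvec,\dvec^{(1)},\dots,\dvec^{(m)}\}$, with each $\dvec^{(i)}\in\mathcal{D}$, whose half-open fundamental parallelepiped contains a nonzero lattice point. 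I would then give one of the other agents a concave valuation of demand type $\mathcal{D}$ whose demand at $\p$ is the vertex set of the parallelepiped generated by $\dvec^{(1)},\dots,\dvec^{(m)}$ (a sum of $m$ two-point valuations, with values chosen so that all $m$ components are simultaneously indifferent at $\p$); as this generating set lies in the unimodular $\mathcal{D}$, that demand has no internal hole. The hole appears only upon Minkowski-adding agent $j$'s segment in the direction $\dvec$: the aggregate demand at $\p$ is then the vertex set of the parallelepiped generated by all of $\{\dvec,\dvec^{(1)},\dots,\dvec^{(m)}\}$, which, being non-unimodular, misses the interior lattice point. I would set $\tot$ equal to that point (translated so that it lies in the sum of the feasible sets).

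Next I would verify feasibility and nonexistence for this $\tot$. Since $\tot$ lies in the convex hull of aggregate demand at $\p$, we have $\tot\in\conv\!\big(\Feas{j}+\sum_{k}\Feas{k}\big)\cap\Z^I$, so \andowalloc\ exists; but $\tot$ is excluded from the aggregate demand at $\p$, which is exactly the failure of the integer points to fill the parallelepiped. At every other price the aggregate demand collapses to a proper face whose hull also omits $\tot$, so $\tot$ is never an aggregate demand and no \ce\ exists. The remaining case, in which $\valFn{j}$ is non-concave, follows the same template with a simpler hole: Definition~\ref{def:concave} supplies a bundle $\bun^\ast\in\conv(\Feas{j})\cap\Z^I$ with $\bun^\ast\notin\dQL{j}{\p}$ for every $\p$, and I would choose the other agents' concave type-$\mathcal{D}$ valuations and a total endowment so that market clearing can occur only if $j$ demands $\bun^\ast$.

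The main obstacle is the geometric lemma underlying both cases: that the chosen $\tot$ sits in the convex hull of aggregate demand at the critical price yet is excluded from the aggregate demand at \emph{every} price. The feasibility half is routine, but the second half is where non-unimodularity must be converted into genuine nonexistence---one must show that the hole created at $\p$ cannot be filled by retreating to any nearby price, and that $\tot$ leaves the convex hull of aggregate demand once one moves off $\p$. This is precisely the content of the ``only if'' direction of Fact~\ref{fac:unimod}, so the real work is to localize that construction around $j$'s single offending vector $\dvec$ (or hole $\bun^\ast$) while keeping the other agents' valuations concave and of demand type $\mathcal{D}$, and to repackage the auxiliary two-point valuations into the available $|J|-1$ agents.
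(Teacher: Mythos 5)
Your geometric intuition---install the interior lattice point of a non-unimodular parallelepiped as the total endowment, with the other agents indifferent over simple polytopes at the critical price---is the same as the paper's, but the two steps you flag as ``the main obstacle'' or dispose of in a parenthetical are exactly where the content lies, and as written your construction does not close. The first gap is the passage from ``the market cannot clear at the critical price $\p$'' to ``no \ce\ exists at any price.'' You cannot defer this to the ``only if'' direction of Fact~\ref{fac:unimod}: that is an existence-of-a-counterexample statement for non-unimodular \dtvs s, not a transfer principle across prices, and your assertion that at every other price the aggregate demand ``collapses to a proper face whose hull also omits $\tot$'' is unproven (and is not how the argument works). The tool the paper uses is Fact~\ref{fac:pseudoEquil}: if a \ce\ exists for the given total endowment, then \emph{every} pseudo-equilibrium price vector supports an equilibrium allocation; hence exhibiting a single pseudo-equilibrium price at which clearing fails rules out \ce\ at all prices at once. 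Without this lemma (or a substitute for it) your proof has no mechanism for handling prices other than $\p$.

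The second gap is that your Case A economy cannot simultaneously admit \andowalloc\ and fail to clear at $\p$. With $\Feas{k}$ equal to the vertex set of the parallelepiped spanned by $\dvec^{(1)},\dots,\dvec^{(m)}$, feasibility of $\tot=\bun+\mathbf{z}$ is not guaranteed: for instance, if $\Feas{j}$ is exactly the set of lattice points of the segment from $\bun$ to $\bunpr$ (a legitimate instance of the hypothesis, with $\valFn{j}$ concave but not of demand type $\mathcal{D}$), then $\tot\in\Feas{j}+\Feas{k}$ would force the fractional coefficients $\beta_0,\beta_\ell\in(0,1)$ of $\mathbf{z}$ to be integers, so no \dowalloc\ exists. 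And if you translate $\Feas{k}$ to restore feasibility, then---because your agent $k$ demands \emph{all} of $\Feas{k}$ at $\p$---the decomposition witnessing feasibility is itself a clearing allocation at $\p$ whenever its $\Feas{j}$-component lies in $\dQL{j}{\p}$, which in the instance above is unavoidable. The paper escapes this tension by making $k$'s feasible set strictly larger than $k$'s demand at $\p$: using Claim~\ref{cla:maximal} (a maximal unimodular \dtvs\ spans $\R^I$) and unimodularity, it writes $\mathbf{z}$ as an \emph{integer} combination $\sum_{\ell=1}^{n}\gamma_\ell\dvec^\ell$ of vectors in $\mathcal{D}$, sets $X^k_I=Y^k_I+Z^k_I$ (a box in the directions $\dvec^1,\dots,\dvec^m$ wide enough to contain the corresponding part of $\mathbf{z}$, plus a box in the directions $\dvec^{m+1},\dots,\dvec^n$), and gives $k$ the valuation that is linear with gradient $\p+\mathbf{s}_I$, where $\mathbf{s}_I$ annihilates $\dvec^1,\dots,\dvec^m$ and is strictly negative on the remaining directions; then $\mathbf{z}\in X^k_I$ gives feasibility, while $D^k(\p)=Y^k_I$ keeps the fractional coordinate on your $\dvec$ unmatched, so clearing still fails. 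Finally, your non-concave case is only a template, and it needs two repairs: the hole must be located by the price-wise criterion of Fact~\ref{fac:conc} (an integer point of $\conv(\dQL{j}{\p})$ not in $\dQL{j}{\p}$), not by Definition~\ref{def:concave}'s ``never demanded bundle,'' so that it is tied to a specific price; and the case split should be ``non-concave \emph{but of demand type} $\mathcal{D}$'' versus ``not of demand type $\mathcal{D}$,'' since the first qualifier is what makes the edges of $\conv(\dQL{j}{\p})$ parallel to vectors of $\mathcal{D}$ and hence (by Corollary~\ref{cor:edges}) makes agent $k$'s linear valuation on $(\conv(\dQL{j}{\p})\cap\Z^I)+\{-\bunpr\}$ itself concave and of demand type $\mathcal{D}$.
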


While Fact~\ref{fac:unimod} shows that there exist valuations in each non-unimodular demand type for which \ce\ does not exist, Fact~\ref{fac:unimodMaxDomain} shows that for \emph{every} valuation outside a maximal unimodular demand type, there exist concave valuations within the demand type that lead to non-existence.
Hence, the necessity direction of Fact~\ref{fac:unimod}, together with Fact~\ref{fac:unimodMaxDomain}, provide complementary perspectives on the way in which \ce\ can fail to exist outside the context of unimodular demand types.

\subsection{Demand Types and the Unimodularity Theorem with Income Effects}

We now use Fact~\ref{fac:dualPrefs} to extend the demand types framework to settings with income effects.

\begin{definition}[Demand Types with Income Effects]
\label{def:demTypeIncEff}
An agent's preferences are \emph{of demand type $\mathcal{D}$} if her \dualvals\ at all utility levels are of demand type $\mathcal{D}$.
\end{definition}

Lemma~\ref{lem:dHvalH} leads to an economic interpretation of Definition~\ref{def:demTypeIncEff}: a utility function is of demand type $\mathcal{D}$ if $\mathcal{D}$ summarizes the possible ways in which Hicksian demand can change in response to a small generic price change.
In particular, Definition~\ref{def:demTypeIncEff} extends the concept of demand types to settings with income effects by placing conditions on substitution effects.
Indeed, Definition~\ref{def:demTypeIncEff} considers only the properties of \dualvals\ at each utility level (which, by Lemma~\ref{lem:dHvalH}, reflect substitution effects), and not how an agent's \dualvals\ vary with her utility level (which, by Fact~\ref{fac:dualDem} and Lemma~\ref{lem:dHvalH}, reflects income effects).

\cite{DaKoMu:01} translated their conditions on the ranges of agents' demand correspondences from quasilinear settings to settings with income effects by using Fact~\ref{fac:dualPrefs} in an analogous manner (see Assumption 3$'$ in \cite{DaKoMu:01}).
However, the economic interpretation in terms of substitution effects that Lemma~\ref{lem:dHvalH} leads to was not clear from \citeposs{DaKoMu:01} formulation.

As with the case of transferable utility, a concavity condition is needed to ensure the existence of \ce.
With income effects, the relevant condition is a version of the quasiconcavity condition from classical demand theory for settings with indivisible goods.  We define quasiconcavity based on concavity and duality.\footnote{It is equivalent to define quasiconcavity in terms of the convexity of the upper contour sets, but Definition \ref{def:quasiConc} is more immediately applicable for us.}

\begin{definition}[Quasiconcavity]
\label{def:quasiConc}
An agent's utility function is \emph{quasiconcave} if her \dualvals\ at all utility levels are concave.
\end{definition}

As with the case of transferable utility, unimodularity is a necessary and sufficient condition for the existence of \ce\ to be guaranteed for all quasiconcave preferences of a demand type when income effects are present.

\begin{theorem}[Unimodularity Theorem with Income Effects]
\label{thm:unimod}
Let $\mathcal{D}$ be a \dtvs.
\Ces\ exist for all finite sets $J$ of agents with quasiconcave utility functions of demand type $\mathcal{D}$, for all total endowments, and for all \dowallocs\ if and only if $\mathcal{D}$ is unimodular.
\end{theorem}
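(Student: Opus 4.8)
The plan is to obtain Theorem~\ref{thm:unimod} as an essentially immediate consequence of the Equilibrium Existence Duality (Theorem~\ref{thm:existDualExchange}) together with the transferable-utility Unimodularity Theorem (Fact~\ref{fac:unimod}). The observation I would build the whole argument on is that Definitions~\ref{def:demTypeIncEff} and~\ref{def:quasiConc} are arranged precisely so that the hypotheses ``quasiconcave'' and ``of demand type $\mathcal{D}$'' translate, utility level by utility level, into exactly the hypotheses ``concave'' and ``of demand type $\mathcal{D}$'' that Fact~\ref{fac:unimod} imposes on the valuations in the \dualecons. Thus each direction reduces to the corresponding direction of Fact~\ref{fac:unimod} after passing through the duality.

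For the ``if'' direction I would assume that $\mathcal{D}$ is unimodular and fix an arbitrary finite set $J$ of agents with quasiconcave utility functions of demand type $\mathcal{D}$, an arbitrary total endowment, and an arbitrary \dowalloc. By Theorem~\ref{thm:existDualExchange}, it suffices to produce a \ce\ in the \dualecon\ for every profile $(\ubj)_{j \in J}$ of utility levels. In that \dualecon\ agent $j$'s valuation is $\valHDef{j}$, which is concave by Definition~\ref{def:quasiConc} and of demand type $\mathcal{D}$ by Definition~\ref{def:demTypeIncEff}. Since the feasible sets of goods and the total endowment are unchanged from the original economy, and a \dowalloc\ exists there, a \dowalloc\ also exists in each \dualecon. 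Hence the ``if'' direction of Fact~\ref{fac:unimod} yields a \ce\ in each \dualecon, and Theorem~\ref{thm:existDualExchange} lifts this to a \ce\ in the original economy.

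For the ``only if'' direction I would assume that $\mathcal{D}$ is not unimodular and apply the ``only if'' direction of Fact~\ref{fac:unimod} to obtain a finite set of agents with concave valuations $\valFn{j}$ of demand type $\mathcal{D}$, and a total endowment admitting \dowallocs\ but no \ce\ in the transferable-utility economy. The step that does the work is to reinterpret these valuations as quasilinear utility functions (Example~\ref{eg:quasilin}). Such a utility function is quasiconcave and of demand type $\mathcal{D}$, because its \dualval\ at each utility level $\ub$ is the translate $\val{j}{\cdot} - \ub$, which has the same convex hulls of demand sets and the same adjacently demanded pairs as $\valFn{j}$, hence the same concavity and the same demand type. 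Since Marshallian demand under quasilinear utility is endowment-independent and equals $\dQL{j}{\p}$, a \ce\ in this economy for any \dowalloc\ would be exactly a \ce\ in the transferable-utility economy, which does not exist; so \ce\ fails to exist for some (indeed every) \dowalloc\ at this total endowment.

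The hard part will not be any single computation but confirming that the two ``translations'' line up cleanly: I would verify carefully (i) that concavity and the demand-type property of $\valHDef{j}$ hold \emph{simultaneously} at every utility level, so that the \dualecons\ satisfy both hypotheses of Fact~\ref{fac:unimod} at once, and (ii) that \dowallocs\ persist in the \dualecons\ because only the goods-feasibility data, which is shared, matters for their existence. Both checks are immediate from the definitions, so all the genuine content sits in Theorem~\ref{thm:existDualExchange} and Fact~\ref{fac:unimod}; the theorem is the payoff of having arranged Definitions~\ref{def:demTypeIncEff} and~\ref{def:quasiConc} to make the income-effects statement coincide, economy by economy, with the quasilinear one.
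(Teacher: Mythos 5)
Your proposal is correct and follows essentially the same route as the paper: the ``if'' direction passes through the Equilibrium Existence Duality to the \dualecons, where quasiconcavity and the demand-type condition translate by definition into the hypotheses of the transferable-utility Unimodularity Theorem, and the ``only if'' direction reduces to the quasilinear case (which the paper dispatches by noting it is a special case of Fact~\ref{fac:unimod}, exactly the observation you verify via $\valH{j}{\bun}{\ub} = \val{j}{\bun} - \ub$).
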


The ``only if'' direction of Theorem~\ref{thm:unimod} is a special case of the Unimodularity Theorem with Transferable Utility (Fact~\ref{fac:unimod}).
The ``if'' direction of Theorem~\ref{thm:unimod} is an immediate consequence of the Equilibrium Existence Duality and Fact~\ref{fac:unimod}.

\begin{proof}[Proof of the ``if" direction of Theorem~\ref{thm:unimod}]
Consider a finite set $J$ of agents with quasiconcave preferences of demand type $\mathcal{D}$ and a total endowment for which \andowalloc\ exists.
By definition, the agents' \dualval{s} at all utility levels are concave and of demand type $\mathcal{D}$.
Hence, \ces\ exist in the \dualecons\ for all profiles of utility levels by the ``if'' direction of Fact~\ref{fac:unimod}.
By the ``if'' direction of Theorem~\ref{thm:existDualExchange}, \ces\ must therefore exist in the original economy for all \dowallocs.
%
\end{proof}

\cite{DaKoMu:01} proved a version of the ``if'' direction of Theorem~\ref{thm:unimod} under the assumptions that utility functions are monotone in goods, that consumption of goods is nonnegative, and that the total endowment is strictly positive (see Theorems 2 and 4 in \cite{DaKoMu:01}).\footnote{\citeposs{DaKoMu:01} existence result is not formally a special case of ours because they allowed for unbounded sets $\Feas{j}$ of feasible bundles of goods.}
Note that they formulated their result in terms of Fact~\ref{fac:dualPrefs} and a condition on the ranges of 
demand correspondences 
(see their Assumption 3$'$) instead of in terms of unimodular demand types.

\citeposs{DaKoMu:01} approach was to show the existence of \ce\ in a convexified economy and that, under unimodularity, \ces\ in the convexified economy give rise to \ces\ in the original economy.
In contrast, our approach of using the Equilibrium Existence Duality illuminates the role of substitution effects in ensuring the existence of \ce.
Moreover, it yields a maximal domain result for unimodular demand types with income effects.

\begin{proposition}
\label{prop:netUnimodMaxDomain}
Let $\mathcal{D}$ be a maximal unimodular \dtvs.
If $|J| \ge 2$ and $\utilFn{j}$ is not quasiconcave or not of demand type $\mathcal{D}$, then there exist sets $\Feas{k}$ of feasible bundles and concave valuations $\valFn{k}: \Feas{k} \to \mathbb{R}$ of demand type $\mathcal{D}$ for agents $k \not= j$, as well as a total endowment and \andowalloc, for which no \ce\ exists.
\end{proposition}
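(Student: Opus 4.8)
The plan is to follow the template of the proof of Proposition~\ref{prop:netSubstMaxDomain}, replacing the substitutes maximal-domain input (Fact~\ref{fac:subMaxDomain}) with its demand-type counterpart (Fact~\ref{fac:unimodMaxDomain}) and then invoking the ``only if'' direction of the Equilibrium Existence Duality (Theorem~\ref{thm:existDualExchange}). First I would extract a single offending utility level for agent $j$. By hypothesis, $\utilFn{j}$ is either not quasiconcave or not of demand type $\mathcal{D}$. In the first case, Definition~\ref{def:quasiConc} yields a utility level $\ubj$ at which the \dualval\ $\valH{j}{\cdot}{\ubj}$ is non-concave; in the second case, Definition~\ref{def:demTypeIncEff} yields a utility level $\ubj$ at which $\valH{j}{\cdot}{\ubj}$ is not of demand type $\mathcal{D}$. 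Either way there is a single $\ubj$ at which $\valH{j}{\cdot}{\ubj}$ is non-concave or not of demand type $\mathcal{D}$, which is exactly the hypothesis on agent $j$'s valuation required to invoke Fact~\ref{fac:unimodMaxDomain}.

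Next I would apply Fact~\ref{fac:unimodMaxDomain} with $\valH{j}{\cdot}{\ubj}$ in the role of the offending valuation $\valFn{j}$. Since $\mathcal{D}$ is a maximal unimodular \dtvs, this produces feasible sets $\Feas{k}$ and concave valuations $\valFn{k}$ of demand type $\mathcal{D}$ for the agents $k \ne j$, together with a total endowment, such that \andowalloc\ exists but no \ce\ exists in the transferable utility economy in which agent $j$'s valuation is $\valH{j}{\cdot}{\ubj}$.

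Finally I would realize this transferable utility economy as a \dualecon\ of the original economy and close the argument by duality. Endow each agent $k \ne j$ with the quasilinear utility function of valuation $\valFn{k}$ (Example~\ref{eg:quasilin}); then agent $k$'s \dualval\ at any utility level equals $\valFn{k}$ up to an additive constant, so it has the same demand and remains concave and of demand type $\mathcal{D}$. Taking the profile of utility levels to be $\ubj$ for agent $j$ and arbitrary levels for the others, the associated \dualecon\ has precisely the demands of the economy supplied by Fact~\ref{fac:unimodMaxDomain}, and hence admits no \ce. Because the total endowment decomposes into feasible bundles, \andowalloc\ also exists in the original economy, so the standing hypothesis of Theorem~\ref{thm:existDualExchange} is met; the contrapositive of its ``only if'' direction then delivers \andowalloc\ for which no \ce\ exists.

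I expect the only genuine work to lie in this last step: one must check that the transferable utility economy produced by Fact~\ref{fac:unimodMaxDomain} really does arise as a \dualecon\ (which is why the agents $k \ne j$ are taken quasilinear, so that the demand-irrelevant constant shifts in their \dualvals\ cause no difficulty) and that the chosen total endowment admits \andowalloc\ in the original economy. The passage from the disjunctive hypothesis to a single offending utility level, and the invocation of Fact~\ref{fac:unimodMaxDomain}, are then immediate.
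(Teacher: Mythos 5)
Your proposal is correct and follows essentially the same route as the paper's proof: extract a single utility level $\ubj$ at which $\valH{j}{\cdot}{\ubj}$ is non-concave or not of demand type $\mathcal{D}$, feed it to Fact~\ref{fac:unimodMaxDomain}, and conclude via the contrapositive of the ``only if'' direction of Theorem~\ref{thm:existDualExchange}. The paper leaves implicit the step you spell out---that giving the agents $k \neq j$ quasilinear utilities makes the transferable utility economy from Fact~\ref{fac:unimodMaxDomain} arise as a \dualecon\ (since their \dualvals\ are then just $\valFn{k}$ shifted by demand-irrelevant constants)---so your writeup is the same argument with that bookkeeping made explicit.
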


Proposition~\ref{prop:netUnimodMaxDomain} is an immediate consequence of the Equilibrium Existence Duality and the maximal domain result for unimodular demand types under the transferability of utility.
 
\begin{proof}
By definition, there exists a utility level $\ub$ at which agent $j$'s \dualval\ $\valHDef{j}$ is non-concave or not of demand type $\mathcal{D}$.
In either case, Fact~\ref{fac:unimodMaxDomain} implies that there exist sets $\Feas{k}$ of feasible bundles and concave valuations $\valFn{k}: \Feas{k} \to \mathbb{R}$ of demand type $\mathcal{D}$ for agents $k \not= j$, and a total endowment for which \andowalloc\ exists but no \ce\ would exist with transferable utility if agent $j$'s valuation were $\valHDef{j}$.
With those sets $\Feas{k}$ of feasible bundles and valuations $\valFn{k}$ for agents $k \not= j$ and that total endowment, the ``only if" direction of Theorem~\ref{thm:existDualExchange} implies that there exists \andowalloc\ for which no \ce\ exists.
\end{proof}

Intuitively, Proposition~\ref{prop:netUnimodMaxDomain} and Theorem~\ref{thm:unimod} suggest that Definition~\ref{def:demTypeIncEff} is the most general way to incorporate income effects into unimodular demand types from the quasilinear setting and ensure the existence of \ce.
Indeed, Proposition~\ref{prop:netUnimodMaxDomain} entails that any domain of preferences that contains all concave quasilinear preferences of a maximal, unimodular demand type and guarantees the existence of \ce\ must lie within the corresponding demand type constructed in Definition~\ref{def:demTypeIncEff}.



\subsection{The Strong Substitutes Demand Type and Net \Substity\ with Multiple Units}
\label{sec:ssub}

We now use the case of Theorem~\ref{thm:unimod} for the strong substitutes demand type to extend Theorem~\ref{thm:netSubExist} to settings in which agents can demand multiple units of some goods.
In such settings, if utility is transferable, the \substity\ condition needed to ensure the existence of \ce\ is \emph{\ssubstity}---the condition requiring that agents see units of goods as \subst\ \citep{MiSt:09}.
As \cite{shioura2015gross} and \cite{BaKl:19} showed, there is a close relationship between strong (net) \substity\ and the strong substitutes demand type.\footnote{Requiring that different goods, rather than different units of goods, be \subst\ leads to a condition called \emph{\osubstity}.  However, \osubstity\ does not ensure the existence of \ce\ when agents can demand multiple units of some goods \citep*{DaKoLa:2003,MiSt:09,BaKl:19}.  \Osubstity\ in turn corresponds to an ``ordinary substitutes'' demand type (see Definitions 3.4 and 3.5 and Proposition 3.6 in \cite{BaKl:19}).}

\begin{definition}[\SSubstity]
\begin{enumerate}[label=(\alph*)]
\item A valuation is a \emph{\ssubst\ valuation} if it corresponds to a \subst\ valuation when each unit of each good is regarded as a separate good.
\item A utility function is a \emph{\snsubst\ utility function} if it corresponds to a \nsubst\ utility function when each unit of each good is regarded as a separate good.
\end{enumerate}
\end{definition}

\begin{fact}
\label{fac:ssubDemTypeConc}
A valuation (resp.~utility function) is a strong (net) \subst\ valuation (resp. utility function) if and only if it is concave (resp.~quasiconcave) and of the strong substitutes demand type.\footnote{The quasilinear case of this fact is part of Theorem 4.1(i) in \cite{shioura2015gross} (see also Proposition 3.10 in \cite{BaKl:19}).
The general case follows from the quasilinear case by Lemma~\ref{lem:dHvalH} and Remark~\ref{rem:netSubDual}.}\footnote{In particular, if agent $j$ demands at most one unit of each good, then $\utilFn{j}$ is a \nsubst\ utility function if and only if it is of the strong substitutes demand type.}
\end{fact}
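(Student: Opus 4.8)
The plan is to take the quasilinear (valuation) case as given---it is Theorem 4.1(i) of \cite{shioura2015gross}, equivalently Proposition 3.10 of \cite{BaKl:19}---and to deduce the general (utility-function) case from it by passing to the good space $\tilde I$ in which each unit of each good is treated as a separate good. The one structural fact I need is that forming \dualvals\ commutes with this expansion: because the expenditure-minimization problem of Lemma~\ref{lem:dHvalH} depends only on the total money and the total quantity of each good held---not on how units are labeled---the \dualval\ of the expanded utility function at a level $\ub$ is exactly the expansion of the \dualval\ $\valHDef{j}$ of the original utility function at $\ub$. I would record this as the first step.

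With that in hand, I would chain the definitions. By definition, a utility function is a \snsubst\ utility function if and only if, in the expanded space $\tilde I$, it is a \nsubst\ utility function. In $\tilde I$ the agent demands at most one unit of each good, so Remark~\ref{rem:netSubDual} applies and this is equivalent to the \dualvals\ in $\tilde I$, at all utility levels, being \subst\ valuations. By the commuting step, those \dualvals\ are the expansions of the $\valHDef{j}$, so---unwinding the definition of a \ssubst\ valuation---the condition is equivalent to $\valHDef{j}$ being a \ssubst\ valuation for every utility level $\ub$.

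Finally I would apply the quasilinear case levelwise: $\valHDef{j}$ is a \ssubst\ valuation if and only if it is concave and of the strong substitutes demand type. Quantifying over all $\ub$ and invoking Definition~\ref{def:quasiConc} (quasiconcavity is concavity of every \dualval) and Definition~\ref{def:demTypeIncEff} (a utility function is of demand type $\mathcal{D}$ precisely when every \dualval\ is of demand type $\mathcal{D}$), this says exactly that the utility function is quasiconcave and of the strong substitutes demand type. Composing the equivalences yields the claim.

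The only genuine obstacle is the commuting step of the first paragraph; everything else is a formal matching of definitions. That step, however, is short: it rests only on the observation that relabeling units of goods changes neither $\utilFn{j}$ nor the value of the expenditure-minimization program defining $\cfFn{j}$, and hence leaves $\valHDef{j}=-\cf{j}{\cdot}{\ub}$ unchanged up to the identification of goods.
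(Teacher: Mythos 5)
Your proposal is correct and is essentially the paper's own argument: the paper's footnote derives the general case from the quasilinear case of \cite{shioura2015gross} via Lemma~\ref{lem:dHvalH} and Remark~\ref{rem:netSubDual}, exactly the chain of equivalences you describe. Your explicit verification of the commuting step---that the \dualval\ of the unit-expanded utility function is the unit-expansion of $\valHDef{j}$, because the expenditure-minimization program depends only on total quantities---is a detail the paper leaves implicit, and it is the right justification.
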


As the strong substitutes \dtvs\ is unimodular \citep{poincare1900second}, the existence of \ce\ under \snsubstity\ is therefore a special case of the Unimodularity Theorem with Income Effects.

\begin{corollary}
\label{cor:snsubExist}
If all agents have \snsubst\ utility functions, then \ces\ exist for all \dowallocs.
\end{corollary}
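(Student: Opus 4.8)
The plan is to obtain Corollary~\ref{cor:snsubExist} as the strong-substitutes special case of the Unimodularity Theorem with Income Effects (Theorem~\ref{thm:unimod}). Since Theorem~\ref{thm:unimod} is phrased in terms of quasiconcave utility functions of a fixed \dtvs\ $\mathcal{D}$, the entire task reduces to (i) recognizing that \snsubst\ utility functions are exactly the quasiconcave utility functions of one particular demand type, and (ii) checking that this demand type is unimodular.

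First I would invoke Fact~\ref{fac:ssubDemTypeConc}, which states that a utility function is a \snsubst\ utility function if and only if it is quasiconcave and of the strong substitutes demand type. Thus, under the hypothesis that every agent has a \snsubst\ utility function, each agent's preferences are quasiconcave and of demand type equal to the strong substitutes \dtvs\ (Example~\ref{eg:ssubDemType}). Second, I would record that the strong substitutes \dtvs\ is unimodular, as shown by \cite{poincare1900second}. With $\mathcal{D}$ taken to be the strong substitutes \dtvs, the hypotheses of the ``if'' direction of Theorem~\ref{thm:unimod} are therefore met: the agents form a finite set with quasiconcave utility functions of the unimodular demand type $\mathcal{D}$. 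Applying that direction yields that \ces\ exist for all total endowments and all \dowallocs, which is the desired conclusion.

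There is essentially no obstacle to overcome here: all of the substantive work has already been carried out in establishing the Equilibrium Existence Duality (Theorem~\ref{thm:existDualExchange}), the Unimodularity Theorem with Income Effects (Theorem~\ref{thm:unimod}), and the two supporting facts---the characterization of \snsubstity\ in Fact~\ref{fac:ssubDemTypeConc} and the unimodularity of the strong substitutes \dtvs. The only point that warrants care is keeping the multi-unit bookkeeping consistent: \snsubstity\ is defined by treating each unit of each good as a separate good, so I would confirm that the strong substitutes \dtvs\ and the statement of Theorem~\ref{thm:unimod} are applied in that same expanded coordinate system (where the characterization of Fact~\ref{fac:ssubDemTypeConc} lives). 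Once that alignment is verified, the corollary follows immediately as an instance of Theorem~\ref{thm:unimod}.
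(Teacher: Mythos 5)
Your proof is correct and matches the paper's own derivation exactly: the paper likewise obtains Corollary~\ref{cor:snsubExist} by combining Fact~\ref{fac:ssubDemTypeConc} with the unimodularity of the strong substitutes \dtvs\ \citep{poincare1900second} and then applying the ``if'' direction of Theorem~\ref{thm:unimod}. The coordinate-system alignment you flag at the end is already built into Fact~\ref{fac:ssubDemTypeConc} itself, whose content is precisely that \snsubstity\ (defined by treating each unit as a separate good) is equivalent to quasiconcavity plus membership in the strong substitutes demand type in the original good space, so no further verification is needed there.
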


Corollary~\ref{cor:snsubExist} can also be proven directly using the Equilibrium Existence Duality and the existence of \ce\ under \ssubstity\ in transferable utility economies  \citep{MiSt:09,ikebe2015stability}.
Theorem~\ref{thm:netSubExist} is the special case of Corollary~\ref{cor:snsubExist} for settings in which agents demand at most one unit of each good.
As there are unimodular \dtvs{s} unrelated to the strong substitutes \dtvs\ (such as the one in Example~\ref{eg:5D}), Theorem~\ref{thm:unimod} is strictly more general than Corollary~\ref{cor:snsubExist} (and hence Theorem~\ref{thm:netSubExist}).
In particular, Theorem~\ref{thm:unimod} also illustrates that certain patterns of net complementarities can also be compatible with the existence of \ce.

As the strong substitutes \dtvs\ is maximal as a unimodular \dtvs\ (see, e.g., Example 9 in \cite{danilov2004discrete}), Proposition~\ref{prop:netUnimodMaxDomain} yields a maximal domain result for \snsubstity.

\begin{corollary}
\label{cor:snsubMax}
If $|J| \ge 2$ and $\utilFn{j}$ is not a \snsubst\ utility function, then there exist \ssubst\ valuations $\valFn{k}$ for agents $k \not= j$, as well as a total endowment and \andowalloc, for which no \ce\ exists.
\end{corollary}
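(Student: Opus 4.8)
The plan is to derive this corollary directly from the maximal domain result for unimodular demand types (Proposition~\ref{prop:netUnimodMaxDomain}), specialized to the strong substitutes demand type. First I would fix $\mathcal{D}$ to be the strong substitutes \dtvs. Two properties of $\mathcal{D}$ are needed in order to invoke Proposition~\ref{prop:netUnimodMaxDomain}: unimodularity and maximality among unimodular \dtvs{s}. Unimodularity is the classical result of \cite{poincare1900second}, already recalled above, and maximality can be cited from Example~9 in \cite{danilov2004discrete}. With both in hand, $\mathcal{D}$ is a maximal unimodular \dtvs, so the hypotheses that Proposition~\ref{prop:netUnimodMaxDomain} places on the demand type are met.

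Next I would translate between the language of strong (net) \substity\ and the language of (quasi)concavity together with the strong substitutes demand type, using Fact~\ref{fac:ssubDemTypeConc}. For the hypothesis on agent $j$: since $\utilFn{j}$ is assumed not to be a \snsubst\ utility function, the utility-function case of Fact~\ref{fac:ssubDemTypeConc} implies that $\utilFn{j}$ is either not quasiconcave or not of the strong substitutes demand type---precisely the condition on agent $j$ required by Proposition~\ref{prop:netUnimodMaxDomain}. Applying that proposition then supplies feasible sets $\Feas{k}$ and concave valuations $\valFn{k}\colon \Feas{k} \to \R$ of the strong substitutes demand type for the agents $k \neq j$, together with a total endowment and \andowalloc\ for which no \ce\ exists. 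For the conclusion, I would invoke the valuation case of Fact~\ref{fac:ssubDemTypeConc} in the reverse direction: each concave valuation of the strong substitutes demand type is exactly a \ssubst\ valuation, so the $\valFn{k}$ produced are \ssubst\ valuations, as the corollary asserts.

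Since every step is a direct application of an already-established result, I do not expect a substantive obstacle. The only care required is to use the two directions of the equivalence in Fact~\ref{fac:ssubDemTypeConc} consistently---once on the utility function $\utilFn{j}$ to verify the hypothesis of Proposition~\ref{prop:netUnimodMaxDomain}, and once on the valuations $\valFn{k}$ to recast the output as \ssubst\ valuations---and to confirm that the strong substitutes \dtvs\ satisfies both the unimodularity and the maximality requirements of that proposition.
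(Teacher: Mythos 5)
Your proposal is correct and follows essentially the same route as the paper: the paper derives Corollary~\ref{cor:snsubMax} exactly by combining the maximality of the strong substitutes \dtvs\ as a unimodular \dtvs\ (citing Example 9 in \cite{danilov2004discrete}) with Proposition~\ref{prop:netUnimodMaxDomain}, using Fact~\ref{fac:ssubDemTypeConc} to translate between \snsubstity/\ssubstity\ and quasiconcavity/concavity plus membership in the strong substitutes demand type. Your explicit attention to using both directions of Fact~\ref{fac:ssubDemTypeConc}---once for the hypothesis on $\utilFn{j}$ and once for the conclusion about the $\valFn{k}$---matches the paper's implicit argument.
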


\section{Auction Design}
\label{sec:auctions}

\defcitealias{baldwin2020implementing}{Baldwin and Klemperer (in preparation)}

Our work has several implications for auction design.
First, our perspective of analyzing preferences by using the expenditure\-/minimization problem may yield new approaches for extending auction bidding languages to allow for income effects.

Second, our equilibrium existence results suggest that some auctions with competitive equilibrium pricing may work well for indivisible goods even in the presence of financing constraints.
One set of examples are Product-Mix Auctions, such as the one implemented by the Bank of England\footnote{See \cite{klemperer2008new,klemperer2010product,klemperer2018product} and \citetalias{baldwin2020implementing}.  Iceland planned a Product-Mix Auction for bidders with budget constraints \citep{klemperer2018product}, but that auction was for a setting with divisible goods.}---these implement
competitive equilibrium allocations assuming that the submitted sealed bids represent bidders' actual preferences, since truth-telling is a reasonable approximation in these auctions when there are sufficiently many bidders.

However, while we have shown that gross complementarities do not lead to the nonexistence of competitive equilibrium, they do create problems for dynamic auctions.
When agents see goods as gross substitutes, iteratively increasing the prices of over-demanded goods leads to a competitive equilibrium \citep{KeCr:82,FlJaJaTe:19}.
In contrast, when there are gross complementarities between goods, increases in the price of an over-demanded good can lead to other goods being under-demanded due to an income effect.
So, even though competitive equilibrium always exists when agents see goods as (strong) net substitutes, it may not be possible to find a competitive equilibrium using a monotone, dynamic auction.
In particular, simple ``activity rules'' that require bidders to bid on a smaller total number of units of goods as prices increase may result in inefficient outcomes.
So, the Product-Mix Auction approach of finding competitive equilibrium based on a single round of sealed bids seems especially useful in the presence of income effects.

\section{Conclusion}
\label{sec:conclusion}

The Equilibrium Existence Duality is a useful tool for analyzing economies with indivisible goods.
It is based on the relationship between Marshallian and Hicksian demands, and on an interpretation of Hicksian demand in terms of a quasilinear maximization problem.
The Equilibrium Existence Duality shows that competitive equilibrium exists (for all endowment allocations) if and only if competitive equilibrium exists in each of a family of Hicksian economies.
An application is that it is net substitutability, not gross substitutability, that is relevant to the existence of equilibrium.
And extending the demand types classification of valuations \citep{BaKl:19} allows us to state a Unimodularity Theorem with Income Effects that gives conditions on the patterns of substitution effects that guarantee the existence of competitive equilibrium.
In short, with income effects, just as without them, existence does not depend on agents seeing goods as substitutes; rather, substitution effects are fundamental to the existence of competitive equilibrium.

Our results point to a number of potential directions for future work.
First, it would be interesting to investigate applications of the Equilibrium Existence Duality to other results on the existence of equilibrium with transferable utility---such as those of \cite{BiMa:97}, \cite{Ma:98}, and \cite{CaOzPa:15}.
Second, our results could be used to further develop auction designs that find competitive equilibrium outcomes given the submitted bids, such as \citeposs{klemperer2008new} Product-Mix Auction. More broadly, our approach may lead to new results about the properties of economies with indivisibilities and income effects.

\appendix
\myspacing

\makeatletter
\let\c@fact=\c@theorem
\let\c@lemma=\c@theorem
\let\c@proposition=\c@theorem
\let\c@corollary=\c@theorem
\let\c@definition=\c@theorem
\let\c@claim=\c@theorem
\let\c@remark=\c@theorem
\let\c@example=\c@theorem
\makeatother

\section{Proof of Theorem~\ref{thm:existDualExchange} and Lemma~\ref{lem:dualEconSWT}}
\label{app:EEDproof}

We prove the following result, which combines Theorem~\ref{thm:existDualExchange} and Lemma~\ref{lem:dualEconSWT}.

\begin{theorem}
\label{thm:existDualExchangeSWT}
Suppose that the total endowment and the sets of feasible bundles are such that \andowalloc\ exists.
The following are equivalent.
\begin{enumerate}[label=(\Roman*)]
\item \label{cond:marshall} \Ces\ exist for all \dowallocs.
\item \label{cond:SWT} For each Pareto-efficient allocation $(\bunnj)_{j \in J}$ with $\sum_{j \in J} \bunj = \tot$, there exists a price vector $\p$ such that $\bunnj \in \dM{j}{\p}{\bunnj}$ for all agents $j$.
\item \label{cond:hicks} \Ces\ exist in the \dualecons\ for all profiles of utility levels.
\end{enumerate}
\end{theorem}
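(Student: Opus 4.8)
The plan is to establish the three-way equivalence by proving $\ref{cond:SWT}\Leftrightarrow\ref{cond:hicks}$ (which is exactly Lemma~\ref{lem:dualEconSWT}), together with $\ref{cond:marshall}\Rightarrow\ref{cond:SWT}$ and $\ref{cond:hicks}\Rightarrow\ref{cond:marshall}$; these three facts close the cycle. Throughout I would use Lemma~\ref{lem:dHvalH}, which identifies a \ce\ of the \dualecon\ for $(\ubj)_{j\in J}$ with a price $\p$ and bundles $\bunj\in\dH{j}{\p}{\ubj}$ satisfying $\sum_{j}\bunj=\tot$, and Fact~\ref{fac:dualDem}, which links Marshallian and Hicksian demand.

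For $\ref{cond:hicks}\Rightarrow\ref{cond:SWT}$, given a Pareto-efficient $(\bunnj)_{j\in J}$ with $\sum_j\bunj=\tot$, I would set $\ubj=\util{j}{\bunnj}$ and invoke \ref{cond:hicks} to obtain a \ce\ $(\p,(\hbunj)_{j})$ of the \dualecon\ for $(\ubj)$. Each $\hbunj$ attains utility level $\ubj$ at minimal expenditure, whereas $\bunnj$ merely attains $\ubj$; hence $\pall\cdot\bunnj$ is at least the cost of $\hbunj$ for every $j$. Summing over agents and cancelling the goods terms (both allocations use total goods $\tot$) shows the equilibrium spends no more money in aggregate than $(\bunnj)$; were it to spend strictly less, the money surplus could be handed to a single agent to strictly improve her while holding all others at $\ubj$, contradicting Pareto efficiency. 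Thus each inequality is tight, so $\bunnj$ is itself expenditure-minimizing and Fact~\ref{fac:dualDem}(b) gives $\bunj\in\dM{j}{\p}{\bunnj}$. For the converse $\ref{cond:SWT}\Rightarrow\ref{cond:hicks}$, given utility levels $(\ubj)$ I would take a welfare-maximizing allocation $(\bunj)$ of the \dualecon\ (which exists since the $\Feas{j}$ are finite and \andowalloc\ exists), attach the money $\cf{j}{\bunj}{\ubj}$ so that utility equals exactly $\ubj$, verify that the resulting allocation is Pareto-efficient (any dominating allocation would, after accounting for the money needed to reach the $\ubj$, contradict welfare-maximality), apply \ref{cond:SWT} to support it at a price $\p$, and read off a \dualecon\ \ce\ using Fact~\ref{fac:dualDem}(a).

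For $\ref{cond:marshall}\Rightarrow\ref{cond:SWT}$ I would run the Second-Welfare-Theorem argument in the spirit of \cite{maskin2008fundamental}: given Pareto-efficient $(\bunnj)$ with $\sum_j\bunj=\tot$, endow each agent with her own bundle $\bunnj$ and use \ref{cond:marshall} to obtain a \ce\ $(\p,(\hbunj))$ for this \dowalloc. Since each agent can afford her endowment, her equilibrium utility weakly exceeds $\util{j}{\bunnj}$; strict monotonicity in money forces budget balance and hence exact resource conservation, so any strict gain would Pareto-dominate $(\bunnj)$. Therefore every agent is indifferent, and because $\bunnj$ is affordable and attains the maximized utility, $\bunj\in\dM{j}{\p}{\bunnj}$.

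The hard part is $\ref{cond:hicks}\Rightarrow\ref{cond:marshall}$, which I would prove by a Kakutani fixed-point argument in utility space (rather than \citeposs{negishi1960welfare} marginal-utility-of-money space, which is not convex-valued under indivisibilities). Fixing \andowalloc\ $(\bunndowj)$, on a compact box $K=\prod_j[\underline a_j,\overline a_j]\subset\prod_j\feasUtil{j}$ I would define a correspondence $\Phi$ that, for each $\mathbf u=(\ubj)$, ranges over the \ces\ $(\p,(\bunj))$ of the \dualecon\ for $\mathbf u$ and pushes $\ubj$ down for agents whose equilibrium expenditure $\pall\cdot(\cf{j}{\bunj}{\ubj},\bunj)$ exceeds their budget $\pall\cdot\bunndowj$ and up for agents who underspend. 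Hypothesis \ref{cond:hicks} makes $\Phi$ nonempty-valued; continuity of $\cfFn{j}$ and a standard closed-graph argument give upper hemicontinuity. The crucial point is convex-valuedness: because utility is transferable in each \dualecon, the set of equilibrium prices is the minimizer set of the convex function $\p\mapsto\p\cdot\tot+\sum_{j}\max_{\bun\in\Feas{j}}\{\valH{j}{\bun}{\ubj}-\p\cdot\bun\}$, hence convex, and (together with affinity of budgets in $\p$) I would use this to arrange $\Phi$ to have convex values. Kakutani then yields a fixed point $\mathbf u^{*}$; since equilibrium prices are bounded (valuations on the finite $\Feas{j}$ are finite) and Condition~(\ref{eq:vlimits}) forces expenditure to blow up as $\ubj\to(\maxu{j})^-$ and to fall below any budget as $\ubj\to(\minu{j})^+$, the box can be chosen so that the fixed point is interior; the sign-adjustment structure then forces every agent's expenditure to equal her budget exactly, and Fact~\ref{fac:dualDem}(b) converts the \dualecon\ \ce\ into a \ce\ of the original economy. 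The main obstacles are engineering $\Phi$ to be simultaneously convex-valued and sign-correct across the generally non-unique equilibria, and guaranteeing interiority so that \emph{each} agent---not merely the aggregate---clears her budget.
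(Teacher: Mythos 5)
You follow the paper's own route. The paper proves Theorem~\ref{thm:existDualExchangeSWT} as the cycle \ref{cond:marshall}$\implies$\ref{cond:SWT}$\implies$\ref{cond:hicks}$\implies$\ref{cond:marshall}, and your arguments for \ref{cond:marshall}$\implies$\ref{cond:SWT} (endow each agent with her Pareto-efficient bundle, use monotonicity in money and Pareto efficiency to force indifference) and for \ref{cond:SWT}$\implies$\ref{cond:hicks} (support an allocation minimizing aggregate money expenditure at the given utility levels, verify its Pareto efficiency, convert via Fact~\ref{fac:dualDem} and Lemma~\ref{lem:dHvalH}) are the paper's arguments nearly verbatim; your direct proof of \ref{cond:hicks}$\implies$\ref{cond:SWT} is correct but redundant, since the paper obtains that direction from the cycle. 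Your plan for the hard direction \ref{cond:hicks}$\implies$\ref{cond:marshall}---Kakutani on a box of utility profiles, nonemptiness from \ref{cond:hicks}, convexity from transferability in the \dualecons---is also exactly the paper's strategy.

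However, in that hard direction the two ``main obstacles'' you flag at the end are precisely where the paper's proof does its work, and your sketch contains one false claim, so as written the argument does not yet close. First, convex values: the paper never adjusts utilities by a direct ``push''; it works in a product space, setting $\Phi(\payoffvec,\ubvec)=\payoffs{\ubvec}\times\argmin_{\hubvec}\sum_{j\in J}\payoff{j}\,\hubj$ over the utility box, where $\payoffs{\ubvec}$ is the set of net-expenditure profiles over all \ces\ of the \dualecon; the $\argmin$ factor is automatically convex-valued, and convexity of $\payoffs{\ubvec}$ is your potential-function idea \emph{plus} the missing ingredient (Claim~\ref{cl:hicksEqHelp}, via Shapley's lemma) that one fixed expenditure-minimizing allocation can be used at every equilibrium price, so that $\payoffs{\ubvec}$ is an affine image of the polyhedral set of supporting prices---this is what handles the non-uniqueness of equilibria. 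Second, your assertion that equilibrium prices are bounded is wrong in general: supporting prices are unconstrained in directions orthogonal to all feasible demand differences, so upper hemicontinuity cannot be obtained by extracting convergent price subsequences; the paper instead bounds the net expenditures themselves (Claim~\ref{cl:regT}) and proves the closed graph with Farkas's lemma (Fact~\ref{fac:farkas}). Third, no interiority of the fixed point is needed or arranged: taking $\umin{j}$ to be the endowment utility and $\umax{j}$ built from an explicit constant $K$, the paper shows $\payoff{j}\le 0$ at any fixed point (an agent with $\payoff{j}>0$ would be sent to $\umin{j}$, where her net expenditure is at most zero), then a telescoping bound gives $\ubj<\umax{j}$ strictly, whence the $\argmin$ structure forces $\payoff{j}\ge 0$; hence $\payoffvec=\zero$ and \emph{each} agent balances her budget exactly. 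These three pieces are genuine gaps in your outline, not routine details.
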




The remainder of this appendix is devoted to the proof of Theorem~\ref{thm:existDualExchangeSWT}.

\subsection{Proof of the \ref{cond:marshall}\texorpdfstring{$\implies$}{ implies }\ref{cond:SWT} Implication in Theorem~\ref{thm:existDualExchangeSWT}}

The proof of this implication is essentially identical to the proof of Theorem 3 in \cite{maskin2008fundamental}.
Consider a Pareto-efficient allocation $(\bunnj)_{j \in J}$ with $\sum_{j \in J} \bunnj = \tot.$

Let agent $j$'s endowment be $\bunndowj = \bunnj$.
By Statement~\ref{cond:marshall} in the theorem, there exists a \ce, say consisting of the price vector $\p$ and the allocation $(\hbunj)_{j \in J}$ of goods.
By the definition of \ce, we have that $\hbunj \in \dM{j}{\p}{\bunj}$ for all agents $j$.
In particular, letting $\hnumerj = \numerj - \p \cdot (\hbunj - \bunj)$ for each agent $j,$ we have that $\sum_{j \in J} \hbunnj = \sum_{j \in J} \bunnj$ and that $\util{j}{\hbunnj} \ge \util{j}{\bunnj}$ for all agents $j$.
As the allocation $(\bunnj)_{j \in J}$ is Pareto-efficient, we must have that $\util{j}{\hbunnj} = \util{j}{\bunnj}$ for all agents $j$.
It follows that $\bunj \in \dM{j}{\p}{\bunnj}$ for all agents $j$---as desired.

\subsection{Proof of the \ref{cond:SWT}\texorpdfstring{$\implies$}{ implies }\ref{cond:hicks} Implication in Theorem~\ref{thm:existDualExchangeSWT}}

Let $(\ubj)_{j \in J}$ be a profile of utility levels. 
Consider any allocation $(\bunj)_{j \in J} \in \bigtimes_{j \in J} \Feas{j}$ of goods with $\sum_{j \in J} \bunj = \tot$ that minimizes
\[\sum_{j \in J} \cf{j}{\bunj}{\ubj}\]
over all allocations $(\hbunj)_{j \in J} \in \bigtimes_{j \in J} \Feas{j}$ of goods with $\sum_{j \in J} \hbunj = \tot.$
Such an allocation exists because each set $\Feas{j}$ is finite and \andowalloc\ exists.
For each agent $j,$ let $\numerj = \cf{j}{\bunj}{\ubj}$---so $\util{j}{\bunnj} = \ubj$.

\begin{claim}
\label{cl:dualPE}
The allocation $(\bunnj)_{j \in J}$ is Pareto-efficient.
\end{claim}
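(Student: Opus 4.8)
The plan is to argue by contradiction, exploiting that the goods allocation $(\bunj)_{j \in J}$ was chosen to minimize total compensated money expenditure $\sum_{j \in J} \cf{j}{\bunj}{\ubj}$. Suppose $(\bunnj)_{j \in J}$ were not Pareto-efficient. Then there is an allocation $(\hbunnj)_{j \in J} \in \bigtimes_{j \in J} \Feans{j}$ with $\sum_{j \in J} \hbunnj = \sum_{j \in J} \bunnj$ and $\util{j}{\hbunnj} \ge \util{j}{\bunnj} = \ubj$ for every agent $j$, with strict inequality for at least one agent, say $j_0$. Writing $\hbunnj = (\hnumerj,\hbunj)$, the equality of total consumption splits into a goods part $\sum_{j \in J} \hbunj = \sum_{j \in J} \bunj = \tot$ and a money part $\sum_{j \in J} \hnumerj = \sum_{j \in J} \numerj$.

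The key step is to translate the utility inequalities into money inequalities by means of the compensation function. Since $\utilFn{j}$ is strictly increasing in money, and $\cf{j}{\hbunj}{\ubj}$ is by definition the money level at which agent $j$ attains utility exactly $\ubj$ given goods $\hbunj$, the inequality $\util{j}{\hnumerj,\hbunj} \ge \ubj$ forces $\hnumerj \ge \cf{j}{\hbunj}{\ubj}$ for every $j$; for $j_0$ the strict utility inequality forces the strict money inequality $\hnumerag{j_0} > \cf{j_0}{\hbunag{j_0}}{\ubag{j_0}}$. Summing over $j$, using money conservation, and recalling that $\numerj = \cf{j}{\bunj}{\ubj}$ by construction, I would obtain
\[\sum_{j \in J} \cf{j}{\bunj}{\ubj} = \sum_{j \in J} \numerj = \sum_{j \in J} \hnumerj > \sum_{j \in J} \cf{j}{\hbunj}{\ubj}.\]
But $(\hbunj)_{j \in J}$ is itself an allocation of goods with $\hbunj \in \Feas{j}$ and $\sum_{j \in J} \hbunj = \tot$, so this strict inequality contradicts the choice of $(\bunj)_{j \in J}$ as a minimizer of $\sum_{j \in J} \cf{j}{\bunj}{\ubj}$ over exactly such allocations. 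The contradiction establishes the claim.

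The only real subtlety — and the step I would take most care over — is the passage from utility comparisons to money comparisons: it rests on the strict monotonicity of $\utilFn{j}$ in money together with the defining identity $\util{j}{\cf{j}{\bun}{\ub},\bun} = \ub$, and on tracking that the strict inequality for $j_0$ survives the summation so that the aggregate inequality is strict. Everything else is bookkeeping; in particular, that the values $\hnumerj$ and $\cf{j}{\hbunj}{\ubj}$ lie in the open interval $(\feas{j},\infty)$ is automatic, since $\hbunnj$ is feasible and $\ubj \in \feasUtil{j}$ together with Condition~(\ref{eq:ulimits}) guarantee the compensation function is well-defined and finite there.
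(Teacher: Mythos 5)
Your proof is correct and takes essentially the same route as the paper's: both translate the utility inequalities of a hypothetical Pareto improvement into money inequalities via the strict monotonicity of $\cfFn{j}$, then combine money conservation with the minimality of $\sum_{j \in J} \cf{j}{\bunj}{\ubj}$ to reach a contradiction. The only cosmetic difference is where the contradiction lands---the paper shows a dominating allocation would need strictly more aggregate money, while you show it would contradict the minimizer property---which is just a rearrangement of the same chain of inequalities.
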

\begin{proof}
Consider any allocation $(\hbunnj)_{j \in J} \in \bigtimes_{j \in J} \Feans{j}$ with $\sum_{j \in J} \hbunj = \tot$, and $\util{j}{\hbunnj} \ge \util{j}{\bunnj} = \ubj$ for all agents $j$ with strict inequality for some $j = j_1$.
As $\cf{j}{\hbunj}{\cdot}$ is strictly increasing for each agent $j$, we must have that
\[\hnumerj = \cf{j}{\hbunj}{\util{j}{\hbunnj}} \ge \cf{j}{\hbunj}{\ubj}\] for all agents $j$ with strict inequality for $j = j_1$.
Hence, we must have that
\[\sum_{j \in J} \hnumerj 
	> \sum_{j \in J} \cf{j}{\hbunj}{\ubj} 
	\ge \sum_{j \in J} \cf{j}{\bunj}{\ubj} 
	= \sum_{j \in J} \numerj,
\]
where the second inequality follows from the definition of $(\bunj)_{j \in J}$, so the allocation $(\bunnj)_{j \in J}$ cannot be Pareto-dominated.
\end{proof}

By Claim~\ref{cl:dualPE} and Statement~\ref{cond:SWT} in the theorem, there exists a price vector $\p$ such that $\bunj \in \dM{j}{\p}{\bunnj}$ for all agents $j$.
Fact~\ref{fac:dualDem} implies that $\bunj \in \dH{j}{\p}{\ubj}$ for all agents $j$.
By Lemma~\ref{lem:dHvalH}, it follows that the price vector $\p$ and the allocation $\left(\bunj\right)_{j \in J}$ of goods comprise a \ce\ in the \dualecon\ for the profile $(\ubj)_{j \in J}$ of utility levels.

\subsection{Proof of the \ref{cond:hicks}\texorpdfstring{$\implies$}{ implies }\ref{cond:marshall} Implication in Theorem~\ref{thm:existDualExchangeSWT}}

Let $(\bunndowj)_{j \in J}$ be an endowment allocation.
For each agent $j,$ we define a utility level $\umin{j} = \util{j}{\bunndowj}$ and let
\begin{align*}
K^j &= \numerdowj - \min_{\bun \in \Feas{j}} \cf{j}{\bun}{\umin{j}},
\end{align*}
which is non-negative by construction.
Furthermore, let $K = 1 + \sum_{j \in J} K^j$ and let
\[\umax{j} = \max_{\bun \in \Feas{j}} \util{j}{\numerdowj + K, \bun}.\]

Given a profile $\ubvec = (\ubj)_{j \in J}$ of utility levels, let
\[\payoffs{\ubvec} = \left\{\left(\begin{array}{l}
\cf{j}{\bunj}{\ubj} - \numerdowj\\
+ \p \cdot (\bunj - \bundowj)
\end{array}\right)_{j \in J} \lgiv \begin{array}{l}
\left(\p,(\bunj)_{j \in J}\right) \text{ is a competitive}\\
\text{equilibrium in the \dualecon}\\
\text{for the profile } (\ubj)_{j \in J} \text{ of utility levels}
\end{array}\lgivend
\right\}\]
denote the set of profiles of net expenditures over all \ces\ in the \dualecon\ for the profile $(\ubj)_{j \in J}$ of utility levels.
As discussed in Section~\ref{sec:EED}, the strategy of the proof is to solve for a profile $\ubvec = (\ubj)_{j \in J}$ of utility levels such that $\zero \in \payoffs{\ubvec}$. 

We first show that the correspondence $\payoffFn: \bigtimes_{j \in J} [\umin{j},\umax{j}] \toto \mathbb{R}^J$ is upper hemicontinuous and has compact, convex values.
We then apply a topological fixed point argument to show that there exists a profile $\ubvec = (\ubj)_{j \in J} \in \bigtimes_{j \in J} [\umin{j},\umax{j}]$ of utility levels such that $\zero \in \payoffs{\ubvec}$.
We conclude the proof by constructing a \ce\ for the endowment allocation $(\bunndowj)_{j \in J}$ in the original economy from a \ce\ in the \dualecon\ for the profile $(\ubj)_{j \in J}$ of utility levels.

\subsubsection*{Proof of the Regularity Conditions for $\payoffFn$.}
We begin by proving that the correspondence  $\payoffFn: \bigtimes_{j \in J} [\umin{j},\umax{j}] \toto \mathbb{R}^J$ is upper hemicontinuous and has compact, convex values.
We actually give explicit bounds for the range of $\payoffFn$.
Let
\[\overline{M} = \max_{j \in J} \left\{\cf{j}{\bundowj}{\umax{j}} - \numerdowj\right\}\]
and let
\[\underline{M} =  \sum_{j \in J} \left(\min_{\bun \in \Feas{j}} \left\{\cf{j}{\bun}{\umin{j}}\right\} - \numerdowj\right) - (|J|-1) \overline{M}.\]

\begin{claim}
\label{cl:regT}
The correspondence $\payoffFn: \bigtimes_{j \in J} [\umin{j},\umax{j}] \toto \mathbb{R}^J$ is upper hemicontinuous and has compact, convex values and range contained in $[\underline{M},\overline{M}]^J$.
\end{claim}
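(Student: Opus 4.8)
The plan is to exploit the transferability of utility in the \dualecon, which forces each agent's net expenditure at any \ce\ to depend only on the equilibrium price. Fix $\ubvec = (\ubj)_{j\in J}$ in the box $\bigtimes_{j\in J}[\umin{j},\umax{j}]$. For any \ce\ $(\p,(\bunj)_{j\in J})$ in the \dualecon\ for $\ubvec$, Lemma~\ref{lem:dHvalH} gives $\bunj \in \argmax_{\bun\in\Feas{j}}\{\valH{j}{\bun}{\ubj} - \p\cdot\bun\}$, so the value $\pi^j(\p) = \valH{j}{\bunj}{\ubj} - \p\cdot\bunj$ is common to all bundles in the demand set. Since $\cf{j}{\cdot}{\ubj} = -\valH{j}{\cdot}{\ubj}$, the $j$th coordinate of the net-expenditure vector equals $-\pi^j(\p) - \numerdowj - \p\cdot\bundowj$, which depends on the \ce\ only through $\p$. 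This observation drives everything that follows.

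First I would establish the range bound. For the upper bound, expenditure minimization of $\cf{j}{\bun}{\ubj} + \p\cdot\bun$ (Lemma~\ref{lem:dHvalH}) against the alternative $\bundowj \in \Feas{j}$ gives $\cf{j}{\bunj}{\ubj} + \p\cdot\bunj \le \cf{j}{\bundowj}{\ubj} + \p\cdot\bundowj$, so the $j$th net expenditure is at most $\cf{j}{\bundowj}{\ubj} - \numerdowj \le \cf{j}{\bundowj}{\umax{j}} - \numerdowj \le \overline{M}$, using that $\cf{j}{\bundowj}{\cdot}$ is increasing and $\ubj \le \umax{j}$. For the lower bound, market clearing yields $\sum_{j\in J}\p\cdot(\bunj - \bundowj) = 0$, so the net expenditures sum to $\sum_{j\in J}(\cf{j}{\bunj}{\ubj} - \numerdowj) \ge \sum_{j\in J}(\min_{\bun\in\Feas{j}}\cf{j}{\bun}{\umin{j}} - \numerdowj)$; subtracting the bound $\overline{M}$ from each of the other $|J|-1$ coordinates shows every coordinate is at least $\underline{M}$.

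Next, convex and compact values follow from a polyhedral description of the set $P(\ubvec)$ of equilibrium prices. Let $\Phi(\p) = \sum_{j\in J}\pi^j(\p) + \p\cdot\tot$; this is convex and piecewise linear because each $\pi^j$ is a maximum of finitely many affine functions (as $\Feas{j}$ is finite). A standard computation shows $P(\ubvec) = \argmin\Phi$: every equilibrium price minimizes $\Phi$, and, conversely, at any minimizer $\Phi$ equals the maximal total valuation, which forces every welfare-maximizing allocation to be supported. Hence $P(\ubvec)$ is a polyhedron (nonempty by Statement~\ref{cond:hicks}) on which a single fixed welfare-maximizing allocation $(\bunj)_{j\in J}$ is supported; the net-expenditure map $\p \mapsto (\cf{j}{\bunj}{\ubj} - \numerdowj + \p\cdot(\bunj - \bundowj))_{j\in J}$ is then affine, so $\payoffs{\ubvec}$ is the affine image of a polyhedron, hence a polyhedron, and it is bounded by the range estimate. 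A bounded polyhedron is a polytope, so $\payoffs{\ubvec}$ is at once compact and convex.

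Finally, since the range lies in the compact box $[\underline{M},\overline{M}]^J$, upper hemicontinuity reduces to showing that $\payoffFn$ has closed graph. Given $\ubvec^n \to \ubvec$ and $\payoffvec^n \in \payoffs{\ubvec^n}$ with $\payoffvec^n \to \payoffvec$, I would pass to a subsequence along which the supporting allocation is a fixed $(\bunj)_{j\in J}$ (possible since there are finitely many allocations of $\tot$); continuity of $\valHFn{j}$ in the utility level makes $(\bunj)_{j\in J}$ welfare-maximizing for $\ubvec$, hence supported at every price in $P(\ubvec)$, which is nonempty by Statement~\ref{cond:hicks}. The main obstacle is the last step: the prices $\p^n$ need not be bounded, so one cannot directly extract a convergent subsequence. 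I expect to resolve this by observing that, because each net-expenditure coordinate is bounded and affine in $\p$ on $P(\ubvec^n)$, it is constant along the recession directions of $P(\ubvec^n)$; thus each $\payoffs{\ubvec^n}$ is the affine image of a uniformly bounded slice of the price polyhedron, whose defining inequalities have coefficients converging (by continuity of $\valHFn{j}$) to those of $P(\ubvec)$. Extracting a convergent subsequence of sliced prices—or, equivalently, invoking an LP feasibility (Farkas) argument—then exhibits a price $\p^\ast \in P(\ubvec)$ realizing the limit $\payoffvec$, giving $\payoffvec \in \payoffs{\ubvec}$ and closing the graph.
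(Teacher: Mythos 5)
Your overall architecture coincides with the paper's: you fix a single cost-minimizing (equivalently, welfare-maximizing) allocation, observe that net expenditures at a \ce\ in the \dualecon\ depend only on the price vector, describe $\payoffs{\ubvec}$ as the affine image of the polyhedron of supporting prices (this is exactly the content of the paper's Claim~\ref{cl:hicksEqHelp}, which the paper obtains from Shapley's interchangeability lemma and you obtain from the common maximized value $\pi^j(\p)$---the same fact), and your bounds giving range in $[\underline{M},\overline{M}]^J$ are derived by the identical pair of inequalities. Your derivation of compactness of values directly from polyhedrality (a bounded affine image of a polyhedron is a polytope) is a minor, valid variant; the paper instead gets closed values as a byproduct of the closed graph. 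Your appeals to Statement~\ref{cond:hicks} for nonemptiness are contextually legitimate, since Claim~\ref{cl:regT} sits inside the proof of the \ref{cond:hicks}$\implies$\ref{cond:marshall} implication, though note the paper deliberately proves the claim without \ref{cond:hicks}, reserving it for the later Kakutani step.

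The genuine gap is the closed-graph step, which you explicitly leave as a plan. Your primary route---that each $\payoffs{\ubvec^n}$ is the image of a ``uniformly bounded slice'' of the price polyhedron, from which one extracts a convergent subsequence of prices---is not justified as written: the polyhedra $P(\ubvec^n)$ vary with $n$, and uniform boundedness of some selection of solutions is not free (it would require, e.g., a Hoffman-type error bound with a constant depending only on the fixed constraint matrix, or a Minkowski--Weyl decomposition with convergence of basic solutions; you correctly note the constraint directions are fixed, but never use this to produce the bound). The clean resolution is the one you mention only as an aside, and it is exactly what the paper does. By Claim~\ref{cl:hicksEqHelp}, membership $\payoffvec \in \payoffs{\ubvec}$ is equivalent to feasibility of a linear system $\v^\ell \cdot \p \le \alpha_\ell$ (with equality on a subset $L_1$), in which the vectors $\v^\ell$ are differences of bundles---independent of $(\ubvec,\payoffvec)$---and only the scalars $\alpha_\ell$ vary, continuously, with $(\ubvec,\payoffvec)$. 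If the limit system were infeasible, Fact~\ref{fac:farkas} would produce multipliers $\lambda_\ell$ with $\sum_\ell \lambda_\ell \v^\ell = \zero$ and $\sum_\ell \lambda_\ell \alpha_\ell < 0$; by continuity the same multipliers give $\sum_\ell \lambda_\ell \alpha'_\ell < 0$ for large $m$, and the converse direction of Farkas then contradicts $\payoffvec_{(m)} \in \payoffs{\ubvec_{(m)}}$. (Equivalently: with a fixed constraint matrix, the set of right-hand sides admitting a solution is the projection of a polyhedron, hence closed.) Without this argument, or an equivalent one, written out, upper hemicontinuity---the heart of the claim---is not established.
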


The proof of Claim~\ref{cl:regT} uses the following technical description of $\payoffFn$.

\begin{claim}
\label{cl:hicksEqHelp}
Let $\ubvec = (\ubj)_{j \in J} \in \bigtimes_{j \in J} [\umin{j},\umax{j}]$ be a profile of utility levels and let $(\bunj)_{j \in J} \in \bigtimes_{j \in J} \Feas{j}$ be an allocation of goods with $\sum_{j \in J} \bunj = \tot$.
If $(\bunj)_{j \in J}$ minimizes
\[\sum_{j \in J} \cf{j}{\hbunj}{\ubj}\]
over all allocations $(\hbunj)_{j \in J} \in \bigtimes_{j \in J} \Feas{j}$ of goods with $\sum_{j \in J} \hbunj = \tot$,
then we have that
\[\payoffs{\ubvec} = \left\{\rgivend \left(\cf{j}{\bunj}{\ubj} - \numerdowj + \p \cdot (\bunj - \bundowj)\right)_{j \in J} \rgiv \p \in \mathcal{P}\right\},\]
where
\[\mathcal{P} = \left\{\p \lgiv
\cf{j}{\bunj}{\ubj} + \p \cdot \bunj \le \cf{j}{\bunpr}{\ubj} + \p \cdot \bunpr
\text{ for all } j \in J \text{ and } \bunpr \in \Feas{j}
\lgivend\right\}.\]
\end{claim}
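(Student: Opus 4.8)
The plan is to prove the asserted set equality by two inclusions, after first rephrasing both sides through Lemma~\ref{lem:dHvalH}. That lemma identifies a \ce\ in the \dualecon\ for $(\ubj)_{j\in J}$ with a pair $(\p,(\hbunj)_{j\in J})$ satisfying $\sum_{j\in J}\hbunj=\tot$ and $\hbunj\in\dH{j}{\p}{\ubj}=\argmin_{\bun\in\Feas{j}}\{\cf{j}{\bun}{\ubj}+\p\cdot\bun\}$ for every $j$, while the defining inequalities of $\mathcal{P}$ say precisely that the \emph{fixed} cost-minimizing allocation $(\bunj)_{j\in J}$ satisfies $\bunj\in\dH{j}{\p}{\ubj}$ for every $j$. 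Thus the substance of the claim is that the identity of the equilibrium allocation is immaterial: at any equilibrium price $\p$ the fixed allocation $(\bunj)_{j\in J}$ is itself demanded, and it produces the same profile of net expenditures as any equilibrium allocation does.

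For the inclusion $\payoffs{\ubvec}\subseteq\{\cdots\mid\p\in\mathcal{P}\}$, I would start from a \ce\ $(\p,(\hbunj)_{j\in J})$ and compare its allocation with $(\bunj)_{j\in J}$. Specializing agent $j$'s optimality condition to the comparison bundle $\bunj$ gives $\cf{j}{\hbunj}{\ubj}+\p\cdot\hbunj\le\cf{j}{\bunj}{\ubj}+\p\cdot\bunj$ for every $j$; summing over $j$ and using that both $(\hbunj)_{j\in J}$ and $(\bunj)_{j\in J}$ aggregate to $\tot$, the price terms cancel and I obtain $\sum_{j\in J}\cf{j}{\hbunj}{\ubj}\le\sum_{j\in J}\cf{j}{\bunj}{\ubj}$. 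Because $(\bunj)_{j\in J}$ minimizes total compensation by hypothesis, the reverse inequality also holds, so the summed inequality is in fact an equality. A sum of ``$\le$'' inequalities that holds with equality forces each summand to be an equality, so $\cf{j}{\hbunj}{\ubj}+\p\cdot\hbunj=\cf{j}{\bunj}{\ubj}+\p\cdot\bunj$ for every $j$. This one identity does both jobs at once: it shows that $\bunj$ attains the minimum defining $\dH{j}{\p}{\ubj}$, hence $\p\in\mathcal{P}$; and, after rewriting each net expenditure as $(\cf{j}{\hbunj}{\ubj}+\p\cdot\hbunj)-\numerdowj-\p\cdot\bundowj$ (and likewise for $\bunj$), it shows that replacing $\hbunj$ by $\bunj$ leaves the net-expenditure vector unchanged, so the element of $\payoffs{\ubvec}$ equals the one indexed by $\p\in\mathcal{P}$.

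The reverse inclusion is immediate from Lemma~\ref{lem:dHvalH}: for any $\p\in\mathcal{P}$ the fixed allocation satisfies $\bunj\in\dH{j}{\p}{\ubj}$ for every $j$ and sums to $\tot$, so $(\p,(\bunj)_{j\in J})$ is itself a \ce\ in the \dualecon\ and contributes exactly the net-expenditure vector $(\cf{j}{\bunj}{\ubj}-\numerdowj+\p\cdot(\bunj-\bundowj))_{j\in J}$. I expect the only real obstacle to be the middle step of the forward inclusion---upgrading the summed inequality to a term-by-term equality. This is exactly where the hypothesis that $(\bunj)_{j\in J}$ minimizes $\sum_{j\in J}\cf{j}{\bunj}{\ubj}$ is indispensable, since it is what pins the aggregate compensation of the equilibrium allocation to the common minimum and thereby collapses the aggregate inequality into per-agent equalities.
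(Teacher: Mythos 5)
Your proposal is correct and takes essentially the same approach as the paper: both reduce the claim to the observation that $\mathcal{P}$ is exactly the set of prices at which the fixed cost\-/minimizing allocation $(\bunj)_{j \in J}$ forms a \ce\ in the \dualecon, and that any equilibrium allocation can be exchanged for $(\bunj)_{j \in J}$ at the same prices without changing per-agent expenditures. The only difference is that the paper invokes this interchangeability as a cited standard lemma for transferable-utility economies (due to Shapley), whereas you prove it inline via the summation-and-cancellation argument---which is precisely the standard proof of that lemma, so your write-up is simply a self-contained version of the paper's.
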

\begin{proof}
By construction, we have that
\[\mathcal{P} = \left\{\p \lgiv \begin{array}{l}
\left(\p,(\bunnj)_{j \in J}\right) \text{ is a \ce\ in the}\\
\text{ \dualecon\ for the profile } (\ubj)_{j \in J} \text{ of utility levels}
\end{array}\lgivend
\right\}.\]
A standard lemma regarding \ces\ in transferable utility economies shows that in the \dualecon\ for the profile $(\ubj)_{j \in J}$ of utility levels, if $\left(\p,(\hbunnj)_{j \in J}\right)$ is a \ce
, then so is $\left(\p,(\bunj)_{j \in J}\right)$.
\footnote{The lemma is due to \citet[page 3]{shapley1964values}; see also \cite{BiMa:97} and \cite{HaKoNiOsWe:11}. \citet[Lemma 1]{jagadeesan2020lone} proved the lemma in a setting with multiple units that allows for non-monotone valuations.}
In this case, we have that
\[\cf{j}{\bunj}{\ubj} + \p \cdot \bunj = \cf{j}{\hbunj}{\ubj} + \p \cdot \hbunj,\]
and hence that
\[\cf{j}{\bunj}{\ubj} - \numerdowj + \p \cdot (\bunj - \bundowj) = \cf{j}{\hbunj}{\ubj} - \numerdowj + \p \cdot (\hbunj - \bundowj),\]
for all agents $j$.
The claim follows.
\end{proof}

\begin{proof}[Proof of Claim~\ref{cl:regT}]
It suffices to show that $\payoffFn$ has convex values, range contained in $[\underline{M},\overline{M}]^J$, and a closed graph.

We first show that $\payoffs{\ubvec}$ is convex for all $\ubvec \in \bigtimes_{j \in J} [\umin{j},\umax{j}]$.
We use the notation of Claim~\ref{cl:hicksEqHelp} to prove this assertion.
Note that $\mathcal{P}$ is the set of solutions to a set of linear inequalities, and is hence convex.
Claim~\ref{cl:hicksEqHelp} implies that $\payoffs{\ubvec}$ is the set of values of a linear function on  $\mathcal{P}$---so it follows that $\payoffs{\ubvec}$ is convex as well.

We next show that $\payoffs{\ubvec} \subseteq [\underline{M},\overline{M}]^J$ holds for all $\ubvec \in \bigtimes_{j \in J} [\umin{j},\umax{j}].$
We again use the notation of Claim~\ref{cl:hicksEqHelp}.
Let $\ubvec \in \bigtimes_{j \in J} [\umin{j},\umax{j}]$ and $\payoffvec \in \payoffs{\ubvec}$ be arbitrary.
By Claim~\ref{cl:hicksEqHelp}, there exists $\p \in \mathcal{P}$ such that
\[\payoff{j} = \cf{j}{\bunj}{\ubj} - \numerdowj + \p \cdot (\bunj - \bundowj)\]
for all agents $j$.
Note that for all agents $j,$ we must have that
\[\payoff{j} \le \cf{j}{\bundowj}{\ubj} - \numerdowj \le \cf{j}{\bundowj}{\umax{j}} - \numerdowj \le \overline{M},\]
where the first inequality holds due to the definition of $\mathcal{P}$, the second inequality holds because $\cf{j}{\bundowj}{\cdot}$ is strictly increasing, and the third inequality holds due to the definition of $\overline{M}$.
Furthermore, as $\sum_{j \in J} \bunj = \tot = \sum_{j \in J} \bundowj$, we have that
\begin{align*}
\sum_{j \in J} \payoff{j} &= \sum_{j \in J} (\cf{j}{\bunj}{\ubj} - \numerdowj).
\end{align*}
It follows that
\begin{align*}
\payoff{j} &= \sum_{k \in J} (\cf{k}{\bunag{k}}{\ubag{k}} - \numerdowag{k}) - \sum_{k \in J \ssm \{j\}} \payoff{k}\\
&\ge \sum_{k \in J} (\cf{k}{\bunag{k}}{\umin{k}} - \numerdowag{k}) - \sum_{k \in J \ssm \{j\}} \payoff{k}\\
&\ge \sum_{k \in J} (\cf{k}{\bunag{k}}{\umin{k}} - \numerdowag{k}) - (|J|-1) \overline{M}\\
&\ge \underline{M}
\end{align*}
for all agents $j$, where the first inequality holds because $\cf{k}{\bunag{k}}{\cdot}$ is increasing for each agent $k$, the second inequality holds because $\payoff{k} \le \overline{M}$ for all agents $k$, and the third inequality holds due to the definition of $\underline{M}$.

\newcommand\ubvecseq[1]{\ubvec_{(#1)}}
\newcommand\ubjseq[1]{\ubj_{(#1)}}
\newcommand\payoffjseq[1]{\payoff{j}_{(#1)}}
\newcommand\payoffvecseq[1]{\payoffvec_{(#1)}}

Last, we show that $\payoffFn$ has a closed graph.
Our argument uses the following version of Farkas's Lemma.

\begin{fact}[Page 200 of \citealp{rockafellar1970convex}\footnote{Theorem 22.1 in \cite{rockafellar1970convex} states the case of Fact~\ref{fac:farkas} in which $L_1 = \emptyset$.  The version of Fact~\ref{fac:farkas} for $L_1 \not= \emptyset$ is left as an exercise on page 200 of \cite{rockafellar1970convex}.}]
\label{fac:farkas}
Let $L_1,L_2$ be disjoint, finite sets and, for each $\ell \in L_1 \cup L_2,$ let $\v^\ell \in \R^I$ be a vector and let $\alpha_\ell$ be a scalar.
There exist scalars $\lambda_\ell$ for $\ell \in L_1 \cup L_2$ with $\lambda_\ell \ge 0$ for $\ell \in L_2$ such that
\[\sum_{\ell \in L_1 \cup L_2} \lambda_\ell \v^\ell = \zero \quad \text{and} \sum_{\ell \in L_1 \cup L_2} \lambda_\ell \alpha_\ell < 0\]
if and only if there does not exist a vector $\p \in \R^I$ such $\v^\ell \cdot \p \le \alpha_\ell$ for all $\ell \in L_1 \cup L_2$ with equality for all $\ell \in L_1.$
\end{fact}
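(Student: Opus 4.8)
The plan is to obtain this mixed-constraint version of Farkas's Lemma---in which the constraints indexed by $L_1$ are equalities and carry sign-unrestricted multipliers---from the standard pure-inequality alternative (the case $L_1 = \emptyset$, which is Theorem 22.1 of \cite{rockafellar1970convex}) by the familiar device of writing each equality as two opposite inequalities and correspondingly freeing the sign of its dual multiplier. No new analytic input is needed; everything reduces to bookkeeping on top of the cited result.

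First I would rewrite the primal system using weak inequalities only. Requiring $\p \in \R^I$ to satisfy $\v^\ell \cdot \p \le \alpha_\ell$ for every $\ell \in L_1 \cup L_2$ with equality on $L_1$ is the same as requiring $\v^\ell \cdot \p \le \alpha_\ell$ for $\ell \in L_2$ together with both $\v^\ell \cdot \p \le \alpha_\ell$ and $(-\v^\ell) \cdot \p \le -\alpha_\ell$ for each $\ell \in L_1$; the feasible set is literally unchanged. This augmented system has no equality constraints, so the standard alternative applies to it directly: it is infeasible if and only if there exist nonnegative multipliers $\mu_\ell$ for $\ell \in L_2$ and $\mu_\ell^+, \mu_\ell^-$ for the two copies of each $\ell \in L_1$ with $\sum_{\ell \in L_2}\mu_\ell\v^\ell + \sum_{\ell \in L_1}(\mu_\ell^+ - \mu_\ell^-)\v^\ell = \zero$ and $\sum_{\ell \in L_2}\mu_\ell\alpha_\ell + \sum_{\ell \in L_1}(\mu_\ell^+ - \mu_\ell^-)\alpha_\ell < 0$.

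The second step is a change of variables between the two dual descriptions. Setting $\lambda_\ell = \mu_\ell \ge 0$ for $\ell \in L_2$ and $\lambda_\ell = \mu_\ell^+ - \mu_\ell^-$ (an arbitrary real) for $\ell \in L_1$ turns the two conditions above into $\sum_{\ell \in L_1 \cup L_2}\lambda_\ell\v^\ell = \zero$ and $\sum_{\ell \in L_1 \cup L_2}\lambda_\ell\alpha_\ell < 0$, which is exactly the certificate in the statement; conversely, given any $\lambda_\ell$ with $\lambda_\ell \ge 0$ on $L_2$, the choice $\mu_\ell = \lambda_\ell$ on $L_2$ and $\mu_\ell^+ = \max\{\lambda_\ell,0\}$, $\mu_\ell^- = \max\{-\lambda_\ell,0\}$ on $L_1$ recovers a valid nonnegative multiplier vector for the augmented system. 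Since these two maps are mutually inverse at the level of the alternatives, the biconditional of the standard lemma transfers verbatim to the mixed system. The only point requiring care is the bookkeeping itself---confirming that an equality constraint corresponds to a sign-free multiplier, which is precisely what the decomposition $\lambda_\ell = \mu_\ell^+ - \mu_\ell^-$ encodes---so I expect this to be the main (though modest) obstacle rather than any substantive step.
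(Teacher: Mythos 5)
Your proof is correct and is exactly the argument the paper implicitly relies on: the paper cites Rockafellar's Theorem 22.1 for the pure-inequality case ($L_1 = \emptyset$) and notes that the mixed case is left as an exercise there, and your reduction---splitting each equality into two opposite inequalities and recombining the resulting nonnegative multipliers into a sign-free one via $\lambda_\ell = \mu_\ell^+ - \mu_\ell^-$---is the standard solution to that exercise. No gaps; the certificate correspondence in both directions is handled properly.
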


Consider a sequence $\ubvecseq{1},\ubvecseq{2},\ldots \in \bigtimes_{j \in J} [\umin{j},\umax{j}]$ of profiles of utility levels.
For each $m$, let $\payoffvecseq{m} \in \payoffs{\ubvecseq{m}}$.
Suppose that $\ubvecseq{m} \to \ubvec$ and $\payoffvecseq{m} \to \payoffvec$ as $m \to \infty$.
We need to show that $\payoffvec \in \payoffs{\ubvec}$.

As each set $\Feas{j}$ is finite and \andowalloc\ exists, by passing to a subsequence we can assume that there exists an allocation $(\bunj)_{j \in J} \in \bigtimes_{j \in J} \Feas{j}$ of goods with $\sum_{j \in J} \bunj = \tot$ that, for each $m,$ minimizes
\[\sum_{j \in J} \cf{j}{\hbunj}{\ubjseq{m}}\]
over all allocations $(\hbunj)_{j \in J} \in \bigtimes_{j \in J} \Feas{j}$ of goods with $\sum_{j \in J} \hbunj = \tot.$
By the continuity of $\cf{j}{\hbunj}{\ub}$ in $\ub$ for each agent $j$, the allocation $(\bunj)_{j \in J}$ minimizes \[\sum_{j \in J} \cf{j}{\hbunj}{\ubj}\]
over all allocations $(\hbunj)_{j \in J} \in \bigtimes_{j \in J} \Feas{j}$ of goods with $\sum_{j \in J} \hbunj = \tot.$

Suppose for sake of deriving a contradiction that $\payoffvec \notin \payoffs{\ubvec}$.
Let $L_1 = J$ and let $L_2 = \bigcup_{j \in J} \{j\} \times \Feas{j}.$
Define vectors $\v^\ell \in \mathbb{R}^I$ for $\ell \in L_1 \cup L_2$ by
\[
\v^\ell = \begin{cases}
\bunj - \bundowj & \text{ for } \ell = j \in L_1\\
\bunj - \bunpr & \text{ for } \ell = (j,\bunpr) \in L_2\\
\end{cases}
\]
and scalars $\alpha_\ell$ for $\ell \in L_1 \cup L_2$ by
\[
\alpha_\ell = \begin{cases}
\cf{j}{\bunj}{\ubj} - \numerdowj - \payoff{j} & \text{ for } \ell = j \in L_1\\
\cf{j}{\bunpr}{\ubj} - \cf{j}{\bunj}{\ubj} & \text{ for } \ell = (j,\bunpr) \in L_2.
\end{cases}
\]
By Claim~\ref{cl:hicksEqHelp}, there does not exist a price vector $\p$ such that $\v^\ell \cdot \p \le \alpha_\ell$ for all $\ell \in L_1 \cup L_2$ with equality for all $\ell \in L_1.$
The ``if'' direction of Fact~\ref{fac:farkas} therefore guarantees that there exist scalars $\lambda_\ell$ for $\ell \in L_1 \cup L_2$ with $\lambda_\ell \ge 0$ for all $\ell \in L_2$ such that
\[\sum_{\ell \in L_1 \cup L_2} \lambda_\ell \v^\ell = \zero \quad \text{and} \quad
\sum_{\ell \in L_1 \cup L_2} \lambda_\ell \alpha_\ell < 0.\]
By the definition of the scalars $\alpha_\ell,$ we have that
\[
\sum_{j \in J} \lambda_j \left(\cf{j}{\bunj}{\ubj} - \numerdowj - \payoff{j}\right) + \sum_{j \in J} \sum_{\bunpr \in \Feas{j}} \lambda_{j,\bunpr} \left(\cf{j}{\bunpr}{\ubj} - \cf{j}{\bunj}{\ubj}\right) < 0.
\]
Due the continuity of $\cf{j}{\hbunj}{\ub}$ in $\ub$ for each agent $j$ and because $\ubvecseq{m} \to \ubvec$ and $\payoffvecseq{m} \to \payoffvec$ as $m \to \infty$, there must exist $m$ such that
\[
\sum_{j \in J} \lambda_j \left(\cf{j}{\bunj}{\ubjseq{m}} - \numerdowj - \payoffjseq{m}\right) + \sum_{j \in J} \sum_{\bunpr \in \Feas{j}} \lambda_{j,\bunpr} \left(\cf{j}{\bunpr}{\ubjseq{m}} - \cf{j}{\bunj}{\ubjseq{m}}\right) < 0.
\]
Defining scalars $\alpha'_\ell$ for $\ell \in L_1 \cup L_2$ by
\begin{align*}
\alpha'_\ell &= \begin{cases}
\cf{j}{\bunj}{\ubjseq{m}} - \numerdowj - \payoffjseq{m} & \text{ for } \ell = j \in L_1\\
\cf{j}{\bunpr}{\ubjseq{m}} - \cf{j}{\bunj}{\ubjseq{m}} & \text{ for } \ell = (j,\bunpr) \in L_2,
\end{cases}
\end{align*}
we have that
\[\sum_{\ell \in L_1 \cup L_2} \lambda_\ell \v^\ell = \zero \quad \text{and that} \quad
\sum_{\ell \in L_1 \cup L_2} \lambda_\ell \alpha'_\ell < 0.\]
The ``only if'' implication of Fact~\ref{fac:farkas} therefore guarantees that there does not exist a price vector $\p$ such that $\v^\ell \cdot \p \le \alpha'_\ell$ for all $\ell \in L_1 \cup L_2$ with equality for all $\ell \in L_1.$
By Claim~\ref{cl:hicksEqHelp}, it follows that $\payoffvecseq{m} \notin \payoffs{\ubvecseq{m}}$---a contradiction.
Hence, we can conclude that $\payoffvec \in \payoffs{\ubvec}$---as desired.
\end{proof}

\subsubsection*{Completion of the Proof of the \ref{cond:marshall}\texorpdfstring{$\implies$}{ implies }\ref{cond:SWT} Implication in Theorem~\ref{thm:existDualExchangeSWT}.}

We first solve for a profile $\ubvec = (\ubj)_{j \in J}$ of utility levels such that $\zero \in \payoffs{\ubvec}.$

\begin{claim}
\label{cl:fpEq}
Under Statement~\ref{cond:hicks} in Theorem~\ref{thm:existDualExchangeSWT}, there exists a profile $\ubvec = (\ubj)_{j \in J}$ of utility levels such that $\zero \in \payoffs{\ubvec}$.
\end{claim}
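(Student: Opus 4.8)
The plan is to obtain the desired profile as (a coordinate of) a fixed point of a correspondence built from $\payoffFn$ together with an auctioneer-style adjustment of utility levels, exactly as sketched in Section~\ref{sec:EED}. Write $U = \bigtimes_{j \in J}[\umin{j},\umax{j}]$ for the compact, convex box of admissible utility profiles, and recall from Claim~\ref{cl:regT} that $\payoffFn \colon U \toto [\underline{M},\overline{M}]^J$ is upper hemicontinuous with compact, convex values, which are moreover nonempty under Statement~\ref{cond:hicks}. I would define an adjustment correspondence $A \colon [\underline{M},\overline{M}]^J \toto U$ that raises the utility targets of agents who underspend and lowers those of agents who overspend,
\[A(\payoffvec) = \argmax_{\hubvec \in U}\left(-\sum_{j \in J}\payoff{j}\,\hubj\right),\]
so that coordinate $j$ is pushed to $\umax{j}$ when $\payoff{j} < 0$ and to $\umin{j}$ when $\payoff{j} > 0$.

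Next I would apply Kakutani's Fixed Point Theorem to the product correspondence $(\ubvec,\payoffvec) \mapsto A(\payoffvec) \times \payoffs{\ubvec}$ on the nonempty compact convex set $U \times [\underline{M},\overline{M}]^J$. Since $A$ is the $\argmax$ of a function that is jointly continuous in $(\payoffvec,\hubvec)$ and linear in $\hubvec$ over the fixed compact box $U$, Berge's maximum theorem gives that $A$ is upper hemicontinuous with nonempty compact values, and linearity of the objective makes these values convex; combined with the regularity of $\payoffFn$ from Claim~\ref{cl:regT}, the product correspondence meets every hypothesis of Kakutani's theorem. A fixed point $(\ubvec,\payoffvec)$ therefore exists. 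It satisfies $\payoffvec \in \payoffs{\ubvec}$ together with the sign conditions, read off coordinatewise from the $\argmax$, that $\payoff{j} > 0 \Rightarrow \ubj = \umin{j}$ and $\payoff{j} < 0 \Rightarrow \ubj = \umax{j}$ for every agent $j$.

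It remains to show $\payoffvec = \zero$, which I expect to be the main obstacle and the only step in which the precise choices of $\umin{j}$, $\umax{j}$, and $K$ are used. Fix a competitive equilibrium $(\p,(\bunj)_{j\in J})$ in the \dualecon\ for $\ubvec$ that realizes $\payoffvec$. First, because agent $j$ reaches utility $\umin{j}$ at her own endowment, i.e.\ $\cf{j}{\bundowj}{\umin{j}} = \numerdowj$, expenditure minimization at price $\p$ gives $\payoff{j} \le 0$ whenever $\ubj = \umin{j}$; combined with the sign condition $\payoff{j} > 0 \Rightarrow \ubj = \umin{j}$, this rules out $\payoff{j} > 0$, so $\payoff{j} \le 0$ for every $j$. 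Second, market clearing $\sum_{j} \bunj = \tot = \sum_{j}\bundowj$ cancels the price terms, giving $\sum_{j}\payoff{j} = \sum_{j}\bigl(\cf{j}{\bunj}{\ubj} - \numerdowj\bigr)$.

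To finish, suppose for contradiction that $\payoff{j_0} < 0$ for some agent $j_0$. Then $\ubag{j_0} = \umax{j_0}$, and the definition $\umax{j_0} = \max_{\bun}\util{j_0}{\numerdowag{j_0} + K,\bun}$, together with strict monotonicity of utility in money, forces $\cf{j_0}{\bunag{j_0}}{\umax{j_0}} \ge \numerdowag{j_0} + K$, so the $j_0$ term is at least $K$. For every other agent, $\ubj \ge \umin{j}$ and monotonicity of $\cf{j}{\bunj}{\cdot}$ give $\cf{j}{\bunj}{\ubj} - \numerdowj \ge \min_{\bun}\cf{j}{\bun}{\umin{j}} - \numerdowj = -K^j$. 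Summing and using $K = 1 + \sum_{j} K^j$ yields $\sum_{j}\payoff{j} \ge K - \sum_{j}K^j = 1 > 0$, contradicting $\sum_j \payoff{j} \le 0$. Hence $\payoff{j} = 0$ for all $j$, so $\zero = \payoffvec \in \payoffs{\ubvec}$, as required.
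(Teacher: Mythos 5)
Your proposal is correct and takes essentially the same approach as the paper: the identical Kakutani fixed-point construction (your $\argmax_{\hubvec}\bigl(-\sum_{j \in J}\payoff{j}\,\hubj\bigr)$ is the paper's $\argmin_{\hubvec}\sum_{j \in J}\payoff{j}\,\hubj$), the same sign conditions at the fixed point, and the same comparison with the endowment bundle to rule out $\payoff{j}>0$. The only difference is in how the endgame is organized---the paper shows $\ubj < \umax{j}$ for \emph{every} agent via per-agent inequality chains and then reapplies the fixed-point condition to get $\payoff{j}\ge 0$, whereas you suppose some $\payoff{j_0}<0$, use the definition of $\umax{j_0}$ to force $\cf{j_0}{\bunag{j_0}}{\umax{j_0}} \ge \numerdowag{j_0}+K$, and contradict $\sum_{j \in J}\payoff{j}\le 0$ in aggregate---but these are contrapositive rearrangements of the same inequalities, using the same constants $K^j$, $K$, and $\umax{j}$.
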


To prove Claim~\ref{cl:fpEq}, we apply a topological fixed point argument.

\begin{proof}
Consider the compact, convex set
\[Z = [\underline{M},\overline{M}]^J \times \bigtimes_{j \in J} [\umin{j},\umax{j}].\]
As $\payoffs{\ubvec} \subseteq [\underline{M},\overline{M}]^J$ for all $\ubvec \in \bigtimes_{j \in J} [\umin{j},\umax{j}],$ we can define a correspondence $\Phi: Z \toto Z$ by
\[\Phi(\payoffvec,\ubvec) = \payoffs{\ubvec} \times \argmin_{\hubvec \in \bigtimes_{j \in J} [\umin{j},\umax{j}]} \left\{\sum_{j \in J} \payoff{j} \hubj\right\}.\]

Claim~\ref{cl:regT} guarantees that $T: \bigtimes_{j \in J} [\umin{j},\umax{j}] \toto \mathbb{R}^J$ is upper hemicontinuous and has compact, convex values.
Statement~\ref{cond:hicks} in Theorem~\ref{thm:existDualExchangeSWT} ensures that the correspondence $T$ has non-empty values.
Because $\bigtimes_{j \in J} [\umin{j},\umax{j}]$ is compact and convex, it follows that the correspondence $\Phi$ is upper hemicontinuous and has non-empty, compact, convex values as well.
Hence, Kakutani's Fixed Point Theorem guarantees that $\Phi$ has a fixed point $(\payoffvec,\ubvec)$.

By construction, we have that $\payoffvec \in \payoffs{\ubvec}$ and that
\begin{equation}
\label{eq:fpUtil}
\ubj \in \argmin_{\hubj \in [\umin{j},\umax{j}]} \payoff{j} \hubj
\end{equation}
for all agents $j$.
It suffices to prove that $\payoffvec = \zero$.

Let $\left(\p,(\bunnj)_{j \in J}\right)$ be a \ce\ in the \dualecon\ for the profile $(\ubj)_{j \in J}$ of utility levels with
\begin{equation}
\label{eq:fpPayoff}
\cf{j}{\bunj}{\ubj} - \numerdowj + \p \cdot (\bunj - \bundowj) = \payoff{j}
\end{equation}
for all agents $j$.
As $\ubj \ge \umin{j}$ and $\cf{j}{\bunj}{\cdot}$ is increasing for each agent $j$, it follows from Equation (\ref{eq:fpPayoff}) and the definition of $K^j$ that
\begin{align}
\payoff{j} &= \cf{j}{\bunj}{\ubj} - \numerdowj + \p \cdot (\bunj - \bundowj) \nonumber\\
&\ge \cf{j}{\bunj}{\umin{j}} - \numerdowj + \p \cdot (\bunj - \bundowj) \nonumber\\
&\ge \p \cdot (\bunj-\bundowj) - K^j \label{eq:tlb}
\end{align}
for all agents $j$.

Next, we claim that $\payoff{j} \le 0$ for all agents $j$.
If $\payoff{j} > 0,$ then Equation (\ref{eq:fpUtil}) would imply that $\ubj = \umin{j}$.
But as $\payoffvec \in \payoffs{\ubvec},$ it would follow that
\[\payoff{j} \le \cf{j}{\bundowj}{\umin{j}} - \numerdowj + \p \cdot (\bundowj - \bundowj) = \cf{j}{\bundowj}{\umin{j}} - \numerdowj = 0,\]
where the last equality holds due to the definitions of $\cfFn{j}$ and $\umin{j},$
so we must have that $\payoff{j} \le 0$ for all agents $j$.

As $(\bunj)_{j \in J}$ is the allocation of goods in a \ce, we have that $\sum_{j \in J} \bunj = \tot = \sum_{j \in J} \bundowj$ and hence that
\[\sum_{j \in J} \p \cdot (\bunj - \bundowj) = 0 \ge \sum_{j \in J} \payoff{j},\]
where the inequality holds because $\payoff{j} \le 0$ for all agents $j$.
It follows that for all agents $j,$ we have that
\[\payoff{j} - \p \cdot (\bunj - \bundowj) \le \sum_{k \in J \ssm \{j\}} (\p \cdot (\bunag{k} - \bundowag{k}) - \payoff{k}) \le \sum_{k \in J \ssm \{j\}} K^k \le \sum_{k \in J} K^k < K,\]
where the second inequality follows from Equation (\ref{eq:tlb}), the third inequality holds because $K^j \ge 0,$ and the fourth inequality holds due to the definition of $K$.
Hence, by Equation (\ref{eq:fpPayoff}), we have that
\[\cf{j}{\bunj}{\ubj} = \numerdowj + \payoff{j} - \p \cdot (\bunj - \bundowj) < \numerdowj + K\]
for all agents $j$.
Since utility is strictly increasing in the consumption of money, it follows that
\[\ubj = \util{j}{\cf{j}{\bunj}{\ubj},\bunj} < \util{j}{\numerdowj + K,\bunj} \le \umax{j},\]
where the equality holds due to the definition of $\cfFn{j}$ and the second inequality holds due to the definition of $\umax{j}$.
Equation (\ref{eq:fpUtil}) then implies that $\payoff{j} \ge 0$ for all agents $j$, so we must have that $\payoff{j} = 0$ for all agents $j$.
\end{proof}

By Claim~\ref{cl:fpEq}, there exists a profile $\ubvec = (\ubj)_{j \in J}$ of utility levels and a \ce\ $(\p,(\bunj)_{j \in j})$ in the corresponding \dualecon\ with
\begin{equation}
\label{eq:tFpFinal}
\numerdowj = \cf{j}{\bunj}{\ubj} + \p \cdot (\bunj - \bundowj)
\end{equation}
for all agents $j$.
Lemma~\ref{lem:dHvalH} implies that $\bunj \in \dH{j}{\p}{\ubj}$ for all agents $j$, and we have that $\util{j}{\numerdowj - \p \cdot (\bunj - \bundowj),\bunj} = \ubj$ for all agents $j$ by Equation (\ref{eq:tFpFinal}) and the definition of $\cfFn{j}$.
It follows from Fact~\ref{fac:dualDem} that $\bunj \in \dM{j}{\p}{\bunndowj}$ for all agents $j,$ so the price vector $\p$ and the allocation $(\bunj)_{j \in J}$ of goods comprise a \ce\ in the original economy for the endowment allocation $(\bunndowj)_{j \in J}$.


\section{Proof of Proposition~\ref{prop:ssubst}}
\label{app:grossToNet}

We actually prove a stronger statement.

\begin{claim}
\label{cl:marshallNetWeak}
Suppose that agent $j$ demands at most one unit of each good and let $\bundow \in \Feas{j}$.
A utility function $\utilFn{j}$ is a \nsubst\ utility function if for all money endowments $\numerdow > \feas{j},$ price vectors $\p,$ and $0 < \mu < \lambda$, whenever
\begin{enumerate}[label=(\roman*)]
    \item \label{cond:start} $\dM{j}{\p}{\bunndow} = \{\bun\}$,
    \item \label{cond:end} $\dM{j}{\p + \lambda \e{i}}{\bunndow} = \{\bunpr\}$,
    \item \label{cond:middle} $\{\bun,\bunpr\} \subseteq \dM{j}{\p + \mu \e{i}}{\bunndow}$, and
    \item \label{cond:ineq} $\bunprComp{i} < \bunComp{i}$,
\end{enumerate}
we have that $\bunprComp{k} \ge \bunComp{k}$ for all goods $k \not= i.$
\end{claim}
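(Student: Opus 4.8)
The plan is to prove the displayed (net‐substitutes) condition directly, assuming the bracketed hypothesis on Marshallian demand (call it~$(\star)$), by realizing each \emph{compensated} price change as an \emph{uncompensated} price change for the fixed goods endowment $\bundow$ and then invoking~$(\star)$. So fix a utility level $\ub$, a price vector $\p$, a good $i$, and $\lambda>0$ with $\dH{j}{\p}{\ub}=\{\bun\}$ and $\dH{j}{\p+\lambda\e{i}}{\ub}=\{\bunpr\}$; the goal is $\bunprComp{k}\ge\bunComp{k}$ for all $k\ne i$. Throughout I would use Lemma~\ref{lem:dHvalH} to read Hicksian demand at $\ub$ as the quasilinear demand for $\valHDef{j}=-\cf{j}{\cdot}{\ub}$, and Fact~\ref{fac:dualDem} to move between Hicksian and Marshallian demand.

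First I would reduce to one clean configuration using $\Feas{j}\subseteq\{0,1\}^I$. Along the prices $\p+t\e{i}$, $t\in[0,\lambda]$, split $\Feas{j}$ into bundles containing good $i$ and bundles not containing it. Raising $\pComp{i}$ shifts the objective $\cf{j}{\cdot}{\ub}+(\p+t\e{i})\cdot(\cdot)$ by the \emph{same} amount $t$ on every bundle of the first class and leaves it unchanged on the second; hence the best bundle in each class is independent of $t$, demand is always one of these two bundles, and it switches from the $i$-containing one to the $i$-free one at a single threshold $t^{*}$. If $\bunComp{i}=\bunprComp{i}$ then $\bun=\bunpr$ and the conclusion is immediate, so I may assume $\bunComp{i}=1$, $\bunprComp{i}=0$, with $\bun$ the unique best $i$-bundle, $\bunpr$ the unique best $i$-free bundle, $0<t^{*}<\lambda$, and $\dH{j}{\p+t^{*}\e{i}}{\ub}=\{\bun,\bunpr\}$.

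Next I would build the Marshallian configuration. Keeping the goods endowment $\bundow$, choose $\numerdow$ so that the endowment's value equals the minimum expenditure for $\ub$ at the threshold price,
\[
\numerdow+(\p+t^{*}\e{i})\cdot\bundow \;=\; \min_{\bun\in\Feas{j}}\bigl(\cf{j}{\bun}{\ub}+(\p+t^{*}\e{i})\cdot\bun\bigr).
\]
By Fact~\ref{fac:dualDem} the Marshallian demand then coincides with the Hicksian demand $\{\bun,\bunpr\}$ at the threshold, giving condition~\ref{cond:middle}. Since $\bundowComp{i}\in\{0,1\}$, the budget $\numerdow+(\p+t\e{i})\cdot\bundow$ is affine in $t$ with slope $\bundowComp{i}$, matching the slope of the (concave) minimum‐expenditure function on exactly the side of $t^{*}$ where the cost‐minimizing bundle shares the good‐$i$ status of $\bundow$; on that side Fact~\ref{fac:dualDem} again gives Marshallian $=$ Hicksian, hence the single‐valued demand $\{\bun\}$ (for $t<t^{*}$) or $\{\bunpr\}$ (for $t>t^{*}$). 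On the other side the budget strictly exceeds the minimum expenditure for $\ub$; there I would instead compare directly the utilities of buying $\bun$ versus $\bunpr$—a short computation shows the $i$-free bundle becomes strictly preferred precisely once $t$ passes $t^{*}$—and combine this with upper hemicontinuity of Marshallian demand and finiteness of $\Feas{j}$ (so that only $\bun,\bunpr$ are demanded near $t^{*}$) to conclude the demand is again single‐valued and equal to the predicted bundle. This establishes conditions~\ref{cond:start}, \ref{cond:end}, and \ref{cond:ineq} for base price $\p$, some $\mu$ near $t^{*}$, and some $\lambda'$ slightly larger; then $(\star)$ yields $\bunprComp{k}\ge\bunComp{k}$ for all $k\ne i$, which is exactly the net‐substitutes conclusion.

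The step I expect to be the main obstacle is managing the single fixed endowment. An affine budget cannot follow the kinked minimum‐expenditure function on both sides of $t^{*}$, which is why the argument must be hybrid: exact Marshallian–Hicksian duality on the side dictated by $\bundowComp{i}$, and a revealed‐preference/continuity comparison on the other. The genuinely delicate bookkeeping is feasibility of money, $\numerdow>\feas{j}$: with $\bundow$ fixed one gets $\numerdow=\cf{j}{\mathbf{z}}{\ub}+\p\cdot(\mathbf{z}-\bundow)$ for the matching bundle $\mathbf{z}\in\{\bun,\bunpr\}$, which exceeds $\feas{j}$ automatically when $\feas{j}=-\infty$ but in general must be secured by choosing the base price within the polyhedral set of prices at which $\bun$ and $\bunpr$ are jointly demanded at utility $\ub$, so that $\bundow$ is not too expensive relative to $\mathbf{z}$. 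Verifying that such a price exists, together with confirming that no third bundle is demanded near $t^{*}$, are the points that will require the most care.
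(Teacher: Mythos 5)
Your proposal is essentially the paper's own proof run forwards rather than by contrapositive: the key construction is identical --- find a price at which $\bun$ and $\bunpr$ are jointly Hicksian-demanded at utility level $\ub$, choose $\numerdow$ so that the budget equals the minimum expenditure at that price (so that Fact~\ref{fac:dualDem} converts Hicksian into Marshallian demand, giving condition~\ref{cond:middle}), then perturb the price of good $i$ in both directions and split into the cases $\bundowComp{i}=0$ and $\bundowComp{i}=1$ to obtain conditions~\ref{cond:start} and~\ref{cond:end}. Your two departures are genuine but minor: you produce the critical two-bundle price by an elementary ``two-class'' threshold argument along $\p+t\e{i}$ (exploiting $\Feas{j}\subseteq\{0,1\}^I$), where the paper instead invokes the characterization of \subst\ valuations (Fact~\ref{fac:subsDemCplx}); and on one side of the threshold you use exact Marshallian--Hicksian duality (slope matching), where the paper simply runs the same utility comparisons on both sides (its Cases 1 and 2). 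Those parts of your sketch are correct.

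The genuine gap is exactly where you suspected: feasibility $\numerdow>\feas{j}$ (and you are in good company --- the paper's proof never verifies it either). However, your proposed repair, namely choosing the base price ``within the polyhedral set of prices at which $\bun$ and $\bunpr$ are jointly demanded at utility $\ub$,'' does not work in general, because that entire set can be infeasible. Concretely, let $I=\{1,2,3\}$, $\Feas{j}=\{(0,0,0),(1,1,0),(1,1,1),(0,0,1)\}$, $\feas{j}=0$, $\bundow=(0,0,1)$, and (using Fact~\ref{fac:dualPrefs}, e.g., with quasilogarithmic utility) take a utility level $\ub$ with
\[
\cf{j}{(0,0,0)}{\ub}=\cf{j}{(1,1,1)}{\ub}=\cf{j}{(0,0,1)}{\ub}=1,\qquad \cf{j}{(1,1,0)}{\ub}=3.
\]
The pair $\bun=(1,1,0)$, $\bunpr=(0,0,0)$ violates \nsubstity\ (take $i=1$, $k=2$), and it is jointly Hicksian-demanded only at prices with $\pComp{1}+\pComp{2}=-2$ and $\pComp{3}\ge 2$ (the bound on $\pComp{3}$ coming from the bundle $(1,1,1)$). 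At every such price,
\[
\numerdow \;=\; \cf{j}{\bun}{\ub}+\p\cdot(\bun-\bundow)\;=\;3+\pComp{1}+\pComp{2}-\pComp{3}\;=\;1-\pComp{3}\;\le\;-1\;<\;0=\feas{j},
\]
so your construction can never be run with this pair; and with quasilogarithmic utility the same computation, scaled by $e^{\ub}$, holds at every utility level, so no feasible Marshallian configuration involving this pair exists at all. The violation of the hypothesis $(\star)$ must instead be exhibited with a \emph{different} violating pair --- here $\{(0,0,0),(1,1,1)\}$, whose wall $\{\p:\pComp{1}+\pComp{2}+\pComp{3}=0,\ 0<\pComp{3}<2\}$ does contain feasible points, since there $\numerdow=1-\pComp{3}>0$ whenever $\pComp{3}<1$. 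So the argument must be restructured: assume the conclusion fails for the given transition (so a violating pair exists), and then prove that \emph{some} violating pair --- not necessarily the given one --- admits a price at which it is jointly demanded and $\numerdow>\feas{j}$. That existence statement is the real content of the feasibility step; it is supplied neither by your proposal nor, when $\feas{j}$ is finite, by the paper's own proof.
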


To complete the proof of the proposition from Claim~\ref{cl:marshallNetWeak}, we work in the setting of Claim~\ref{cl:marshallNetWeak}.
Note that, for the endowment $\bundow$ of goods, $\utilFn{j}$ is a \gsubst\ utility function when $\bunprComp{k} \ge \bunComp{k}$ holds for all goods $k \not= i$ under Conditions \ref{cond:start} and \ref{cond:end}.
This property clearly implies that $\bunprComp{k} \ge \bunComp{k}$ holds for all goods $k \not= i$ under Conditions \ref{cond:start}, \ref{cond:end}, \ref{cond:middle}, and \ref{cond:ineq}, and hence that $\utilFn{j}$ is \nsubst\ utility function by Claim~\ref{cl:marshallNetWeak}.
The proposition therefore follows from Claim~\ref{cl:marshallNetWeak}.

It remains to prove Claim~\ref{cl:marshallNetWeak}.
In the argument, we use the following characterization of \subst\ valuations.

\begin{fact}[Theorems 2.1 and 2.4 in \citealp{fujishige2003note}; Theorems 3.9 and 4.10(iii) in \citealp{shioura2015gross}]
\label{fac:subsDemCplx}
Suppose that agent $j$ demands at most one unit of each good.
A valuation $\valFn{j}$ is a \subst\ valuation if and only if for all price vectors $\p$ with $|\dQL{j}{\p}| = 2,$ writing $\dQL{j}{\p} = \{\bun,\bunpr\},$ the difference $\bunpr - \bun$ is a vector with at most one positive component and at most one negative component.
\end{fact}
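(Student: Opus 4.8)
The plan is to verify \nsubstity\ by passing through duality to a statement about a quasilinear agent, and then to manufacture a violation of the Marshallian hypothesis of the claim by perturbing a single price near a point where Hicksian demand is two-valued.

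First I would reduce the goal. By Remark~\ref{rem:netSubDual}, $\utilFn{j}$ is a \nsubst\ utility function if and only if each Hicksian valuation $\valHDef{j}$ is a \subst\ valuation, and by Lemma~\ref{lem:dHvalH} the quasilinear demand of $\valHDef{j}$ is exactly $\dH{j}{\cdot}{\ub}$. Hence Fact~\ref{fac:subsDemCplx} reduces the claim to: for every utility level $\ub$ and every price $\hp$ with $|\dH{j}{\hp}{\ub}| = 2$, writing $\dH{j}{\hp}{\ub} = \{\bun,\bunpr\}$, the difference $\bun - \bunpr$ has at most one positive and at most one negative component. I argue by contradiction. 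If this fails, then after possibly swapping the names $\bun$ and $\bunpr$ I may assume $\bun - \bunpr$ has two positive entries, i.e.\ there are distinct goods $i_1,i_2$ with $\bunComp{i_1} = \bunComp{i_2} = 1$ and $\bunprComp{i_1} = \bunprComp{i_2} = 0$.

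Next I would realize this two-valued Hicksian demand as a Marshallian demand at the fixed goods endowment $\bundow$. Using Fact~\ref{fac:dualDem}(a) together with the boundary behavior in~\eqref{eq:ulimits}, I would choose a money endowment $\numerdow > \feas{j}$ whose indirect utility at $\hp$ equals $\ub$, so that $\dM{j}{\hp}{(\numerdow,\bundow)} = \dH{j}{\hp}{\ub} = \{\bun,\bunpr\}$. I then raise and lower the price of good $i_1$ alone. The money consumed when choosing $\bun$ at price $\hp + t\e{i_1}$, namely $\numerdow + (\hp+t\e{i_1})\cdot\bundow - (\hp+t\e{i_1})\cdot\bun$, varies in $t$ at rate $\bundowComp{i_1} - \bunComp{i_1} = \bundowComp{i_1} - 1 \le 0$, while for $\bunpr$ it varies at rate $\bundowComp{i_1} - \bunprComp{i_1} = \bundowComp{i_1} \ge 0$. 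Since utility is strictly increasing in money and $\bundowComp{i_1} \in \{0,1\}$, exactly one of these is strict, so for small $\delta > 0$ the bundle $\bunpr$ strictly beats $\bun$ at $\hp + \delta\e{i_1}$ and $\bun$ strictly beats $\bunpr$ at $\hp - \delta\e{i_1}$; upper hemicontinuity of demand and finiteness of $\Feas{j}$ exclude every third bundle for $\delta$ small, and the consumed money stays near $\cf{j}{\bun}{\ub},\cf{j}{\bunpr}{\ub} > \feas{j}$. Thus $\dM{j}{\hp - \delta\e{i_1}}{(\numerdow,\bundow)} = \{\bun\}$, $\dM{j}{\hp + \delta\e{i_1}}{(\numerdow,\bundow)} = \{\bunpr\}$, and $\{\bun,\bunpr\} \subseteq \dM{j}{\hp}{(\numerdow,\bundow)}$. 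I would then invoke the hypothesis of the claim with $i = i_1$, endowment $(\numerdow,\bundow)$, price $\p = \hp - \delta\e{i_1}$, and $\mu = \delta < \lambda = 2\delta$: conditions (i)--(iv) hold since $\bunprComp{i_1} = 0 < 1 = \bunComp{i_1}$, and its conclusion gives $\bunprComp{k} \ge \bunComp{k}$ for all $k \ne i_1$. Taking $k = i_2$ yields $0 = \bunprComp{i_2} \ge \bunComp{i_2} = 1$, the desired contradiction.

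The step I expect to be most delicate is guaranteeing a \emph{feasible} endowment in the realization step, i.e.\ that the $\numerdow$ achieving indirect utility $\ub$ at $\hp$ exceeds $\feas{j}$; equivalently, that $\cf{j}{\bun}{\ub} + \hp\cdot(\bun - \bundow) > \feas{j}$. When $\feas{j} = -\infty$ this is automatic. When $\feas{j}$ is finite, I would exploit the freedom in choosing the witnessing price: the set of prices at which $\{\bun,\bunpr\}$ is precisely the demand of $\valHDef{j}$ is an unbounded polyhedral cell (here $|I| \ge 2$ since $i_1 \ne i_2$), and sliding $\hp$ along a recession direction that increases $\hp\cdot(\bun-\bundow)$ preserves the demand set while making $\numerdow$ as large as needed, with a small perturbation of $\ub$ handling degenerate directions. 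Establishing that a suitable recession direction always exists is the technical heart of the argument.
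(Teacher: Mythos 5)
You have not proved the statement you were asked to prove. Fact~\ref{fac:subsDemCplx} is a purely quasilinear assertion about a valuation $\valFn{j}$ and its demand $\dQL{j}{\p}$ from Equation~(\ref{eqn:quasilin}): there is no money endowment, no utility level, and no Marshallian/Hicksian distinction in it, since for a quasilinear agent all of these collapse (and $\feas{j}=-\infty$, as in Example~\ref{eg:quasilin}, so even your feasibility worry about $\numerdow>\feas{j}$ is vacuous in this setting). Your very first reduction step \emph{invokes} Fact~\ref{fac:subsDemCplx} to pass from \nsubstity\ to the two-element-demand condition, so as a proof of that fact the argument is circular. What you have actually written is a proof of Claim~\ref{cl:marshallNetWeak} (and hence of Proposition~\ref{prop:ssubst}) --- and indeed it reproduces the paper's own Appendix~\ref{app:grossToNet} argument almost exactly: the contrapositive through Remark~\ref{rem:netSubDual} and Lemma~\ref{lem:dHvalH}, the choice of $\numerdow$ so that $\dM{j}{\hp}{\bunndow}=\dH{j}{\hp}{\ub}=\{\bun,\bunpr\}$ via Fact~\ref{fac:dualDem}, the perturbation $\p=\hp-\mu\e{i}$, $\ppr=\hp+\mu\e{i}$ justified by upper hemicontinuity, and the rate computation that the paper carries out as the two cases $\bundowComp{i}\in\{0,1\}$. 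In that argument, Fact~\ref{fac:subsDemCplx} is an \emph{input} (its ``if'' direction supplies the price $\hp$ with $|\dH{j}{\hp}{\ub}|=2$ and a bad difference vector), not a conclusion.

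The paper itself does not prove Fact~\ref{fac:subsDemCplx}; it imports it from \cite{fujishige2003note} and \cite{shioura2015gross}, and a genuine proof would have to connect Definition~\ref{def:gsub} directly to the geometry of two-element demand sets, entirely within the quasilinear world. The ``only if'' direction is the elementary half: given $\p$ with $\dQL{j}{\p}=\{\bun,\bunpr\}$ where $\bunpr-\bun$ has, say, two negative components $i$ and $k$, one perturbs $\pComp{i}$ to reach nearby prices where demand is single-valued at $\bun$, resp.\ $\bunpr$, and reads off a violation of \substity\ at good $k$ (a price argument similar in spirit to your perturbation step, but with no duality needed). The ``if'' direction is the substantive half and is where discrete convex analysis enters: one must show that the edge condition at \emph{every} two-valued price forces the \subst\ property at all single-valued prices, which is the content of the M$^\natural$-concavity characterizations in the cited Theorems 2.1 and 2.4 of \cite{fujishige2003note} and Theorems 3.9 and 4.10(iii) of \cite{shioura2015gross}. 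None of your Hicksian apparatus bears on either direction, so the proposal contains no proof of the target statement at all.
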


\begin{proof}[Proof of Claim~\ref{cl:marshallNetWeak}]

We prove the contrapositive.
Suppose that $\utilFn{j}$ is not a \nsubst\ utility function.
We show that there exists a money endowment $\numerdow,$ a price vector $\p,$ price increments $0 < \mu < \lambda$, and goods $i \not= k$ such that Conditions \ref{cond:start}, \ref{cond:end}, \ref{cond:middle}, and \ref{cond:ineq} from the statement hold but $\bunprComp{k} < \bunComp{k}.$

By Remark~\ref{rem:netSubDual}, there exists a utility level $\ub$ such that $\valHDef{j}$ is not a \subst\ valuation.
Hence, by Lemma~\ref{lem:dHvalH} and 
the ``if" direction
of Fact~\ref{fac:subsDemCplx} for $\valFn{j} = \valHDef{j}$, there exists a price vector $\hp$ such that $|\dH{j}{\hp}{\ub}| = 2$, and writing $\dH{j}{\hp}{\ub} = \{\bun,\bunpr\},$ the difference $\bunpr - \bun$ has at least two positive components or at least two negative components.
Without loss of generality, we can assume that the difference $\bunpr - \bun$ has at least two negative components.
Suppose that $\bunprComp{i} < \bunComp{i}$ (so Condition~\ref{cond:ineq} holds) and that $\bunprComp{k} < \bunComp{k},$ where $i,k \in I$ are distinct goods.

Define a money endowment $\numerdow$ by \[\numerdow = \cf{j}{\bun}{\ub} + \hp \cdot (\bun - \bundow) = \cf{j}{\bunpr}{\ub} +  \hp \cdot (\bunpr - \bundow);\] Fact~\ref{fac:dualDem} implies that $\dM{j}{\hp}{\bunndow} = \{\bun,\bunpr\}$.
Let $\mu$ be such that
\[\dM{j}{\hp - \mu \e{i}}{\bunndow}, \dM{j}{\hp + \mu \e{i}}{\bunndow} \subseteq \{\bun,\bunpr\};\] such a $\mu$ exists due to the upper hemicontinuity of $\dMFn{j}$.
Let $\p = \hp - \mu \e{i},$ let $\lambda = 2 \mu,$ and let $\ppr = \p + \lambda \e{i} = \hp + \mu \e{i}$.

By construction, we have that $\{\bun,\bunpr\} \subseteq \dM{j}{\p + \mu \e{i}}{\bunndow} = \dM{j}{\hp}{\bunndow}$, so Condition \ref{cond:middle} holds.
It remains to show that $\dM{j}{\p}{\bunndow} = \{\bun\}$ and that $\dM{j}{\ppr}{\bunndow} = \{\bunpr\}$.
As $j$ demands at most one unit of each good, we must have that $\bunComp{i} = 1$ and that $\bunprComp{i} = 0.$
We divide into cases based on the value of $\bundowComp{i}$ to show that
\begin{equation}
\label{eq:subsIneqs}
\begin{aligned}
\util{j}{\numerdow - \p \cdot (\bun - \bundow),\bun} &> \util{j}{\numerdow - \p \cdot (\bunpr - \bundow),\bunpr}\\
\util{j}{\numerdow - \ppr \cdot (\bunpr - \bundow),\bunpr} &> \util{j}{\numerdow - \ppr \cdot (\bun - \bundow),\bun}.
\end{aligned}
\end{equation}
\begin{casework}
\item $\bundowComp{i} = 0$. In this case, we have that
\begin{align*}
\util{j}{\numerdow - \p \cdot (\bun - \bundow),\bun} &> \util{j}{\numerdow - \hp \cdot (\bun - \bundow),\bun}\\
&= \util{j}{\numerdow - \hp \cdot (\bunpr - \bundow),\bunpr}\\
&=\util{j}{\numerdow - \p \cdot (\bunpr - \bundow),\bunpr},
\end{align*}
where the inequality holds because $\pComp{i} < \hpComp{i}$ and $\bunComp{i} > \bundowComp{i}$, the first equality holds because $\{\bun,\bunpr\} \subseteq \dM{j}{\hp}{\bundow},$ and the second equality holds because $\bunprComp{i} = \bundowComp{i}.$
Similarly, we have that
\begin{align*}
\util{j}{\numerdow - \ppr \cdot (\bun - \bundow),\bun} &< \util{j}{\numerdow - \hp \cdot (\bun - \bundow),\bun}\\
&= \util{j}{\numerdow - \hp \cdot (\bunpr - \bundow),\bunpr}\\
&= \util{j}{\numerdow - \ppr \cdot (\bunpr - \bundow),\bunpr},
\end{align*}
where the inequality holds because $\pprComp{i} > \hpComp{i}$ and $\bunComp{i} > \bundowComp{i}$, the first equality holds because $\{\bun,\bunpr\} \subseteq \dM{j}{\hp}{\bundow},$ and the second equality holds because $\bunprComp{i} = \bundowComp{i}.$

\item $\bundowComp{i} = 1$. In this case, we have that
\begin{align*}
\util{j}{\numerdow - \p \cdot (\bunpr - \bundow),\bunpr}
&< \util{j}{\numerdow - \hp \cdot (\bunpr - \bundow),\bunpr}\\
&= \util{j}{\numerdow - \hp \cdot (\bun - \bundow),\bun}\\
&= \util{j}{\numerdow - \p \cdot (\bun - \bundow),\bun}
\end{align*}
where the inequality holds because $\pComp{i} < \hpComp{i}$ and $\bunprComp{i} < \bundowComp{i}$, the first equality holds because $\{\bun,\bunpr\} \subseteq \dM{j}{\hp}{\bundow},$ and the second equality holds because $\bunComp{i} = \bundowComp{i}.$
Similarly, we have that
\begin{align*}
\util{j}{\numerdow - \ppr \cdot (\bunpr - \bundow),\bunpr} &> \util{j}{\numerdow - \hp \cdot (\bunpr - \bundow),\bunpr}\\
&= \util{j}{\numerdow - \hp \cdot (\bun - \bundow),\bun}\\
&= \util{j}{\numerdow - \ppr \cdot (\bun - \bundow),\bun},
\end{align*}
where the inequality holds because $\pprComp{i} > \hpComp{i}$ and $\bunprComp{i} < \bundowComp{i}$, the first equality holds because $\{\bun,\bunpr\} \subseteq \dM{j}{\hp}{\bundow},$ and the second equality holds because $\bunComp{i} = \bundowComp{i}.$
\end{casework}
As $\bundow \in \Feas{j} \subseteq \{0,1\}^I,$ the cases exhaust all possibilities.
Hence, we have proven that Equation (\ref{eq:subsIneqs}) must hold.
As $\dM{j}{\p}{\bunndow},\dM{j}{\ppr}{\bunndow} \subseteq \{\bun,\bunpr\},$ we must have that $\dM{j}{\p}{\bunndow} = \{\bun\}$ and that $\dM{j}{\ppr}{\bunndow} = \{\bunpr\}$---so Conditions \ref{cond:start} and \ref{cond:end} hold, as desired.
\end{proof}

\singlespacing
\bibliographystyle{chicago}
\bibliography{bib}
\myspacing


\clearpage
\begin{center}
{\bf FOR ONLINE PUBLICATION}
\end{center}

\section{Proofs of Facts~\ref{fac:dualDem} and~\ref{fac:dualPrefs}}
\label{app:dualDemPrefs}

\subsection{Proof of Fact~\ref{fac:dualDem}}

We begin by proving two technical claims.

\begin{claim}
\label{cl:dualDemHelpMtoH}
Let $\bunndow \in \Feans{j}$ be an endowment and let $\ub$ be a utility level.
If
\begin{equation}
\label{eq:marshallForProofOfDual1}
\ub = \max_{\bunn \in \Feans{j} \mid \pall \cdot \bunn \le \pall \cdot \bunndow} \util{j}{\bunn},
\end{equation}
then we have that
\[\pall \cdot \bunndow = \min_{\bunn \in \Feans{j} \mid \util{j}{\bunn} \ge \ub} \pall \cdot \bunn\]
and that $\dM{j}{\p}{\bunndow} \subseteq \dH{j}{\p}{\ub}$.
\end{claim}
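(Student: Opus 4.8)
The plan is to exploit the strict monotonicity of $\utilFn{j}$ in the money coordinate, which is the one structural feature that survives the passage from divisible to indivisible goods and which forces budget constraints to bind at any utility maximizer. Throughout I write $\bunn = (\numer,\bun)$ and recall $\pnumer = 1$, so that for each fixed $\bun \in \Feas{j}$ the budget set $\{\bunn \in \Feans{j} \mid \pall \cdot \bunn \le \pall \cdot \bunndow\}$ consists of all $\numer$ with $\feas{j} < \numer \le \pall \cdot \bunndow - \p \cdot \bun$. Since $\Feas{j}$ is finite and utility is continuous and strictly increasing in $\numer$, the maximum defining $\ub$ in (\ref{eq:marshallForProofOfDual1}) is attained, and as $\utilFn{j}$ takes values in the open interval $\feasUtil{j}$ we have $\minu{j} < \ub < \maxu{j}$. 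Condition (\ref{eq:ulimits}) then gives, for every $\bun \in \Feas{j}$, a unique money level $\cf{j}{\bun}{\ub} \in (\feas{j},\infty)$ with $\util{j}{\cf{j}{\bun}{\ub},\bun} = \ub$, so the expenditure-minimization problem is well posed and (by finiteness of $\Feas{j}$) its value is attained, legitimizing the $\min$ in the statement.

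To prove the expenditure identity I would establish the two inequalities separately. For ``$\le$'', take any maximizer $\bunn^* = (\numer^*,\bun^*)$ of (\ref{eq:marshallForProofOfDual1}); it satisfies $\util{j}{\bunn^*} = \ub$ and lies in the budget set, so it is feasible for the minimization problem and yields $\min_{\bunn \in \Feans{j} \mid \util{j}{\bunn} \ge \ub} \pall \cdot \bunn \le \pall \cdot \bunn^* \le \pall \cdot \bunndow$. For ``$\ge$'', I argue by contradiction: if some $\bunn' = (\numer',\bun')$ had $\util{j}{\bunn'} \ge \ub$ and $\pall \cdot \bunn' < \pall \cdot \bunndow$, then raising the money coordinate by a small $\delta > 0$ keeps the bundle in $\Feans{j}$ (as $\numer' + \delta > \feas{j}$) and within budget (as $\pall \cdot \bunn' + \delta \le \pall \cdot \bunndow$), while strict monotonicity forces $\util{j}{\numer'+\delta,\bun'} > \util{j}{\bunn'} \ge \ub$, contradicting that $\ub$ is the budget-constrained maximum. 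Hence the minimal expenditure equals $\pall \cdot \bunndow$.

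For the inclusion $\dM{j}{\p}{\bunndow} \subseteq \dH{j}{\p}{\ub}$, take $\bun^* \in \dM{j}{\p}{\bunndow}$ with witnessing maximizer $\bunn^* = (\numer^*,\bun^*)$. The same monotonicity argument shows the budget binds, i.e. $\pall \cdot \bunn^* = \pall \cdot \bunndow$ (otherwise raising $\numer^*$ slightly would give a strictly preferred bundle still within budget), while by maximality $\util{j}{\bunn^*} = \ub$. Thus $\bunn^*$ satisfies the constraint $\util{j}{\bunn^*} \ge \ub$ at cost $\pall \cdot \bunn^* = \pall \cdot \bunndow$, which by the identity just proved equals $\min_{\bunn \in \Feans{j} \mid \util{j}{\bunn} \ge \ub} \pall \cdot \bunn$; hence $\bunn^*$ minimizes expenditure and $\bun^* \in \dH{j}{\p}{\ub}$.

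The main obstacle is keeping the open lower bound $(\feas{j},\infty)$ on money and the strict inequality constraints under control: I must check that the perturbation $\numer' \mapsto \numer' + \delta$ never leaves $\Feans{j}$ and never violates the budget, and that Condition (\ref{eq:ulimits}) together with $\minu{j} < \ub < \maxu{j}$ makes both optimization problems attain their optima. Everything else is the classical duality argument, with strict monotonicity in the num\'eraire playing the role usually played by local nonsatiation.
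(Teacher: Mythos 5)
Your proof is correct and follows essentially the same route as the paper's: using strict monotonicity of $\utilFn{j}$ in money to show that any cheaper bundle attaining utility $\ub$ could be improved within the budget (contradicting the definition of $\ub$), that the budget binds at any Marshallian maximizer, and hence that maximizers are expenditure minimizers. The only additions are your explicit attainment checks via $\cfFn{j}$ and the small-$\delta$ perturbation in place of the paper's full-budget witness $\numertpr = \numerdow + \p \cdot (\bundpr - \bundow)$, which are cosmetic rather than substantive differences.
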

\begin{proof}
Letting $\bunpr \in \dM{j}{\p}{\bunndow}$ be arbitrary and $\numerpr = \numerdow - \p \cdot (\bunpr - \bundow),$ we have that $\util{j}{\bunnpr} = u$ and that $\pall \cdot \bunnpr \le \pall \cdot \bunndow$ by construction.
It follows that
\[\pall \cdot \bunndow \ge \min_{\bunn \in \Feans{j} \mid \util{j}{\bunn} \ge \ub} \pall \cdot \bunn.\]
Suppose for the sake of deriving a contradiction that
there exists $\bunndpr \in \Feans{j}$ with $\pall \cdot \bunndpr < \pall \cdot \bunndow$ and $\util{j}{\bunndpr} \ge \ub.$
Then, we have that $\numerdpr < \numerdow + \p \cdot (\bundpr - \bundow)$; write $\numertpr = \numerdow + \p \cdot (\bundpr - \bundow),$ so $\numertpr > \numerdpr.$
Since $\utilFn{j}$ is strictly increasing in consumption of money, it follows that $\util{j}{\numertpr,\bundpr} > u$---contradicting Equation (\ref{eq:marshallForProofOfDual1}) as $\numertpr + \p \cdot \bundpr = \pall \cdot \bunndow.$
Hence, we can conclude that
\[\pall \cdot \bunndow = \min_{\bunn \in \Feans{j} \mid \util{j}{\bunn} \ge \ub} \pall \cdot \bunn.\]

Since $\util{j}{\bunnpr} = \ub$ and $\pall \cdot \bunnpr = \pall \cdot \bunndow,$ it follows that $\bunpr \in \dH{j}{\p}{\ub}$.
Since $\bunpr \in \dM{j}{\p}{\bunndow}$ was arbitrary, we can conclude that $\dM{j}{\p}{\bunndow} \subseteq \dH{j}{\p}{\ub}$.
\end{proof}

\begin{claim}
\label{cl:dualDemHelpHtoM}
Let $\bunndow \in \Feans{j}$ be an endowment and let $\ub$ be a utility level.
If
\begin{equation}
\label{eq:hicksForProofOfDual1}
\pall \cdot \bunndow = \min_{\bunn \in \Feans{j} \mid \util{j}{\bunn} \ge \ub} \pall \cdot \bunn,
\end{equation}
then we have that
\[\ub = \max_{\bunn \in \Feans{j} \mid \pall \cdot \bunn \le \pall \cdot \bunndow} \util{j}{\bunn}\]
and that $\dH{j}{\p}{\ub} \subseteq \dM{j}{\p}{\bunndow}$.
\end{claim}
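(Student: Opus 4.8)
The plan is to argue dually to the proof of Claim~\ref{cl:dualDemHelpMtoH}, exchanging the roles of the expenditure-minimization and utility-maximization problems. Fixing the price vector $\p$, I note that hypothesis (\ref{eq:hicksForProofOfDual1}) says $\pall \cdot \bunndow$ is the minimum expenditure needed to attain utility $\ub$. I must establish two things: that the budget-constrained utility maximum over $\{\bunn \in \Feans{j} \mid \pall \cdot \bunn \le \pall \cdot \bunndow\}$ equals $\ub$, and that every expenditure-minimizing bundle is budget-feasible and utility-maximizing, which yields $\dH{j}{\p}{\ub} \subseteq \dM{j}{\p}{\bunndow}$.

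First I would take an arbitrary $\bunpr \in \dH{j}{\p}{\ub}$, with associated full bundle $\bunnpr \in \Feans{j}$ satisfying $\util{j}{\bunnpr} \ge \ub$ and $\pall \cdot \bunnpr = \pall \cdot \bunndow$. I claim $\util{j}{\bunnpr} = \ub$: were $\util{j}{\bunnpr} > \ub$, then since $\util{j}{\numer,\bunpr}$ is continuous and strictly increasing in $\numer$ with $\lim_{\numer \to (\feas{j})^+} \util{j}{\numer,\bunpr} = \minu{j} < \ub$ (using $\ub \in \feasUtil{j}$), the intermediate value theorem produces some $\numerdpr < \numerpr$ with $\util{j}{\numerdpr,\bunpr} = \ub$; this bundle is feasible, attains $\ub$, and has expenditure strictly below $\pall \cdot \bunndow$, contradicting (\ref{eq:hicksForProofOfDual1}). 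Since $\bunnpr$ is budget-feasible and attains $\util{j}{\bunnpr} = \ub$, the maximum is at least $\ub$.

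Next I would bound the maximum from above by contradiction: suppose some $\bunndpr \in \Feans{j}$ with $\pall \cdot \bunndpr \le \pall \cdot \bunndow$ had $\util{j}{\bunndpr} > \ub$. Applying the same downward adjustment in the money coordinate produces $\numertpr < \numerdpr$ with $\util{j}{\numertpr,\bundpr} = \ub$ and expenditure $\numertpr + \p \cdot \bundpr < \pall \cdot \bunndpr \le \pall \cdot \bunndow$, again contradicting (\ref{eq:hicksForProofOfDual1}). Hence the maximum equals $\ub$. Combining the two steps, $\bunnpr$ is budget-feasible and attains the budget-constrained maximum $\ub$, so $\bunpr \in \dM{j}{\p}{\bunndow}$; as $\bunpr$ was arbitrary, $\dH{j}{\p}{\ub} \subseteq \dM{j}{\p}{\bunndow}$.

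The only delicate point, and the step I would treat most carefully, is this downward adjustment of the money coordinate used in both directions: the reduced money level must remain above $\feas{j}$ and land exactly on utility $\ub$. This is precisely where Condition (\ref{eq:ulimits}) together with $\ub \in \feasUtil{j}$ (so $\minu{j} < \ub$) does the work, guaranteeing via continuity and strict monotonicity that a feasible money level yielding utility exactly $\ub$ exists. Everything else is a routine transcription of the argument in Claim~\ref{cl:dualDemHelpMtoH}.
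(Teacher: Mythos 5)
Your proof is correct and follows essentially the same route as the paper's: take an arbitrary element of $\dH{j}{\p}{\ub}$, note its expenditure equals $\pall\cdot\bunndow$, show the budget-constrained maximum equals $\ub$ by reducing the money coordinate of any supposedly better bundle to the level attaining exactly $\ub$, and conclude the inclusion. The only cosmetic difference is that where you invoke the intermediate value theorem together with Condition~(\ref{eq:ulimits}), the paper packages the same continuity-and-monotonicity argument into the compensation function $\cfFn{j}$, writing the adjusted money level as $\cf{j}{\bundpr}{\ub}$.
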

\begin{proof}
Let $\bunpr \in \dH{j}{\p}{\ub}$ be arbitrary and $\numerpr = \cf{j}{\bunpr}{\ub}$.
We have that $\util{j}{\bunnpr} \ge \ub$ and that $\pall \cdot \bunnpr = \pall \cdot \bunndow$ by construction.
It follows that
\[\ub \le \max_{\bunn \in \Feans{j} \mid \pall \cdot \bunn \le \pall \cdot \bunndow} \util{j}{\bunn}.\]

We next show that 
\[\ub = \max_{\bunn \in \Feans{j} \mid \pall \cdot \bunn \le \pall \cdot \bunndow} \util{j}{\bunn}.\]
Suppose for sake of deriving a contradiction that
there exists $\bunndpr \in \Feans{j}$ with $\pall \cdot \bunndpr \le \pall \cdot \bunndow$ and $\util{j}{\bunndpr} > \ub.$
By definition of $\cfFn{j}$, we know that $\numerdpr > \cf{j}{\bundpr}{\ub}$.
Letting $\numertpr = \cf{j}{\bundpr}{\ub}$, we have that $\numertpr + \p \cdot \bundpr < \pall \cdot \bunndow$, which contradicts Equation (\ref{eq:hicksForProofOfDual1}) as $\util{j}{\numertpr,\bundpr} = \ub.$
Hence, we can conclude that
\[\ub = \max_{\bunn \in \Feans{j} \mid \pall \cdot \bunn \le \pall \cdot \bunndow} \util{j}{\bunn}.\]

Since $\util{j}{\bunpr} = u$ and $\pall \cdot \bunpr = \pall \cdot \bunndow,$ it follows that $\bunpr \in \dM{j}{\p}{\bunndow}$.
Since $\bunpr \in \dH{j}{\p}{\ub}$ was arbitrary, we can conclude that $\dH{j}{\p}{\ub} \subseteq \dM{j}{\p}{\bunndow}$.
\end{proof}

Let $\bunndow \in \Feans{j}$ be an endowment and let $\ub$ be a utility level.
By Claims~\ref{cl:dualDemHelpMtoH} and~\ref{cl:dualDemHelpHtoM}, Conditions (\ref{eq:marshallForProofOfDual1}) and (\ref{eq:hicksForProofOfDual1}), are equivalent, and under these equivalent conditions, we have that $\dM{j}{\p}{\bunndow} \subseteq \dH{j}{\p}{\ub}$ and that $\dH{j}{\p}{\ub} \subseteq \dM{j}{\p}{\bunndow}$.
Hence, we must have that $\dM{j}{\p}{\bunndow} = \dH{j}{\p}{\ub}$ under the equivalent Conditions (\ref{eq:marshallForProofOfDual1}) and (\ref{eq:hicksForProofOfDual1})---as desired.

\subsection{Proof of Fact~\ref{fac:dualPrefs}}

We prove the ``if'' and ``only if'' directions separately.

\subsubsection*{Proof of the ``If" Direction.}

We define a utility function $\utilFn{j}$ implicitly by
\[\util{j}{\bunn} = F(\bun,\cdot)^{-1}(-\numer),\]
which is well-defined, continuous, and strictly increasing in $\numer$ by the Inverse Function Theorem because $F(\bun,\cdot)$ is continuous, strictly decreasing, and satisfies Condition (\ref{eq:vlimits}).
Condition (\ref{eq:ulimits}) holds because $F$ is defined over the entirety of $\Feas{j} \times \feasUtil{j}.$

\subsubsection*{Proof of the ``Only If" Direction.}

We define $F:\Feas{j} \times (\minu{j},\maxu{j}) \to (-\infty,-\feas{j})$ implicitly by
\[F(\bun,\ub) = -\util{j}{\cdot,\bun}^{-1}(\numer),\]
which is well-defined, continuous, and strictly decreasing in $\ub$ by the Inverse Function Theorem because $\util{j}{\cdot,\bun}$ is continuous, strictly increasing, and satisfies Condition (\ref{eq:ulimits}).
Condition (\ref{eq:vlimits}) holds because $\utilFn{j}$ is defined over the entirety of $\Feans{j}$.

\section{Proofs of the Maximal Domain and Necessity Results for Settings with Transferable Utility}
\label{app:maxDomain}

In this appendix, we supply proofs of Facts~\ref{fac:subMaxDomain} and~\ref{fac:unimodMaxDomain}, as well as the ``only if" direction of Fact~\ref{fac:unimod}.  Utility is transferable throughout this appendix.

We use the concept of a pseudo-equilibrium price vector.

\begin{definition}[\citealp{MiSt:09}]
Suppose that utility is transferable.  A \emph{pseudo-equilibrium price vector} is a price vector $\p$ such that
\[\tot\in \conv\left(\sum_{j \in J} D^j(\p)\right).\]
\end{definition}

There is a connection between pseudo-equilibrium price vectors, \ces, and the existence problem.

\begin{fact}[Theorem 18 in \citealp{MiSt:09}; Lemma 2.19 in \citealp{BaKl:19}]
\label{fac:pseudoEquil}
If utility is transferable and the total endowment is such that a competitive equilibrium exists, then, for each pseudo-equilibrium price vector $\p$, there exists an allocation $(\bunj)_{j \in J}$ such that $\p$ and  $(\bunj)_{j \in J}$ comprise a \ce.
\end{fact}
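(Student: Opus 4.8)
The plan is to analyze equilibrium prices through the convex \emph{Lyapunov function}
\[
L(\p) = \p \cdot \tot + \sum_{j \in J} \max_{\bun \in \Feas{j}} \left\{\val{j}{\bun} - \p \cdot \bun\right\},
\]
which is convex in $\p$ because each summand is a pointwise maximum of finitely many affine functions. Write $W = \max \sum_{j \in J} \val{j}{\bunj}$ for the optimal welfare over allocations $(\bunj)_{j \in J}$ with $\bunj \in \Feas{j}$ and $\sum_{j \in J} \bunj = \tot$; this maximum is over a nonempty finite set, since \andowalloc\ exists. A first observation is that $L(\p) \ge W$ for every $\p$: for any feasible allocation $(\bunj)_{j \in J}$ the $j$-th maximand dominates $\val{j}{\bunj} - \p \cdot \bunj$, and summing over $j$ while using $\sum_{j \in J} \bunj = \tot$ cancels the price terms.

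Next I would exploit the hypothesis that a \ce\ exists. Let $(\hp,(\hbunj)_{j \in J})$ be such an equilibrium. Summing the demand inequalities $\val{j}{\hbunj} - \hp \cdot \hbunj \ge \val{j}{\bunpr} - \hp \cdot \bunpr$ over $j$ and invoking market clearing shows that $(\hbunj)_{j \in J}$ is welfare-maximizing (the first-welfare argument); hence $\sum_{j \in J}\val{j}{\hbunj} = W$. Substituting this equilibrium allocation into the definition of $L$ and using $\sum_{j \in J}\hbunj = \tot$ gives $L(\hp) = \sum_{j \in J}\val{j}{\hbunj} = W$. Together with $L \ge W$ everywhere, this shows $\min_{\p} L(\p) = W$, attained at $\hp$.

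The crux is to identify the pseudo-equilibrium price vectors with the minimizers of $L$. Since each summand $\max_{\bun \in \Feas{j}}\{\val{j}{\bun} - \p \cdot \bun\}$ is a maximum of affine functions whose gradients are the vectors $-\bun$, its subdifferential at $\p$ equals $-\conv(\dQL{j}{\p})$; hence $\partial L(\p) = \tot - \sum_{j \in J}\conv(\dQL{j}{\p}) = \tot - \conv\!\big(\sum_{j \in J}\dQL{j}{\p}\big)$, using that the Minkowski sum of convex hulls is the convex hull of the Minkowski sum. As $L$ is convex and finite-valued, $\p$ minimizes $L$ if and only if $\zero \in \partial L(\p)$, i.e.\ if and only if $\tot \in \conv\big(\sum_{j \in J}\dQL{j}{\p}\big)$ --- which is exactly the pseudo-equilibrium condition. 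I expect this convex-analytic step to be the main obstacle, since it rests on the subdifferential formula for a maximum of affine functions together with the Minkowski-sum identity; the rest is bookkeeping.

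Finally I would assemble the pieces. Let $\p$ be a pseudo-equilibrium price vector; by the previous paragraph $\p$ minimizes $L$, so $L(\p) = W$. Taking the welfare-maximizing equilibrium allocation $(\hbunj)_{j \in J}$ from above, and using $\sum_{j \in J}\hbunj = \tot$,
\[
W = \p \cdot \tot + \sum_{j \in J}\left(\val{j}{\hbunj} - \p \cdot \hbunj\right) \le \p \cdot \tot + \sum_{j \in J}\max_{\bun \in \Feas{j}}\left\{\val{j}{\bun} - \p \cdot \bun\right\} = L(\p) = W.
\]
Equality throughout forces $\val{j}{\hbunj} - \p \cdot \hbunj = \max_{\bun \in \Feas{j}}\{\val{j}{\bun} - \p \cdot \bun\}$ for every agent $j$, i.e.\ $\hbunj \in \dQL{j}{\p}$. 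Since $\sum_{j \in J}\hbunj = \tot$, the price vector $\p$ together with $(\hbunj)_{j \in J}$ comprises a \ce, as required.
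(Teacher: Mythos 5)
Your proof is correct, and it is essentially the argument behind the result the paper cites: the paper states this as a Fact imported from \citet{MiSt:09} (Theorem 18) and \citet{BaKl:19} (Lemma 2.19) without reproving it, and the Lyapunov-function/convex-duality argument you give---minimizers of $L$ are exactly the pseudo-equilibrium prices via the subdifferential computation, and existence of an equilibrium pins the minimum value of $L$ at maximal welfare---is the standard proof from that literature. All the steps check out: the polyhedral convexity of each summand justifies the subdifferential sum rule without any constraint qualification issues, and the final complementary-slackness step correctly converts a welfare-maximizing allocation into a demanded one at the pseudo-equilibrium price.
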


The nonexistence of \ces\ may therefore be demonstrated by using the contrapositive of Fact~\ref{fac:pseudoEquil}.

Our arguments use valuations that are \emph{linear on their domain}.  That is, let $\mathbf{t}_I\in\R^I$, let $X^j_I\subseteq\Z^n$ be finite, and let $V^j=V^{j,\mathbf{t}_I}\mathbf:X^j_I\rightarrow\R$ be given by $V^{j,\mathbf{t}_I}(\mathbf{x}_I)\coloneq\mathbf{t}_I\cdot\mathbf{x}_I$ for all $\mathbf{x}_I\in X^j_I$.  Recalling Equation (\ref{eqn:quasilin}) for demand sets in the quasilinear case, we observe that, for each $\mathbf{s}_I\in\R^I$, we have that
\begin{equation}\label{eqn:faces}
D^j(\mathbf{t}_I-\mathbf{s}_I)=\argmax_{\mathbf{x}_I\in X^j_I}\left(\mathbf{t}_I\cdot\mathbf{x}_I-(\mathbf{t}_I-\mathbf{s}_I)\cdot\mathbf{x}_I\right)=\argmax_{\mathbf{x}_I\in X^j_I}\mathbf{s}_I\cdot\mathbf{x}_I.
\end{equation}

\begin{lemma}\label{lem:linConv}
If $\conv(X^j_I)\cap\Z^I=X^j_I$, then $V^{j,\mathbf{t}_I}$ is concave for all $\mathbf{t}_I\in\R^I$.
\end{lemma}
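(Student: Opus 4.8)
The plan is to verify Definition~\ref{def:concave} directly, and the proof will turn out to be immediate once we exploit the linearity of $V^{j,\pt}$. Recall that concavity requires every $\bun \in \conv(\Feas{j}) \cap \Z^I$ to be demanded at some price vector. Under the hypothesis $\conv(\Feas{j}) \cap \Z^I = \Feas{j}$, this reduces to showing that every \emph{feasible} bundle is demanded at some price, since the integer points of the convex hull are exactly the feasible bundles.

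The key step is to observe that a single price vector---namely $\p = \pt$---demands all of $\Feas{j}$ simultaneously. Indeed, by Equation~(\ref{eqn:quasilin}),
\[\dQL{j}{\pt} = \argmax_{\bun \in \Feas{j}} \left\{\pt \cdot \bun - \pt \cdot \bun\right\} = \argmax_{\bun \in \Feas{j}} 0 = \Feas{j},\]
because the objective is identically zero on $\Feas{j}$; equivalently, this is the $\mathbf{s}_I = \zero$ case of Equation~(\ref{eqn:faces}). Combining this with the hypothesis, each $\bun \in \conv(\Feas{j}) \cap \Z^I = \Feas{j}$ lies in $\dQL{j}{\pt}$, so the demand condition of Definition~\ref{def:concave} holds with the price vector $\pt$, and hence $V^{j,\pt}$ is concave.

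I do not anticipate any real obstacle. The only substantive point is the role of the hypothesis: it guarantees that $\conv(\Feas{j})$ contains no integer bundle outside $\Feas{j}$. Such a bundle could never be demanded---demand is always drawn from $\Feas{j}$---so without the hypothesis the concavity condition could fail even though the valuation is linear. With the hypothesis in force there are no ``extra'' integer points to support, and the constant price vector $\pt$ settles every case at once.
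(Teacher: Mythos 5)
Your proof is correct and is essentially identical to the paper's own argument: both observe that at the price vector $\mathbf{t}_I$ the objective $\mathbf{t}_I\cdot\mathbf{x}_I - \mathbf{t}_I\cdot\mathbf{x}_I$ is identically zero, so $D^j(\mathbf{t}_I) = X^j_I$ (the $\mathbf{s}_I=\zero$ case of Equation~(\ref{eqn:faces})), and then invoke the hypothesis $\conv(X^j_I)\cap\Z^I = X^j_I$ together with Definition~\ref{def:concave}. No gaps; nothing further is needed.
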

\begin{proof}
Observe by Equation (\ref{eqn:faces}) that $D^j(\mathbf{t}_I)=\argmax_{\mathbf{x}_I\in X^j_I}\mathbf{0}\cdot\mathbf{x}_I=X^j_I$.  So, if $\mathbf{x}_I\in\conv(X^j_I)\cap\Z^I=X^j_I$ then $\mathbf{x}_I\in D^j(\mathbf{t}_I)$.  By Definition \ref{def:concave}, we know $V^{j,\mathbf{t}_I}$ is concave.
\end{proof}

We will also make use of an alternative characterization of concavity.
\begin{fact}[Lemma 2.11 in \citealp{BaKl:19}]
\label{fac:conc}
A valuation $\valFn{j}$ is concave if and only if $\conv\left(\dQL{j}{\p}\right) \cap \Z^I = \dQL{j}{\p}$ for all price vectors $\p$.
\end{fact}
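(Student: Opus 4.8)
The plan is to compare $\valFn{j}$ with its \emph{concave closure} $\widehat{V}^j\colon\conv(\Feas{j})\to\R$, defined at a point $\bun$ as the supremum of $\sum_k\lambda_k\val{j}{\bunag{k}}$ over all ways of writing $\bun=\sum_k\lambda_k\bunag{k}$ as a convex combination of points of $\Feas{j}$. Extending $\widehat{V}^j$ by $-\infty$ off $\conv(\Feas{j})$ makes it a polyhedral concave function with effective domain the polytope $\conv(\Feas{j})$, so (Theorem 23.10 in \cite{rockafellar1970convex}) it admits a supergradient at every point of $\conv(\Feas{j})$. I will use two immediate facts: $\widehat{V}^j$ agrees with $\valFn{j}$ at every vertex of $\conv(\Feas{j})$ (a vertex has only its trivial convex representation), and, since $\dQL{j}{\p}\subseteq\Feas{j}\subseteq\Z^I$, the inclusion $\dQL{j}{\p}\subseteq\conv(\dQL{j}{\p})\cap\Z^I$ always holds---so in each direction only the reverse inclusion is at issue.

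\textbf{The bridge.} First I would prove the reformulation that $\valFn{j}$ is concave in the sense of Definition~\ref{def:concave} if and only if $\val{j}{\bun}=\widehat{V}^j(\bun)$ for every $\bun\in\conv(\Feas{j})\cap\Z^I$. For the forward implication: if $\bun$ is demanded at some price vector, then $\val{j}{\bunpr}-\p\cdot\bunpr\le\val{j}{\bun}-\p\cdot\bun$ for all $\bunpr\in\Feas{j}$ at that price, and averaging this inequality over any convex representation of $\bun$ yields $\widehat{V}^j(\bun)\le\val{j}{\bun}$, while $\widehat{V}^j(\bun)\ge\val{j}{\bun}$ is trivial. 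For the converse: taking $\p$ to be a supergradient of $\widehat{V}^j$ at $\bun$ and using $\val{j}{\bunpr}\le\widehat{V}^j(\bunpr)$ shows that $\bun$ maximizes $\val{j}{\cdot}-\p\cdot(\cdot)$ over $\Feas{j}$, i.e. $\bun\in\dQL{j}{\p}$.

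\textbf{The two directions.} Fix $\p$ and let $M=\max_{\bunpr\in\Feas{j}}(\val{j}{\bunpr}-\p\cdot\bunpr)$. Averaging $\val{j}{\bunpr}\le M+\p\cdot\bunpr$ over convex combinations gives $\widehat{V}^j(\mathbf{z})\le M+\p\cdot\mathbf{z}$ on $\conv(\Feas{j})$, with maximizing face $F=\{\mathbf{z}:\widehat{V}^j(\mathbf{z})-\p\cdot\mathbf{z}=M\}$. For the ``if'' direction, assume concavity; given $\bun^*\in\conv(\dQL{j}{\p})\cap\Z^I$, write it as a convex combination of points of $\dQL{j}{\p}$ to obtain $\widehat{V}^j(\bun^*)\ge M+\p\cdot\bun^*$, hence equality, and then the bridge ($\val{j}{\bun^*}=\widehat{V}^j(\bun^*)$, as $\bun^*\in\conv(\Feas{j})\cap\Z^I$) gives $\bun^*\in\dQL{j}{\p}$. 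For the ``only if'' direction, assume the demand condition and take any $\bun\in\conv(\Feas{j})\cap\Z^I$; choosing $\p$ a supergradient of $\widehat{V}^j$ at $\bun$ puts $\bun$ in $F$, whose vertices are vertices of $\conv(\Feas{j})$ and hence, by the vertex fact, lie in $\dQL{j}{\p}$. Thus $\bun\in F\subseteq\conv(\dQL{j}{\p})$, and the hypothesis $\conv(\dQL{j}{\p})\cap\Z^I=\dQL{j}{\p}$ forces $\bun\in\dQL{j}{\p}$, establishing concavity via the bridge.

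\textbf{Main obstacle.} The delicate points are entirely convex-analytic: confirming that $\widehat{V}^j$ is polyhedral so that supergradients exist even at boundary points of $\conv(\Feas{j})$ (ruling out ``vertical'' supporting hyperplanes), and checking that the maximizing face $F$ is the convex hull of those of its vertices that actually lie in $\dQL{j}{\p}$. Everything else is bookkeeping with the chain of inequalities $\val{j}{\cdot}\le\widehat{V}^j\le M+\p\cdot(\cdot)$.
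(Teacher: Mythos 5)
You should know at the outset that the paper never proves this statement itself: Fact~\ref{fac:conc} is imported verbatim, with citation, from Lemma 2.11 of \citet{BaKl:19}, so there is no in-paper argument to compare against and your proposal stands or falls on its own. Its architecture---comparing $\valFn{j}$ with its concave closure $\widehat{V}^j$, using polyhedrality of $\widehat{V}^j$ to obtain supergradients everywhere on $\conv(\Feas{j})$, and averaging the defining inequalities---is a correct and natural route, and your ``bridge'' together with the direction ``concave $\Rightarrow$ demand condition'' is sound. However, there is a genuine flaw in the other direction, at exactly the step you flagged as delicate: the claim that the vertices of the maximizing set $F=\{\mathbf{z}\in\conv(\Feas{j}):\widehat{V}^j(\mathbf{z})-\p\cdot\mathbf{z}=M\}$ are vertices of $\conv(\Feas{j})$ is false. $F$ is the argmax of a \emph{concave}, not linear, function, so it need not be a face of the polytope $\conv(\Feas{j})$. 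For a one-good counterexample take $\Feas{j}=\{0,1,2\}$, $\val{j}{0}=0$, $\val{j}{1}=\val{j}{2}=1$, and $\bun=1$: the supergradients of $\widehat{V}^j$ at $1$ are exactly the prices $p\in[0,1]$, and the corresponding sets $F$ (namely $[1,2]$ for $p=0$, $\{1\}$ for $0<p<1$, and $[0,1]$ for $p=1$) all have the point $1$ as a vertex, even though $1$ is not a vertex of $\conv(\Feas{j})=[0,2]$. So the vertex fact cannot be invoked, and the inclusion $F\subseteq\conv(\dQL{j}{\p})$ is unproved as written. (This example is of course consistent with the Fact---that valuation is concave---it refutes only your intermediate claim.)

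The good news is that the inclusion you need is true and has a short proof that avoids vertices altogether. Let $\mathbf{z}\in F$ and choose an \emph{optimal} representation $\mathbf{z}=\sum_k\lambda_k\bunag{k}$ with $\bunag{k}\in\Feas{j}$, $\lambda_k>0$, and $\widehat{V}^j(\mathbf{z})=\sum_k\lambda_k\val{j}{\bunag{k}}$ (the supremum is attained because $\Feas{j}$ is finite). Then
\[
M=\widehat{V}^j(\mathbf{z})-\p\cdot\mathbf{z}=\sum_k\lambda_k\left(\val{j}{\bunag{k}}-\p\cdot\bunag{k}\right)\le M,
\]
and equality forces $\val{j}{\bunag{k}}-\p\cdot\bunag{k}=M$, i.e.\ $\bunag{k}\in\dQL{j}{\p}$, for every $k$; hence $\mathbf{z}\in\conv(\dQL{j}{\p})$. (Alternatively, lift to the hypograph of $\widehat{V}^j$: the argmax face of the linear functional $t-\p\cdot\mathbf{z}$ there is a face of that polyhedron, its vertices are among the lifted points $(\bunag{k},\val{j}{\bunag{k}})$ with $\val{j}{\bunag{k}}=\widehat{V}^j(\bunag{k})$, and those lying on the face belong to $\dQL{j}{\p}$; but the averaging argument is shorter.) Substituting this for the false vertex claim makes your proof complete; everything else, including the handling of integer points of $\conv(\Feas{j})$ that lie outside $\Feas{j}$, goes through as you wrote it. Incidentally, your ``if''/``only if'' labels are swapped relative to the statement of the Fact, but both directions are present.
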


\subsection{Additional Facts regarding Unimodularity and Demand Types}

The following results are especially useful in the proof of the ``only if'' direction of Fact~\ref{fac:unimod}, and the proof of Fact~\ref{fac:unimodMaxDomain}.

We seek to construct pseudo-equilibrium price vectors (the total endowment is in the convex hull of aggregate demand) that are not \ce\ price vectors (the total endowment is not demanded on aggregate).   Failure of unimodularity allows such constructions because of the following property.

\begin{fact}[See, e.g., Fact 4.9 in \cite{BaKl:19}]
\label{fac:unimodSet}
A demand type vector set $\mathcal{D}$ is unimodular if and only if there is no linearly independent subset $\{\mathbf{d}^1,\ldots,\mathbf{d}^r\}$ of $\mathcal{D}$ such that there exists $\mathbf{z}=\sum_{\ell=1}^r \alpha_\ell \dvec^\ell\in\Z^I$ with $\alpha_\ell\in(0,1)$ for $\ell=1,\ldots,r$.
\end{fact}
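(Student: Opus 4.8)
The plan is to prove both directions by unwinding the definition of unimodularity given above, which says that $\mathcal{D}$ is unimodular precisely when every linearly independent subset $\{\mathbf{d}^1,\ldots,\mathbf{d}^r\}\subseteq\mathcal{D}$ extends to a $\Z$-basis of $\Z^I$ (a set of $|I|$ integer vectors forming a matrix of determinant $\pm1$). The one classical input I would invoke is the standard lattice fact that linearly independent integer vectors $\{\mathbf{d}^1,\ldots,\mathbf{d}^r\}$ extend to a $\Z$-basis of $\Z^I$ if and only if the sublattice $\Lambda=\Z\mathbf{d}^1+\cdots+\Z\mathbf{d}^r$ is \emph{saturated}, i.e.\ $\Lambda=\Z^I\cap\operatorname{span}_\R(\mathbf{d}^1,\ldots,\mathbf{d}^r)$ (see, e.g., \cite{schrijver1998theory}).

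For the forward direction (unimodular $\Rightarrow$ no violating $\mathbf{z}$), I would argue by contradiction. Suppose $\{\mathbf{d}^1,\ldots,\mathbf{d}^r\}$ is linearly independent and $\mathbf{z}=\sum_{\ell=1}^r\alpha_\ell\mathbf{d}^\ell\in\Z^I$ with every $\alpha_\ell\in(0,1)$. Extend $\{\mathbf{d}^1,\ldots,\mathbf{d}^r\}$ to a $\Z$-basis $\{\mathbf{d}^1,\ldots,\mathbf{d}^{|I|}\}$ of $\Z^I$. Since this is a basis of $\Z^I$ and $\mathbf{z}\in\Z^I$, we may write $\mathbf{z}=\sum_{\ell=1}^{|I|}m_\ell\mathbf{d}^\ell$ with each $m_\ell\in\Z$; padding the original expression with zeros ($\alpha_\ell=0$ for $\ell>r$) and using uniqueness of coordinates in the $\R$-basis $\{\mathbf{d}^1,\ldots,\mathbf{d}^{|I|}\}$ forces $m_\ell=\alpha_\ell$ for all $\ell$. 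But then $m_1\in\Z$ while $\alpha_1\in(0,1)$, a contradiction.

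For the reverse direction I would prove the contrapositive, exhibiting a violating $\mathbf{z}$ whenever $\mathcal{D}$ fails to be unimodular. Non-unimodularity yields a linearly independent $\{\mathbf{d}^1,\ldots,\mathbf{d}^r\}\subseteq\mathcal{D}$ that does not extend to a $\Z$-basis, so by the cited lattice fact $\Lambda$ is not saturated and there is $\mathbf{w}\in\big(\Z^I\cap\operatorname{span}_\R(\mathbf{d}^1,\ldots,\mathbf{d}^r)\big)\setminus\Lambda$. Writing $\mathbf{w}=\sum_\ell\beta_\ell\mathbf{d}^\ell$ with $\beta_\ell\in\R$ (not all integers, since $\mathbf{w}\notin\Lambda$) and setting $\mathbf{z}=\mathbf{w}-\sum_\ell\lfloor\beta_\ell\rfloor\mathbf{d}^\ell=\sum_\ell(\beta_\ell-\lfloor\beta_\ell\rfloor)\mathbf{d}^\ell$, I obtain $\mathbf{z}\in\Z^I$ (a difference of integer vectors) with coefficients in $[0,1)$, not all zero. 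Discarding the indices $\ell$ with $\beta_\ell\in\Z$ leaves a linearly independent subset on which $\mathbf{z}$ is expressed with all coefficients strictly inside $(0,1)$, which is exactly a violating configuration.

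The main obstacle is purely the single classical fact about extending a basis of a saturated sublattice to a basis of $\Z^I$; once that is granted (or proved via Hermite or Smith normal form), both directions are short and elementary. A secondary, easily handled point is the bookkeeping that passes between the half-open coefficient range $[0,1)$—which arises naturally from taking fractional parts, and from the index/fundamental-parallelepiped picture $[\,\Z^I\cap\operatorname{span}:\Lambda\,]$—and the open range $(0,1)$ appearing in the statement, accomplished simply by dropping any coordinate whose coefficient is an integer.
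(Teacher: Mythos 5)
Your proof is correct, but note that there is nothing in the paper to compare it against: this statement is imported as a Fact, cited to Fact 4.9 of \cite{BaKl:19}, and the paper gives no proof of it. Judged on its own merits, your argument is sound and is essentially the classical lattice-theoretic route. The forward direction (extend to a basis, read off that the coordinates of $\mathbf{z}$ must be integers, contradicting $\alpha_\ell \in (0,1)$) and the reverse direction (take a non-saturated $\Lambda$, pick $\mathbf{w} \in (\Z^I \cap \operatorname{span}_\R(\mathbf{d}^1,\ldots,\mathbf{d}^r)) \setminus \Lambda$, subtract integer parts to land the fractional-part vector in the fundamental parallelepiped, and drop the indices with integer coefficients) both check out, and your handling of the $[0,1)$ versus $(0,1)$ bookkeeping is exactly right. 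Two points are worth making explicit if you write this up. First, the paper's definition of unimodularity speaks of extending to a basis of $\R^I$ consisting of integer vectors with determinant $\pm 1$; your proof silently identifies this with extending to a $\Z$-basis of $\Z^I$, which is true but deserves a line (if $M$ is an integer matrix with $\det M = \pm 1$, then $M^{-1}$ is an integer matrix by Cramer's rule, so the columns of $M$ generate $\Z^I$ over $\Z$). Second, the entire weight of the argument rests on the saturation criterion for extending linearly independent integer vectors to a $\Z$-basis; as you say, this follows from Hermite or Smith normal form, or from the observation that $\Z^I / (\Z^I \cap \operatorname{span}_\R(\mathbf{d}^1,\ldots,\mathbf{d}^r))$ is torsion-free, hence free, so the saturated sublattice is a direct summand. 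Since that input is of the same ``standard fact'' character as the statement being proved, your proof is best viewed as a clean reduction of the paper's Fact to textbook lattice theory rather than a from-scratch derivation---which is entirely appropriate here.
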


To see the connection to Fact \ref{fac:pseudoEquil} and to existence of competitive equilibrium,  suppose that $\{\mathbf{d}^1,\ldots,\mathbf{d}^r\}$ and $\mathbf{z}$ are as in Fact \ref{fac:unimodSet}.  If $\tot=\mathbf{z}$ and if $D^j_M(\p,\bunndowj)=\{\mathbf{0},\mathbf{d}^j\}$ for $j=1,\dots,r$, then $\p$ is a pseudo-equilibrium price vector but there is no competitive equilibrium at $\p$.

\cite{BaKl:19} generalized Fact~\ref{fac:subsDemCplx} to the general case of transferable utility.

\begin{fact}[Proposition 2.20 in \citealp{BaKl:19}]
\label{fac:demTypeCplx}
Let $\valFn{j}$ be a valuation of demand type $\mathcal{D}$.  For any price $\ppr$, if $\conv(\dQL{j}{\ppr})$ has an edge $E$, then the difference between the extreme points of $E$ is proportional to a demand type vector, and there exists a price $\p$ such that $\conv(\dQL{j}{\p})=E$.

Moreover if $\dvec$ is in the minimal \dtvs\ $\mathcal{D}$, such that $\valFn{j}$ is of demand type $\mathcal{D}$, then there exists a price vector $\p$ such that $\conv(\dQL{j}{\p})$ is a line segment, the difference between whose endpoints is proportional to $\dvec.$  
\end{fact}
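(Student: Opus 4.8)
The plan is to prove both assertions by elementary convex-geometry perturbation arguments applied to the demand polytopes $\conv(\dQL{j}{\p})$, relying on two observations that hold for every valuation $\valFn{j}$ (concave or not). First, (i) every vertex $\mathbf{v}$ of a polytope $\conv(\dQL{j}{\p})$ is uniquely demanded: there is a functional $\mathbf{s}$ with $\mathbf{s}\cdot\mathbf{v}$ strictly larger than $\mathbf{s}\cdot\mathbf{w}$ for every other demanded $\mathbf{w}$, and perturbing to $\p - \epsilon\mathbf{s}$ isolates $\mathbf{v}$ as the sole maximizer for small $\epsilon>0$. Second, (ii) a demanded bundle $\mathbf{z}$ that is a proper convex combination $\mathbf{z}=\lambda\bun+(1-\lambda)\bunpr$ of two other bundles $\bun,\bunpr$ in the \emph{same} demand set $\dQL{j}{\p}$ is never uniquely demanded, because the three bundles are then co-maximal at $\p$, which forces the profit of $\mathbf{z}$ at \emph{every} price to be the corresponding convex combination of the profits of $\bun$ and $\bunpr$; hence whenever $\mathbf{z}$ maximizes, so does at least one of $\bun,\bunpr$.

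First I would prove part (b). Given a price $\ppr$ and an edge $E$ of $\conv(\dQL{j}{\ppr})$ with extreme points $\bun,\bunpr$, choose a functional $\mathbf{s}$ maximized over $\conv(\dQL{j}{\ppr})$ exactly along $E$, and set $\p=\ppr-\epsilon\mathbf{s}$. Because $\Feas{j}$ is finite there is a uniform profit gap, so for all sufficiently small $\epsilon>0$ the maximizers of $\valFn{j}(\cdot)-\p\cdot(\cdot)$ are precisely those maximizers at $\ppr$ that also maximize $\mathbf{s}\cdot(\cdot)$, i.e.\ the demanded bundles lying on $E$; since $\bun,\bunpr$ are extreme points of $\conv(\dQL{j}{\ppr})$ they are demanded at $\ppr$, hence at $\p$, and therefore $\conv(\dQL{j}{\p})=E$. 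For part (a), observation (i) shows $\bun$ and $\bunpr$ are uniquely demanded, while observation (ii) shows that every other bundle of $\dQL{j}{\p}$, being an interior point of the segment $E$ and thus a proper convex combination of $\bun$ and $\bunpr$, is not uniquely demanded. Thus $\{\bun,\bunpr\}$ is adjacently demanded, and Definition~\ref{def:demTypeTU}(c) gives that $\bunpr-\bun$ is a multiple of an element of $\mathcal{D}$.

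For the ``moreover'' statement I would first note that the minimal $\mathcal{D}$ such that $\valFn{j}$ is of demand type $\mathcal{D}$ consists exactly of the primitive \intvecs\ proportional to the differences $\bunpr-\bun$ of adjacently demanded pairs, together with their negatives. Hence if $\dvec$ lies in this minimal set, there is an adjacently demanded pair $\{\bun,\bunpr\}$ with $\bunpr-\bun$ proportional to $\dvec$, and a witnessing price $\p$ with $\bun,\bunpr\in\dQL{j}{\p}$ and no other uniquely demanded bundle in $\dQL{j}{\p}$. By observation (i) every vertex of $\conv(\dQL{j}{\p})$ is uniquely demanded, so the only possible vertices are $\bun$ and $\bunpr$; as both are present, $\conv(\dQL{j}{\p})$ is the line segment with endpoints $\bun,\bunpr$, whose endpoint difference is proportional to $\dvec$, as required.

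The main obstacle is making the perturbation step and observation (i) rigorous without assuming concavity of $\valFn{j}$: since $\valFn{j}$ need not be concave, $\conv(\dQL{j}{\p})\cap\Z^I$ may strictly contain $\dQL{j}{\p}$, so the argument must use only extreme points (which are always demanded) and must never assume that all lattice points of a face are demanded; observation (ii) is stated precisely so as to sidestep this. The uniformity of ``sufficiently small $\epsilon$'' is what finiteness of $\Feas{j}$ buys us. Equivalently, all of this can be packaged as the standard statement that the polytopes $\conv(\dQL{j}{\p})$ are exactly the cells of the regular subdivision of the lifted configuration $\{(\bun,\valFn{j}(\bun)):\bun\in\Feas{j}\}$, for which faces of cells are cells and the $0$-cells are the uniquely demanded bundles; I would adopt whichever formulation is more economical given the conventions already fixed in the paper.
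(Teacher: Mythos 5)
Your proposal cannot be checked against the paper's own proof for a simple reason: the paper does not prove this statement. It is stated as a Fact and imported wholesale from Proposition 2.20 of \cite{BaKl:19}; within this paper it is only \emph{used} (in Corollary~\ref{cor:edges} and in the proofs in Appendix~\ref{app:maxDomain}), never derived. Judged on its own terms, your argument is correct. Observation (i) is sound: finiteness of $\Feas{j}$ gives a uniform gap $\delta>0$ between the maximal profit at a price and the profit of any non-demanded bundle, so perturbing to $\p-\epsilon\mathbf{s}$, with $\mathbf{s}$ exposing a vertex (or exposing the edge $E$, since every face of a polytope is exposed), isolates exactly the demanded bundles maximizing $\mathbf{s}\cdot(\cdot)$ for small $\epsilon>0$. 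Observation (ii) is the crucial point and is also right: if $\mathbf{z}=\lambda\bun+(1-\lambda)\bunpr$ with all three bundles in $\dQL{j}{\p}$, then co-maximality at $\p$ forces $\val{j}{\mathbf{z}}=\lambda\val{j}{\bun}+(1-\lambda)\val{j}{\bunpr}$, and hence the profit of $\mathbf{z}$ equals the corresponding convex combination of the profits of $\bun$ and $\bunpr$ at \emph{every} price; so $\mathbf{z}$ can never be a unique maximizer. This is exactly what lets you verify Definition~\ref{def:demTypeTU}(b) at the perturbed price without ever assuming concavity, which is the trap you correctly identify and avoid.

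Two remarks. First, the one step you assert rather than prove---that the minimal \dtvs\ exists and consists precisely of the primitive \intvecs\ (and their negations) proportional to the differences of adjacently demanded pairs---does need a line of justification: any \dtvs\ $\mathcal{D}'$ of which $\valFn{j}$ is a demand type must contain, for each adjacently demanded pair, the primitive vector parallel to the (nonzero, integer) difference, because elements of $\mathcal{D}'$ are primitive and $\mathcal{D}'$ is closed under negation; conversely, the collection so obtained is itself a \dtvs\ of which $\valFn{j}$ is a demand type. With that line added, your ``moreover'' argument is complete: at the witnessing price, observation (i) implies the only possible vertices of $\conv(\dQL{j}{\p})$ are the two uniquely demanded bundles, so the demand polytope is the segment joining them. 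Second, as to approach: \cite{BaKl:19} establish Proposition 2.20 within their tropical-geometry framework (the duality between an agent's locus of indifference prices and the regular subdivision induced by the lifted valuation---the packaging your closing paragraph mentions), whereas your perturbation argument is self-contained and elementary, using nothing beyond finiteness of $\Feas{j}$ and standard polytope facts. What the heavier machinery buys is the broader structure theory (e.g., statements about higher-dimensional faces and the aggregate-demand complex) that the source paper needs elsewhere; for this particular fact, your route suffices.
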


We also demonstrate now the following useful corollary of Fact \ref{fac:demTypeCplx}.

\begin{corollary}\label{cor:edges}
Let $V^j=V^{j,\mathbf{t}_I}$ for some $\mathbf{t}_I\in\R^I$, and let $\mathcal{D}$ be the minimal demand type vector set such that $V^j$ is of demand type $\mathcal{D}$.  Then $\mathcal{D}$ consists of the primitive integer vectors in the directions of the edges of the polytope $\conv(X^j_I)$.
\end{corollary}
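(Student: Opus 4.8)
The plan is to push everything through Equation~(\ref{eqn:faces}), which identifies the demand set of the linear valuation $V^{j,\mathbf{t}_I}$ at the price $\mathbf{t}_I-\mathbf{s}_I$ with $\argmax_{\bun\in\Feas{j}}\mathbf{s}_I\cdot\bun$, i.e.\ with the set of points of $\Feas{j}$ lying on the face of $\conv(\Feas{j})$ exposed by the direction $\mathbf{s}_I$. Since $\Feas{j}$ is finite, $P=\conv(\Feas{j})$ is a bounded polytope, and I will exploit the standard facts that $P$ is the convex hull of its vertices, that every vertex of $P$ lies in $\Feas{j}$, and that the vertices of any face $F$ of $P$ are exactly the vertices of $P$ contained in $F$. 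The characterization of the minimal demand type then reduces, via Definition~\ref{def:demTypeTU}, to identifying the uniquely and adjacently demanded bundles combinatorially.

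First I would show that the uniquely demanded bundles are exactly the vertices of $P$. If $\dQL{j}{\mathbf{t}_I-\mathbf{s}_I}=\{\bun\}$, then $\bun$ is the unique maximizer of $\mathbf{s}_I\cdot(\cdot)$ over $\Feas{j}$, so the exposed face carries a single vertex and hence equals $\{\bun\}$, making $\bun$ a vertex; conversely a vertex is the unique maximizer of some supporting direction over $P$, and hence over $\Feas{j}\subseteq P$. Next I would show that $\{\bun,\bunpr\}$ is adjacently demanded if and only if $[\bun,\bunpr]$ is an edge of $P$. For the forward direction, write $\dQL{j}{\p}=\Feas{j}\cap F$ with $F$ the face exposed by the corresponding $\mathbf{s}_I$; the hypothesis that $\dQL{j}{\p}$ contains $\bun,\bunpr$ and no other uniquely demanded bundle says precisely that the vertices of $F$ are exactly $\bun$ and $\bunpr$, so $F$ is a polytope with two vertices, i.e.\ the segment $[\bun,\bunpr]$, which is an edge. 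Conversely, an edge $[\bun,\bunpr]$ is exposed by some $\mathbf{s}_I$, and the corresponding price witnesses adjacency, since the only vertices in $\Feas{j}\cap[\bun,\bunpr]$ are its two endpoints.

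Combining these, the set of differences $\bunpr-\bun$ over adjacently demanded pairs is exactly the set of edge vectors of $P$; passing to primitive integer representatives (which are well defined since $\bun,\bunpr\in\Z^I$) and taking the minimal $\mathcal{D}$ generated through Definition~\ref{def:demTypeTU} yields precisely the primitive integer vectors in the edge directions of $P$. I expect the only delicate point to be the clean passage between demand sets and faces of $P$ when $\Feas{j}$ need not contain all integer points of $P$: the argument must stay at the level of vertices, which always lie in $\Feas{j}$, rather than at arbitrary lattice points, so that ``two uniquely demanded bundles and no others'' translates to ``a face with exactly two vertices'' and hence to an edge. Once that reduction is in place the combinatorial identification is immediate, and I could alternatively invoke the ``moreover'' part of Fact~\ref{fac:demTypeCplx} to shortcut the inclusion of $\mathcal{D}$ into the edge directions.
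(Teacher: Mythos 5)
Your proof is correct, but it takes a genuinely different route from the paper's. The paper's proof is a two-line application of Fact~\ref{fac:demTypeCplx} (Proposition 2.20 of Baldwin--Klemperer): since $\dQL{j}{\mathbf{t}_I} = \Feas{j}$, the first part of that fact immediately gives that every edge direction of $\conv(\Feas{j})$ lies in $\mathcal{D}$, and the ``moreover'' part gives the reverse inclusion, with Equation~(\ref{eqn:faces}) plus a citation to Gr\"unbaum supplying the observation that a one-dimensional demand set of a linear valuation is an edge of $\conv(\Feas{j})$. You instead work directly from Definition~\ref{def:demTypeTU}: you characterize the uniquely demanded bundles of a linear valuation as the vertices of $P=\conv(\Feas{j})$ and the adjacently demanded pairs as the edges of $P$, using Equation~(\ref{eqn:faces}) together with elementary polytope facts (vertices of $\conv(S)$ lie in $S$; vertices of a face $F$ of $P$ are the vertices of $P$ in $F$; every face of a polytope is exposed; a polytope with two vertices is a segment). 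In effect you re-derive, for the special case of valuations linear on their domain, exactly the content of Fact~\ref{fac:demTypeCplx} that the paper imports as a black box. What the paper's route buys is brevity, at the cost of leaning on an external result; what your route buys is a self-contained and transparent argument, and you correctly isolate the one delicate point---that $\Feas{j}$ need not equal $\conv(\Feas{j})\cap\Z^I$, so the translation between demand sets and faces must be conducted at the level of vertices (which always lie in $\Feas{j}$) rather than arbitrary lattice points. Your closing remark that the ``moreover'' part of Fact~\ref{fac:demTypeCplx} could shortcut the inclusion $\mathcal{D}\subseteq\{\text{edge directions}\}$ is exactly the half of the paper's proof you would be reinstating.
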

\begin{proof}
Observe that $D^j(\mathbf{t}_I)=X^j_I$ and so, by Fact \ref{fac:demTypeCplx}, each edge of $\conv(X^j_I)$ is proportional to a vector in $\mathcal{D}$.  Conversely, if $\mathbf{d}\in\mathcal{D}$ then, by Fact \ref{fac:demTypeCplx}, there exists a price $\p$ such that $\conv(D^j(\p))$ is a line segment, the difference between whose endpoints is proportional to $\mathbf{d}$.  But writing $\mathbf{s}_I=\mathbf{t}_I-\p$, we see from Equation (\ref{eqn:faces}) that $D^j(\p)=\argmax_{x_I\in X^j_I} \mathbf{s}_I\cdot\mathbf{x}_I$ which tells us (cf. e.g.\ \citealt{Gruenbaum1967}, Section 2.4) that $E$ is an edge of $\conv(X^j_I)$.
\end{proof}

Our proofs of Facts \ref{fac:subMaxDomain} and~\ref{fac:unimodMaxDomain}, and the ``only if'' direction of Fact~\ref{fac:unimod}, now follow the same structure.
Within each argument, we address a demand type which is not unimodular.  Observe by Fact \ref{fac:unimodSet} that when unimodularity fails for a set of vectors $\mathcal{D}$, then there exist polytopes, with integer vertices and whose edge directions are in $\mathcal{D}$, that contain a non-vertex integer vector, $\mathbf{z}$.  We use Corollary~\ref{cor:edges} construct valuations of the appropriate demand type such that, at some price $\p$, the convex hull of the aggregate demand set is a polytope with these properties; and such that there exists a feasible endowment allocation is the total endowment is the non-vertex integer vector $\mathbf{z}$.  Thus $\p$ is a pseudo-equilibrium price.  Moreover, we design our individual valuations so that $\p$ is not a competitive equilibrium.  The contrapositive of Fact \ref{fac:pseudoEquil} can then be applied to show the non-existence of competitive equilibrium.

\subsection{Proof of Fact~\ref{fac:subMaxDomain}}

By Fact~\ref{fac:subsDemCplx}, there exists a price vector $\p$ such that $\dQL{j}{\p} = \{\bunpr,\bunpr+\norm\}$, where $\norm$ has at least two positive components or at least two negative components.  Identify $I$ with $\{1,\ldots,|I|\}$ and without loss of generality assume that $g_1,g_2<0$. Because agent $j$ demands at most one unit of each good, we know that 
$\bunpr,\bunpr+\norm\in\{0,1\}^{|I|}$ and so $\norm\in \{-1,0,1\}^{|I|}$.  We conclude both that $g_1=g_2=-1$ and that $x'_1=x'_2=1$.

Let $k \in J \ssm \{j\}$ be arbitrary.  For agents $j' \in J \ssm \{j,k\},$ let $\Feas{j'} = \{\zero\}$, let $\valFn{j'}$ be arbitrary, and let $\bundowag{j'} = 0$.

Let $X^k_I\coloneq\{\rgivend\mathbf{x}_I\in\{0,1\}^{|I|}\rgiv x_1+x_2\leq 1\}$ and let $\mathbf{t}_I\coloneq\p-\mathbf{e}^1-\mathbf{e}^2$.  Let $V^k = V^{k,{\bf t}_I},$
which is a \subst\ valuation by Example 11 and Corollary D.2, because each edge of $\conv(X^k_I)$ is proportional to either ${\bf e}^1 - {\bf e}^2$ or to ${\bf e}^\ell$ for some $\ell \in I$; or, alternatively, by Theorem 4 in Hatfield et al. (2019) because, like \citeposs{shapley1962complements} assignment valuations, it is the supremal convolution of $|I|-1$ unit-demand valuations.  

Set $\mathbf{w}^j_I\coloneq\bunpr\in X^j_I$ and set $\mathbf{w}^k_I\coloneq\tot-\bunpr$.  Since $\bunpr\in\{0,1\}^{|I|}$ it follows that $\mathbf{w}^k_I\in\{0,1\}^{|I|}$, and moreover since $x'_1=x'_2=1$ we know $w^k_1=w^k_2=0$; thus $\mathbf{w}^k_I\in X^k_I$.   Now $(\mathbf{w}^{j'}_I)_{j'\in J}$ is clearly an endowment allocation.

By Equation (\ref{eqn:faces}) we know that 
\begin{equation*}
D^k(\mathbf{p}_I)=\argmax_{\mathbf{x}_I\in X^k_I}(\mathbf{e}^1+\mathbf{e}^2)\cdot\mathbf{x}_I=\{\mathbf{x}_I\in\{0,1\}^{|I|}|x_1+x_2=1\}.
\end{equation*}
Observing that $\mathbf{e}^2\in D^k(\mathbf{p}_I)$ and considering the vectors from $\mathbf{e}^2$ to other elements of the demand set, we can write $D^k(\mathbf{p}_I)$ as
\[
D^k(\p) = \e{2} + \left\{\rgivend \alpha_2 (\e{1} - \e{2}) + \sum_{\ell = 3}^{|I|} \alpha_\ell \e{\ell} \rgiv \alpha_\ell \in \{0,1\} \text{ for } 2 \le \ell \le |I|\right\}.
\]
Combining this with agent $j$, and recalling other agents' demand sets are identically zero, we conclude that
\[
\sum_{j'\in J} D^{j'}(\mathbf{p}_I)=\bunpr+\mathbf{e}^2+\left\{\rgivend\alpha_1\norm+\alpha_2(\mathbf{e}^1-\mathbf{e}^2)+\sum_{\ell=3}^{|I|}\alpha_\ell\mathbf{e}^\ell\rgiv\alpha_\ell\in\{0,1\}\text{ for }1\leq \ell\leq |I|\right\}.
\]
The convex hull of this set can be expressed very similarly, but the weights $\alpha_\ell$ are allowed to lie in $[0,1]$.

Since $\bunpr, \bunpr+\mathbf{g}\in\{0,1\}^{|I|},$ we have that if $g_i=1$ (resp.~$g_i = -1$), then $x'_i=0$ (resp.~$x'_i = 1$).
Taking
\[\alpha_\ell = \begin{cases}
\dfrac{|\normComp{\ell}|}{2} & \text{ if } \normComp{\ell} \not= 0\\
1 - \bunprComp{\ell} & \text{if } \normComp{\ell} = 0
\end{cases}\]
for $1 \le \ell \le |I|$, we have that
\begin{align*}
\bunprComp{i} + \alpha_1 \normComp{i} + \alpha_i &= 1 - \frac{1}{2} + \frac{1}{2} = 1  & \text{ for all } i \in I \text{ with } \normComp{i} = -1\\
\bunprComp{i} + \alpha_1 \normComp{i} + \alpha_i &= \bunprComp{i} + 0 + (1 - \bunprComp{i}) = 1 & \text{ for all } i \in I \text{ with } \normComp{i} = 0\\
\bunprComp{i} + \alpha_1 \normComp{i} + \alpha_i &= 0 + \frac{1}{2} + \frac{1}{2} = 1  & \text{ for all } i \in I \text{ with } \normComp{i} = 1.
\end{align*}
As $\bunprComp{1} = \bunprComp{2} = 1$ and $\normComp{1} = \normComp{2} = -1,$ it follows that
\[\bunpr + \e{2} + \alpha_1\norm+\alpha_2(\mathbf{e}^1-\mathbf{e}^2)+\sum_{\ell=3}^{|I|}\alpha_\ell\mathbf{e}^\ell = \tot.\]

As $\alpha_\ell \in [0,1]$ for all $1 \le \ell \le |I|$, we therefore have that $\tot \in \conv\left(\sum_{j'\in J} D^{j'}(\mathbf{p}_I)\right)$, so $\p$ is a pseudo-equilibrium price vector.
But as $\alpha_1 \in (0,1)$ and the vectors $\mathbf{g},\mathbf{e}^1-\mathbf{e}^2,\mathbf{e}^3,\ldots,\mathbf{e}^{|I|}$ are linearly independent, we have that $\tot \notin \sum_{j'\in J} D^{j'}(\mathbf{p}_I),$ so there is no \ce\ at $\p$.\footnote{The existence of an integer vector that is in $\conv\left(\sum_{j'\in J} D^{j'}(\mathbf{p}_I)\right)$ but not $\sum_{j' \in J} D^{j'}(\mathbf{p}_I)$ follows from Fact~\ref{fac:unimodSet} as the vectors $\mathbf{g},\mathbf{e}^1-\mathbf{e}^2,\mathbf{e}^3,\ldots,\mathbf{e}^{|I|}$ do not comprise a unimodular set.}
Therefore, by the contrapositive of Fact~\ref{fac:pseudoEquil}, no \ce\ can exist.

\subsection{Proof of the ``Only If'' Direction of Fact~\ref{fac:unimod}}

Let $\mathcal{D}$ be a \dtvs\ that is not unimodular.
We need to show that there exists a finite set $J$ of agents with concave valuations of demand type $\mathcal{D}$, as well as a total endowment, for which there exists \andowalloc\ but no \ce.  We will use $J=\{j,k\}$.

Let $\linset = \{\dvec^1,\ldots,\dvec^n\} \subseteq \mathcal{D}$ be a minimal non-unimodular subset.
By construction, $\linset$ is linearly independent, and $\{\dvec^1,\ldots,\dvec^{n-1}\}$ is unimodular.  Let
\[
\mathcal{P} = \left\{\rgivend\sum_{\ell=1}^n \alpha_\ell \dvec^\ell \rgiv 0 \le \alpha_\ell \le 1\text{ for }\ell=1,\ldots,n\right\}
\]
denote the parallelepiped spanned by $\linset$.  By Fact \ref{fac:unimodSet}, there exists $\mathbf{z} = \sum_{\ell=1}^n \beta_\ell \dvec^\ell \in \mathcal{P} \cap \mathbb{Z}^I$ with $\beta_\ell \in (0,1)$ for all $\ell=1,\ldots,n$.

Let $X^j_I\coloneq \mathcal{P}\cap\Z^n$ and let $V^j\coloneq V^{j,\mathbf{0}}$ be the linear valuation which is identically zero on its domain.  Recall Equation (\ref{eqn:faces}): we know $D^j(\mathbf{0})=X^j_I$.  Observe that $\mathbf{z}\in X^j_I$.  Clearly $\conv(X^j_I)\cap\Z^I=X^j_I$ and so $V^j$ is concave by Lemma \ref{lem:linConv}.

Let $\mathbf{s}_I$ satisfy $\mathbf{s}_I\cdot \dvec^\ell= 0$ for $\ell=1,\ldots,n-1$ and $\mathbf{s}_I\cdot\dvec^n>0$. (Such an $\mathbf{s}_I$ exists as $\linset$ is linearly independent.)  Then, for $\mathbf{x}_I=\sum_{\ell=1}^n \alpha_\ell \dvec^\ell\in X^j_I$, we have \[\mathbf{s}_I\cdot\mathbf{x}_I=\sum_{\ell=1}^n \alpha_\ell \mathbf{s}_I\cdot\dvec^\ell=\alpha_n \mathbf{s}_I\cdot\dvec^n.\]  We assumed that $\mathbf{s}_I\cdot\dvec^n>0$, so $\mathbf{s}_I\cdot\mathbf{x}_I$ is minimized when $\alpha_n=0$; equivalently $-\mathbf{s}_I\cdot\mathbf{x}_I$ is 
maximized when $\alpha_n=0$.  So, by Equation (\ref{eqn:faces}), we know that 
\begin{equation*}
D^j(\mathbf{s}_I)=\argmax_{\mathbf{x}_I\in X^j_I}-\mathbf{s}_I\cdot\mathbf{x}_I=\left\{\rgivend\sum_{\ell=1}^{n-1} \alpha_\ell \dvec^\ell \rgiv 0 \le \alpha_\ell \le 1 \text{ for }\ell=1,\ldots,n-1\right\}\cap\Z^I.
\end{equation*}

Now set $X^k_I\coloneq\{\mathbf{0},\dvec^n\}$ and let $V^k\coloneq V^{k,\mathbf{s}_I}$.  By Equation (\ref{eqn:faces}) again, we know that $D^k(\mathbf{s}_I)=X^k_I$.  As $\mathbf{d}^n\in\mathcal{D}$, which is a \dtvs, we know that $\mathbf{d}^n$ is a primitive integer vector, from which it follows that $\conv(X^k_I)\cap\Z^I=X^k_I$.  Thus, by Lemma \ref{lem:linConv}, we know that $V^k$ is concave.

Observe that 
\[
D^j(\mathbf{s}_I)+D^k(\mathbf{s}_I)=\left\{\rgivend\sum_{\ell=1}^n \alpha_\ell \dvec^\ell \rgiv 0 \le \alpha_\ell \le 1 \text{ for }\ell=1,\ldots,n-1 \text{ and }\alpha_n\in\{0,1\}\right\} \cap \Z^I.
\]
So $\conv(D^j(\mathbf{s}_I)+D^k(\mathbf{s}_I))=\mathcal{P}$.

Let the total endowment $\tot$ be $\mathbf{z}$.  Set $\mathbf{w}^j_I\coloneq\mathbf{z}\in X^j_I$, and set $\mathbf{w}^k_I\coloneq\mathbf{0}\in X^k_I$.  This is clearly an endowment allocation.  Since $\tot\in\mathcal{P}=\conv(D^j(\mathbf{s}_I)+D^k(\mathbf{s}_I))$, we see  $\mathbf{s}_I$ is a pseudo-equilibrium price vector.  But, since $\linset$ is linearly independent and since $0<\beta_n<1$, we know $\tot\notin D^j(\mathbf{s}_I)+D^k(\mathbf{s}_I)$, so there is no competitive equilibrium at $\mathbf{s}_I$.  It follows, by the contrapositive of Fact~\ref{fac:pseudoEquil}, that no \ce\ can exist.

\subsection{Proof of Fact~\ref{fac:unimodMaxDomain}}
We will use the following claim.

\begin{claim}\label{cla:maximal}
If $\mathcal{D}$ is a maximal unimodular \dtvs\ then $\mathcal{D}$ spans $\mathbb{R}^I$.
\end{claim}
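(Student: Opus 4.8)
The plan is to argue by contradiction: assuming $\mathcal{D}$ is unimodular but fails to span $\mathbb{R}^I$, I would build a strictly larger unimodular \dtvs, contradicting maximality. Let $V = \operatorname{span}(\mathcal{D})$ and suppose $\dim V = r < |I|$. I would choose a maximal linearly independent subset $\{\dvec^1,\ldots,\dvec^r\} \subseteq \mathcal{D}$, which then spans $V$. By the definition of unimodularity, this subset extends to a basis $\{\dvec^1,\ldots,\dvec^r,\mathbf{b}^{r+1},\ldots,\mathbf{b}^{|I|}\}$ of $\mathbb{R}^I$ consisting of integer vectors whose defining matrix has determinant $\pm 1$ --- that is, a $\Z$-basis of $\Z^I$. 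Since $\mathbf{b}^{r+1}$ belongs to such a basis it is primitive, and since $\mathbf{b}^{r+1} \notin V$ we have $\mathbf{b}^{r+1} \notin \mathcal{D}$. The candidate enlargement is $\mathcal{D}' = \mathcal{D} \cup \{\mathbf{b}^{r+1}, -\mathbf{b}^{r+1}\}$, a \dtvs\ strictly containing $\mathcal{D}$.

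The key step is to verify that $\mathcal{D}'$ is still unimodular, for which I would invoke the characterization in Fact~\ref{fac:unimodSet}: it suffices to show no linearly independent subset $\{\mathbf{c}^1,\ldots,\mathbf{c}^s\} \subseteq \mathcal{D}'$ admits a lattice point $\mathbf{z} = \sum_\ell \alpha_\ell \mathbf{c}^\ell \in \Z^I$ with every $\alpha_\ell \in (0,1)$. A violating subset lying entirely in $\mathcal{D}$ is impossible by the unimodularity of $\mathcal{D}$, so the subset must contain exactly one of $\pm\mathbf{b}^{r+1}$ (not both, as they are dependent). Writing that vector as $\mathbf{c}^s = \epsilon\mathbf{b}^{r+1}$ with $\epsilon \in \{\pm 1\}$ and noting $\mathbf{c}^1,\ldots,\mathbf{c}^{s-1} \in \mathcal{D} \subseteq V$, I would expand $\mathbf{z}$ in the $\Z$-basis above. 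Because vectors of $V$ have vanishing $\mathbf{b}^{r+1}$-coordinate, the $\mathbf{b}^{r+1}$-coordinate of $\mathbf{z}$ equals $\epsilon\alpha_s$; integrality of $\mathbf{z}$ forces $\epsilon\alpha_s \in \Z$, which is impossible for $\alpha_s \in (0,1)$. This contradiction shows $\mathcal{D}'$ is unimodular.

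Since $\mathcal{D}' \supsetneq \mathcal{D}$, this contradicts the maximality of $\mathcal{D}$, and the claim follows. The main obstacle --- and the crux of the argument --- is the verification that appending $\mathbf{b}^{r+1}$ preserves unimodularity; routing this through Fact~\ref{fac:unimodSet} makes the absence of a strictly interior lattice point transparent via the single integrality-of-coordinate computation. The only other points requiring care are that $\mathbf{b}^{r+1}$ is genuinely primitive and lies outside $\mathcal{D}$, so that $\mathcal{D}'$ is a legitimate \dtvs\ properly containing $\mathcal{D}$; both are immediate from its membership in a determinant-$\pm 1$ integer basis together with $\mathbf{b}^{r+1}\notin V$.
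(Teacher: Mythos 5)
Your proposal is correct, and its skeleton matches the paper's proof: both arguments extend a maximal linearly independent subset of $\mathcal{D}$ to a determinant-$\pm 1$ integer basis of $\R^I$, adjoin new basis vectors to $\mathcal{D}$, and let maximality of $\mathcal{D}$ deliver the conclusion. Where you genuinely diverge is in the crux, namely verifying that the enlarged set is still unimodular. The paper adjoins the \emph{entire} complementary set $T$ at once and checks the determinant definition directly: any maximal linearly independent subset $L'$ of $\mathcal{D} \cup T \cup -T$ must consist of a basis of the span of $\mathcal{D}$ together with exactly one of $\pm\mathbf{t}$ for each $\mathbf{t} \in T$, and $L' \cap \mathcal{D}$ is related to the original maximal subset by a unimodular change of basis of the span of $\mathcal{D}$, whence $L'$ has determinant $\pm 1$. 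You adjoin a \emph{single} vector $\pm\mathbf{b}^{r+1}$ and instead invoke the lattice-point characterization of unimodularity (Fact~\ref{fac:unimodSet}): a violating subset must contain $\pm\mathbf{b}^{r+1}$ exactly once, and since your extended basis is a $\Z$-basis of $\Z^I$, the $\mathbf{b}^{r+1}$-coordinate of any integer point is an integer, which is incompatible with a coefficient in $(0,1)$. Your route is more economical and more transparent: the whole verification reduces to one integrality computation, and it sidesteps the paper's terse appeal to a ``unimodular change of basis of the span of $\mathcal{D}$,'' which implicitly relies on the observation that maximal independent subsets of a unimodular set are $\Z$-bases of the sublattice $\operatorname{span}(\mathcal{D}) \cap \Z^I$. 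What the paper's argument buys in exchange is the stronger intermediate statement that the full enlargement $\mathcal{D} \cup T \cup -T$ is unimodular; but for the claim itself a single new vector, as you use, is all that is needed.
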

\begin{proof}
Let $\linset \subseteq \mathcal{D}$ be a maximal, linearly independent set.
As $\mathcal{D}$ is unimodular, there exists a set $T$ of \intvecs\ such that $\linset \cap T = \emptyset$ and $\linset \cup T$ is a basis of $\mathbb{R}^I$ with determinant $\pm 1.$
We claim that $\mathcal{D}_0 = \mathcal{D} \cup T \cup -T$ is unimodular.
To see why, let $L' \subseteq \mathcal{D} \cup T\cup -T$ be a maximal linearly independent set.
As $\mathcal{D}_0$ spans $\mathbb{R}^I$ by construction, $L'$ must span $\mathbb{R}^I$.
Due to the maximality of $\linset$, we must have that $|L' \cap (T \cup -T)| = |T|.$
It follows that $L' \cap \mathcal{D}$ is a basis for the span of $\mathcal{D}$.
As $\mathcal{D}$ is unimodular, $L' \cap \mathcal{D}$ must be the image of $\linset$ under a unimodular change of basis of the span of $\mathcal{D}$.
It follows that $L'$ is a basis for $\mathbb{R}^I$ with determinant $\pm 1$---so $\mathcal{D}_0$ is unimodular.
Due to the maximality of $\mathcal{D},$ we must have that $T = \emptyset$, and hence $\mathcal{D}$ must span $\mathbb{R}^I$.
As $\mathcal{D}$ is unimodular, it follows that $\mathcal{D}$ must integrally span $\mathbb{Z}^I$.
\end{proof}

We next divide into cases based on whether $\valFn{j}$ is non-concave and of demand type $\mathcal{D}$, or not of demand type $\mathcal{D}$, to construct concave valuations $\valFn{k}$ of demand type $\mathcal{D}$ for agents $k \not= j$ and a total endowment for which no \ce\ exists.

\begin{casework}
\item $\valFn{j}$ is not concave but is of demand type $\mathcal{D}$.  By Fact~\ref{fac:conc}, there exists a price vector $\p$ such that $\dQL{j}{\p} \not= \conv\left(\dQL{j}{\p}\right) \cap \Z^I.$
Let $\bunpr \in \dQL{j}{\p}$ be an extreme point of $\conv\left(\dQL{j}{\p}\right)$, so there exists $\mathbf{s}_I\in\R^n$ satisfying 
\begin{equation}\label{eqn:xprime}
\{\bunpr\}=\argmax_{\mathbf{y}_I\in \dQL{j}{\p}}\mathbf{s}_I\cdot\mathbf{y}_I.
\end{equation}
Let 
$\bundpr \in \left(\conv\left(\dQL{j}{\p}\right) \cap \Z^I\right) \ssm \dQL{j}{\p}$
be arbitrary.  

Let $k \in J \ssm \{j\}$ be arbitrary.  Let $X^k_I=(\conv(D^j(\p))\cap\Z^I)+\{-\bunpr\}$.
Since $V^j$ is of demand type $\mathcal{D}$, it follows by Fact \ref{fac:demTypeCplx} that every edge of $\conv(D^j(\p))$ is a multiple of a vector in $\mathcal{D}$, and so the same is true of $\conv(X^k_I)$.   Moreover, by definition of $X^k_I$ it is clear that $\conv(X^k_I)\cap\Z^I=X^k_I$.

Fix $\mathbf{t}_I\coloneq\p+\mathbf{s}_I$ and let $V^k\coloneq V^{\mathbf{t}_I,k}$, which is concave by Lemma \ref{lem:linConv} and of demand type $\mathcal{D}$ by Corollary \ref{cor:edges}.   By Equation
(\ref{eqn:faces}) we know $D^k(\p)=\argmax_{\mathbf{x}_I\in X^k_I}\mathbf{s}_I\cdot\mathbf{x}_I$, and so by Equation (\ref{eqn:xprime}) and the definition of $X^k_I$, it follows that $D^k(\p)=\{\bunpr-\bunpr\}=\{\mathbf{0}\}$.

Let the total endowment $\tot$ be $\mathbf{x}''_I$, let $\bundowj \coloneq \bunpr\in X^j_I$, and let $\mathbf{w}^k_I\coloneq\bundpr-\bunpr\in X^k_I$. For agents $j' \in J \ssm \{j,k\}$, let $\Feas{j'} = \{\zero\}$, let $\valFn{j'}$ be arbitrary, and let $\bundowag{j'} = \zero.$  Thus $(\mathbf{w}^{j'}_I)_{j'\in J}$ is an endowment allocation.  Moreover, 
\[
\sum_{j'\in J}D^{j'}(\p)=D^j(\p).
\]
Thus $\tot=\bundpr\in \conv\left(\sum_{j'\in J}D^{j'}(\p)\right)$ and so $\p$ is a pseudo-equilibrium price vector.  But $\tot=\bundpr\notin \sum_{j'\in J}D^{j'}(\p)=D^j(\p)$ by definition of $\bundpr$, and so there is no \ce\ at $\p$.  Therefore, by the contrapositive of Fact~\ref{fac:pseudoEquil}, no \ce\ can exist.

\item $V^j$ is not of demand type $\mathcal{D}$.  By Fact \ref{fac:demTypeCplx} there exists a primitive integer vector $\norm\notin\mathcal{D}$ and a price vector $\p\in\R^n$ such that $D^j(\p)\subseteq\{\bunpr+\alpha\norm\ | \ \alpha=0,\ldots,r\}$ where $r\geq 1$ and $\bunpr,\bunpr+r\norm\in D^j(\p)$.   

As $\mathcal{D}$ is not strictly contained in any unimodular \dtvs, and as $\norm\notin\mathcal{D}$, the set $\mathcal{D}\cup\{\norm\}$ is not unimodular.  Let $\{\mathbf{d}^1,\ldots,\mathbf{d}^m,\norm\}$ be a minimal non-unimodular subset of $\mathcal{D}\cup\{\norm\}$.  Thus the set $\{\mathbf{d}^1,\ldots,\mathbf{d}^m,\norm\}$ is linearly independent and, by Fact \ref{fac:unimodSet}, there exists 
\begin{equation}\label{eqn:zinP}
\mathbf{z}=\beta_0\norm+\sum_{\ell=1}^m\beta_\ell \mathbf{d}^\ell\in \cap\Z^I\text{ with }0<\beta_\ell<1\text{ for }\ell=0,\ldots,m.
\end{equation}
  
By Claim \ref{cla:maximal}, we know that $\mathcal{D}$ spans $\R^I$.  Since $\mathcal{D}$ is also  unimodular, by Fact \ref{fac:unimodSet} there exist $\mathbf{d}^{m+1},\ldots,\mathbf{d}^n\in\mathcal{D}$ for some $n\geq m$ such that $\mathbf{d}^1,\ldots,\mathbf{d}^n$ are linearly independent and
\[
\mathbf{z}=\sum_{\ell=1}^n\gamma_\ell \mathbf{d}^\ell \text{ with }\gamma_\ell \in\Z \text{ for all }\ell=1,\ldots,n.
\]
Moreover, by replacing $\mathbf{d}^{m+1},\ldots,\mathbf{d}^n$ with their negations if necessary, we can assume that $\gamma_{m+1},\ldots,\gamma_n \geq 0$.  

Let $k \in J \ssm \{j\}$ be arbitrary. Let $X^k_I=Y^k_I+Z^k_I$, where
\begin{align*}
Y^k_I	&=\left\{\rgivend\sum_{\ell=1}^m\alpha_\ell\mathbf{d}^\ell\rgiv -|\gamma_\ell|\leq\alpha_\ell\leq |\gamma_\ell|+1 \text{ for }\ell=1,\ldots,m\right\}\cap\Z^I\\
Z^k_I	&=\left\{\rgivend\sum_{\ell=m+1}^n\alpha_\ell\mathbf{d}^\ell\rgiv 0\leq\alpha_\ell\leq \gamma_\ell \text{ for }\ell=m+1,\ldots,n\right\}\cap\Z^I.
\end{align*}
Observe that $\mathbf{z}\in X^k_I$.  Moreover, $\conv(X^k_I)\cap\Z^I=X^k_I$, as we may see by writing $X^k_I\coloneq\{\sum_{\ell=1}^n\alpha_\ell \mathbf{d}^\ell|c_\ell\leq\alpha_\ell\leq d_l \text{ for }\ell=1,\ldots,n\}\cap\Z^I$ for suitably chosen $c_\ell$ and $d_\ell$.

Choose $\mathbf{s}_I$ such that $\mathbf{s}_I\cdot\mathbf{d}^\ell=0$ for $\ell=1,\ldots,m$ and $\mathbf{s}_I\cdot\mathbf{d}^\ell<0$ for $\ell=m+1,\ldots,n$.  (Such an $\mathbf{s}_I$  exists because $\mathbf{d}^1,\ldots,\mathbf{d}^n$ are linearly independent.)  Set $\mathbf{t}_I\coloneq\p+\mathbf{s}_I$ and set $V^k\coloneq V^{k,\mathbf{t}_I}$.  Then $V^k$ is concave by Lemma \ref{lem:linConv}.  By Equation (\ref{eqn:faces}) and the definition of $X^k_I$, we deduce that
\[
D^k(\p)=\argmax_{\mathbf{x}_I\in X^k_I}\mathbf{s}_I\cdot\mathbf{x}_I
=Y^k_I.
\]
Moreover, the edges of $X^k_I$ are parallel to $\mathbf{d}^1,\ldots,\mathbf{d}^n$ and so by Corollary \ref{cor:edges}, the valuation $V^k$ is of demand type $\mathcal{D}$.  

For agents $j' \in J \ssm \{j,k\},$ let $\Feas{j'} = \{\zero\}$, let $\valFn{j'}$ be arbitrary, and let $\bundowag{j'} = \mathbf{0}$.
Let the total endowment $\mathbf{y}_I$ be $\mathbf{x}'_I + \mathbf{z}$.  Set $\mathbf{w}^j_I \coloneq \mathbf{x}'_I \in X^j_I$ and $\mathbf{w}^k_I = \mathbf{z} \in X^k_I,$ so $(\mathbf{w}^{j'}_I)_{j' \in J}$ is \andowalloc.

Now see that
\begin{equation}\label{eqn:demSet}
\sum_{j'\in J}D^j(\p)\subseteq\{\bunpr+\alpha\norm\ | \ \alpha=0,\ldots,r\}+Y^k_I
\end{equation}
while, since $\bunpr+r\norm\in D^j(\p)$, we have the equality
\[
\conv\left(\sum_{j'\in J}D^j(\p)\right)=\{\bunpr+\alpha\norm\ | \ 0\leq\alpha\leq r\}+\conv(Y^k_I)
\]
Recalling Equation (\ref{eqn:zinP}), we conclude that $\tot = \bunpr+\mathbf{z}\in \conv\left(\sum_{j'\in J}D^j(\p)\right)$, so $\p$ is a pseudo-equilibrium price vector.  But, since $0<\beta_0<1$ in Equation (\ref{eqn:zinP}) and since the set $\{\mathbf{d}^1,\ldots,\mathbf{d}^m,\mathbf{g}\}$ is linearly independent, we conclude from Equation (\ref{eqn:demSet}) that $\tot = \bunpr+\mathbf{z}\notin \sum_{j'\in J}D^j(\p)$, so there is no \ce\ at $\p$.
%
Therefore, by the contrapositive of Fact~\ref{fac:pseudoEquil}, no \ce\ can exist.
\end{casework}
As the cases exhaust all possibilities, we have proven the fact.

\end{document}